\newcommand{\QMA}{\mathsf{QMA}}
\newcommand{\Q}{\mathsf{Q}}
\newcommand{\D}{\mathsf{D}}
\newcommand{\R}{\mathsf{R}}
\newcommand{\EQ}{\mathsf{EQ}}
\newcommand{\poly}{\mathrm{poly}}
\newcommand{\E}{\mathbb{E}}
\newcommand{\email}[1]{\href{mailto:#1}{\texttt{#1}}}
\Crefname{lemma}{Lemma}{Lemmas}
\Crefname{fact}{Fact}{Facts}
\Crefname{theorem}{Theorem}{Theorems}
\Crefname{corollary}{Corollary}{Corollaries}
\Crefname{claim}{Claim}{Claims}
\Crefname{example}{Example}{Examples}
\Crefname{problem}{Problem}{Problems}
\Crefname{definition}{Definition}{Definitions}
\Crefname{notation}{Notation}{Notations}
\Crefname{assumption}{Assumption}{Assumptions}
\Crefname{subsection}{Section}{Sections}
\Crefname{section}{Section}{Sections}
\newtheorem{theorem}{Theorem}
\newtheorem{definition}{Definition}
\newtheorem{lemma}[theorem]{Lemma}
\newtheorem{fact}{Fact}
\newtheorem{corollary}[theorem]{Corollary}
\newtheorem{conjecture}{Conjecture}
\newtheorem{proposition}[theorem]{Proposition}
\theoremstyle{remark}
\newtheorem{remark}{Remark}
\title{Does there exist a quantum fingerprinting protocol \\ without coherent measurements?}
\author{Atsuya Hasegawa\thanks{Graduate School of Mathematics, Nagoya University. \ Email: \email{atsuya.hasegawa@math.nagoya-u.ac.jp}} \and Srijita Kundu\thanks{Institute for Quantum Computing and Department of Combinatorics and Optimization, University of Waterloo. \ Email: \email{srijita.kundu@uwaterloo.ca}} \and Fran{\c{c}}ois Le Gall\thanks{Graduate School of Mathematics, Nagoya University. \ Email: \email{legall@math.nagoya-u.ac.jp}} \and Harumichi Nishimura\thanks{Graduate School of Informatics, Nagoya University. \ Email: \email{hnishimura@i.nagoya-u.ac.jp}} \and Qisheng Wang\thanks{School of Informatics, University of Edinburgh. \ Email: \email{QishengWang1994@gmail.com}}}
\date{}
\begin{document}

\maketitle

\begin{abstract}
Buhrman, Cleve, Watrous, and de Wolf (PRL 2001) discovered the quantum fingerprinting protocol, which is the quantum SMP protocol with $O(\log n)$ qubits communication for the equality problem. In the protocol, Alice and Bob create some quantum fingerprints of their inputs, and the referee conducts the SWAP tests for the quantum fingerprints. Since $\Omega(\sqrt{n})$ bits communication is required with the classical SMP scheme for the equality problem first shown by Newman and Szegedy (STOC 1996), there exists an exponential quantum advantage in the amount of communication.

In this paper, we consider a setting in which the referee can do only incoherent measurements rather than coherent measurements including the SWAP tests. We first show that, in the case of one-way LOCC measurements, $\Omega(\sqrt{n})$ qubits communication is required. To prove the result, we derive a new method to replace quantum messages by classical messages and consider a reduction to the optimal lower bound in the hybrid SMP model where one message is quantum and the other is classical, which was first shown by Klauck and Podder (MFCS 2014). Our method uses the result of Oszmaniec, Guerini, Wittek, and Ac{\'i}n (PRL 2017), who showed that general POVM measurements can be simulated by randomized projective measurements with small ancilla qubits, and Newman's theorem.

We further investigate the setting of quantum SMP protocols with two-way LOCC measurements, and derive a lower bound against some restricted two-way LOCC measurements. To prove it, we revisit the technique to replace quantum messages by classical deterministic messages introduced by Aaronson (ToC 2005) and generalized by Gavinsky, Regev, and de Wolf (CJTCS 2008), and show that, using the deterministic message, the referee can simulate the two-way LOCC measurements.
\end{abstract}

\clearpage

\tableofcontents

\section{Introduction}

\subsection{Background}

In 1979, the seminal paper by Yao \cite{Yao79} introduced the concept of communication complexity. The big picture behind this concept is, when inputs are distributed to some local processors and they coordinate to compute a function depending on distributed inputs, to identify the minimum amount of communication between them (rather than considering the computational resource in each local party). Yao introduced several communication models, including two-party communication protocols, and one important communication model is the Simultaneous Message Passing (SMP) model (referred to as $``1 \rightarrow 3 \leftarrow 2"$ in the paper). In the SMP model, Alice has a part of an input $x$, Bob has the other part of the input $y$, and a third party, whom we usually call the referee, calculates some Boolean function $F(x,y)$ by receiving two messages from Alice and Bob simultaneously. The complexity in the SMP protocols is the sum of the length of the two messages. One fundamental function is the equality function ($\EQ_n$ where $n$ stands for the size of inputs) to check if the inputs are the same or not (named ``the identification function" in the paper), and Yao asked what the complexity in the SMP model is for $\EQ_n$. There is a trivial upper bound of $O(n)$ for the complexity while it can be reduced much when Alice and Bob are allowed to share randomness by sending hashing values of their inputs. 
However, without shared randomness, the SMP model is a weak communication model, and the complexity was unknown.

In the 1990s, the question by Yao was solved. First, Ambainis \cite{Amb96} constructed an SMP protocol of cost $O(\sqrt{n})$ that exploits properties of good classical error correction codes. Subsequently, Newman and Szegedy \cite{NS96} proved a matching lower bound of $\Omega(\sqrt{n})$. Their results were simplified and generalized by Babai and Kimmel \cite{BK97}, who showed that the randomized and
deterministic complexities can be at most quadratically far apart for any function in this model. Let us denote by $\R^{||}(\EQ_n)$ the complexity in the classical (randomized) SMP model for $\EQ_n$, and the results are summarized as follows.

\begin{theorem}[\cite{Amb96,NS96,BK97}]
    $\R^{||}(\EQ_n) = \Theta (\sqrt{n})$.
\end{theorem}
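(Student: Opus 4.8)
\medskip
\noindent\emph{Proof proposal.} This statement packages together an upper bound $\R^{||}(\EQ_n) = O(\sqrt{n})$ (Ambainis) and a matching lower bound $\R^{||}(\EQ_n) = \Omega(\sqrt{n})$ (Newman--Szegedy, simplified by Babai--Kimmel), and the plan is to handle the two directions separately.

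For the upper bound I would use a good binary error-correcting code $E\colon \{0,1\}^n \to \{0,1\}^N$ with $N = O(n)$ and constant relative distance $\delta>0$ (for instance a Justesen code or a random linear code), and reshape its coordinate set $[N]$ as an almost-square grid $[a]\times[b]$ with $a,b = \Theta(\sqrt{n})$, so that $E(x)$ is read as a $\{0,1\}$-valued matrix. On input $x$, Alice picks $t = O(1/\delta)$ independent uniformly random row indices and sends them together with the corresponding rows of $E(x)$, at cost $t\,(b + \lceil\log a\rceil) = O(\sqrt{n})$ bits; symmetrically Bob sends $t$ uniformly random columns of $E(y)$. The referee reads, for each $\ell\in[t]$, the entry of $E(x)$ at the intersection of Alice's $\ell$-th row with Bob's $\ell$-th column, compares it against the corresponding entry of $E(y)$, and accepts iff all $t$ comparisons agree. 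Completeness is immediate; for soundness, if $x\neq y$ then $E(x)$ and $E(y)$ disagree on at least $\delta N \geq (\delta/2)\,ab$ cells, so each of the $t$ independent uniformly random cells witnesses a disagreement with probability $\geq \delta/2$, giving error at most $(1-\delta/2)^t \leq 1/3$ for a suitable constant $t = O(1/\delta)$. The only fussy point here is arranging the reshaping (and any padding) so that the grid stays nearly square without degrading the relative distance.

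For the lower bound I would route through the deterministic SMP complexity $\D^{||}(\EQ_n)$ in two steps. The first step is the Babai--Kimmel simulation: given any (private-coin) randomized SMP protocol for a Boolean function $F$ with communication $c = c_A + c_B$ and error $\leq 1/3$, write Alice's message as a distribution $p_x$ on $\{0,1\}^{c_A}$, Bob's as $q_y$ on $\{0,1\}^{c_B}$, and the referee's verdict as a matrix $T \in \{0,1\}^{2^{c_A}\times 2^{c_B}}$, so that the acceptance probability equals the bilinear form $p_x^{\top} T q_y$, which lies in $[2/3,1]$ when $F(x,y)=1$ and in $[0,1/3]$ when $F(x,y)=0$. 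By a Chernoff bound and a union bound over Bob's at most $2^{c_B}$ possible messages, for every $x$ there is a fixed multiset of $k = O(c_B/\varepsilon^2)$ samples from $p_x$ whose empirical row-averages approximate the vector $p_x^{\top} T$ coordinatewise to within $\varepsilon$; symmetrically, for every $y$ there is a fixed multiset of $k' = O(c_A/\varepsilon^2)$ samples from $q_y$. In the deterministic protocol Alice and Bob send these fixed multisets, at total cost $k c_A + k' c_B = O(c_A c_B) = O(c^2)$ bits for constant $\varepsilon$, and the referee outputs $1$ iff the empirical bilinear estimate is at least $1/2$; composing the two one-sided guarantees shows that this estimate is within $2\varepsilon$ of $p_x^{\top} T q_y$, so with $\varepsilon < 1/12$ the deterministic protocol computes $F$ correctly. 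This yields $\D^{||}(F) = O(\R^{||}(F)^2)$.

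The second step is to lower bound $\D^{||}(\EQ_n)$ directly: in a deterministic SMP protocol for $\EQ_n$ with Alice's message map $\alpha\colon\{0,1\}^n\to\{0,1\}^{c_A}$, the map $\alpha$ must be injective, since $\alpha(x)=\alpha(x')$ with $x\neq x'$ would force the referee to return the same value on the inputs $(x,x)$ and $(x',x)$, contradicting $\EQ_n(x,x)=1\neq 0=\EQ_n(x',x)$; hence $\alpha$ takes at least $2^n$ values and some message has length $\geq n$, so $\D^{||}(\EQ_n)\geq n$. Combining the two steps gives $\R^{||}(\EQ_n)^2 = \Omega\!\big(\D^{||}(\EQ_n)\big) = \Omega(n)$, i.e.\ $\R^{||}(\EQ_n) = \Omega(\sqrt{n})$, matching the upper bound. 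I expect the main obstacle to be the simulation step: one has to run the sampling on \emph{both} sides and verify that the two independent one-sided approximations genuinely compose into a two-sided approximation of the bilinear form $p_x^{\top} T q_y$ uniformly over all $x$ and $y$ — everything else is either standard coding theory or elementary bookkeeping.
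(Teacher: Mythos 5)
Your proposal is correct, and since the paper does not prove this statement itself but simply cites \cite{Amb96,NS96,BK97}, the right comparison is to those works: your upper bound is exactly Ambainis's code-plus-grid row/column protocol, and your lower bound is exactly the Babai--Kimmel route ($\D^{||}(F) = O(\R^{||}(F)^2)$ via sampling both message distributions and approximating the bilinear form, combined with $\D^{||}(\EQ_n) \geq n$ from injectivity of Alice's deterministic message map). The composition step you flag does go through as you describe, so there is no gap.
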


Yao \cite{Yao93} also introduced a quantum analog of the two-party communication complexity, and subsequent pioneering works \cite{Kre95,CB97,BvDHT99,CvDNT98,BCW98,Raz99,BdW01} investigated the power and limit of quantum communication complexity. In 2001, Buhrman, Cleve, Watrous, and de Wolf \cite{BCWdW01} considered a quantum analog of the SMP model where Alice and Bob can send some quantum messages and the referee conducts some quantum algorithm over two messages to calculate a function. They discovered the quantum fingerprinting protocol, which is the quantum SMP protocol with $O(\log n)$ qubits of communication for the equality problem. Since $\Omega(\sqrt{n})$ bits of communication is required for the classical SMP model, there exists an exponential quantum advantage. In the \cite{BCWdW01} protocol, Alice and Bob create some quantum fingerprints of their inputs as a superposition of a codeword of their inputs with some good classical error correction codes, and the referee conducts a SWAP test \cite{BBD+97,BCWdW01} on the quantum fingerprints. Note that this bound is tight because the quantum one-way communication complexity for $\EQ_n$ is $\Theta(\log n)$ and the one-way communication is a stronger model than the SMP model \cite{BdW01,BCWdW01}. Let us denote by $\Q^{||}(\EQ_n)$ the complexity in the quantum SMP model for $\EQ_n$, and the result is denoted as follows.

\begin{theorem}[\cite{BCWdW01}]
    $\Q^{||}(\EQ_n) = \Theta(\log n)$.
\end{theorem}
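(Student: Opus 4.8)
The plan is to prove the two directions separately: the upper bound $\Q^{||}(\EQ_n) = O(\log n)$ by exhibiting the quantum fingerprinting protocol, and the lower bound $\Q^{||}(\EQ_n) = \Omega(\log n)$ by reducing the SMP model to the one-way model and then bounding the one-way quantum communication complexity of $\EQ_n$ from below.

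\emph{Upper bound.} First I would fix a binary error-correcting code $C \colon \{0,1\}^n \to \{0,1\}^m$ with $m = O(n)$ such that any two distinct codewords both agree and disagree in at least $\gamma m$ of the $m$ coordinates, for some absolute constant $\gamma \in (0, 1/2)$; such codes exist (e.g.\ a Justesen code, after the standard trick of appending a block of $m$ zeros so that the number of disagreeing coordinates is also bounded away from $m$). On input $x$, Alice prepares the fingerprint $\ket{\psi_x} = \frac{1}{\sqrt m}\sum_{i=1}^m (-1)^{C(x)_i}\ket{i}$, which lives in an $m$-dimensional space and hence takes $\lceil \log m \rceil = O(\log n)$ qubits; Bob analogously prepares $\ket{\psi_y}$. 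The key identity is $\braket{\psi_x | \psi_y} = 1 - 2\,d_H(C(x),C(y))/m$, which is $1$ when $x = y$ and has absolute value at most $1 - 2\gamma < 1$ when $x \neq y$. The referee runs the SWAP test on the two received states, accepting (``equal'') with probability $\tfrac12\bigl(1 + |\braket{\psi_x|\psi_y}|^2\bigr)$, i.e.\ with probability $1$ if $x = y$ and at most $\tfrac12\bigl(1 + (1-2\gamma)^2\bigr) < 1$ otherwise. To push the (one-sided) error below $1/3$, Alice and Bob each send $k = O(1)$ independent copies of their fingerprint and the referee accepts iff all $k$ SWAP tests accept; this costs $2k\lceil\log m\rceil = O(\log n)$ qubits.

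\emph{Lower bound.} I would first note that the SMP model is no stronger than the one-way model: given any SMP protocol in which Alice sends $q_A$ qubits and Bob sends $q_B$ qubits, Alice can instead send her $q_A$-qubit message directly to Bob, who prepares his own message locally and then applies the referee's measurement, giving a one-way protocol with the same error; hence $\Q^{1}(\EQ_n) \le q_A + q_B$, so $\Q^{1}(\EQ_n) \le \Q^{||}(\EQ_n)$. It thus suffices to show $\Q^{1}(\EQ_n) = \Omega(\log n)$. Suppose a one-way protocol uses $q$ qubits: let $\rho_x$ be Alice's (possibly mixed) message on $\mathbb{C}^{2^q}$ for input $x$, and let $\{M_y, I - M_y\}$ be Bob's measurement on input $y$ with $M_y$ the ``equal'' outcome. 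Bounded error forces $\tr(M_{x'}\rho_{x'}) \ge 2/3$ and $\tr(M_{x'}\rho_{x}) \le 1/3$ for all $x \neq x'$, hence the trace distance $\tfrac12\|\rho_x - \rho_{x'}\|_1 \ge 1/3$. So the $2^n$ density operators $\{\rho_x\}$ are pairwise $\tfrac13$-separated in trace distance inside the set of density operators on $\mathbb{C}^{2^q}$, which has real dimension $4^q - 1$ and constant diameter; a standard volumetric covering bound then shows this set contains at most $2^{2^{O(q)}}$ such points, whence $n \le 2^{O(q)}$ and $q = \Omega(\log n)$.

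\emph{Main difficulty.} There is no deep obstacle, but two points deserve care. In the upper bound, a plain good code only guarantees disagreement $\ge \gamma m$, which keeps $\braket{\psi_x|\psi_y}$ away from $+1$ but not from $-1$ (where the SWAP test would also accept with certainty), so one really does need the padding step above, and should check it preserves $m = O(n)$ and constant relative distance. In the lower bound, passing from trace distance to a Euclidean packing costs a factor of $\poly(2^q)$ (e.g.\ $\|A\|_2 \ge \|A\|_1/\sqrt{2^q}$ from Cauchy--Schwarz), so the points are only $2^{-O(q)}$-separated in Frobenius norm; however, inside a bounded subset of $\mathbb{R}^{4^q - 1}$ this still caps their number at $2^{2^{O(q)}}$, so the conclusion $q = \Omega(\log n)$ is unaffected. (Alternatively, for the lower bound one may simply invoke the cited fact that the one-way quantum communication complexity of $\EQ_n$ is $\Theta(\log n)$.)
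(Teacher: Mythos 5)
Your proof is correct and follows essentially the same route as the source this paper cites: the fingerprinting-plus-SWAP-test protocol for the upper bound, and the observation that SMP is weaker than one-way communication for the lower bound (the paper itself only states this theorem with a citation and a brief sketch). The one substantive difference is in the fingerprint encoding: you use phase states $\frac{1}{\sqrt m}\sum_i(-1)^{C(x)_i}\ket{i}$, which indeed require the balancing/padding step you describe to keep $\braket{\psi_x|\psi_y}$ away from $-1$, whereas the paper's encoding $\frac{1}{\sqrt N}\sum_i\ket{i}\ket{E_i(x)}$ makes the inner product automatically nonnegative and sidesteps this issue; you also make the $\Omega(\log n)$ one-way bound self-contained via a trace-distance packing argument where the paper simply invokes the known $\Q^1(\EQ_n)=\Theta(\log n)$. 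Both of your refinements are sound, and your handling of the two delicate points (padding, and the $\sqrt{2^q}$ loss in passing from trace to Frobenius norm) is accurate.
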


Gavinsky, Regev, and de Wolf \cite{GRdW08} first considered a classical and quantum hybrid SMP protocol where one party can send a quantum message, while the other party can send only a classical message. They showed that $\Omega(\sqrt{n/\log n})$ qubits or bits communication is required for $\EQ_n$ in this model. Klauck and Podder \cite{KP14}, and Bottesch, Gavinsky, and Klauck \cite{BGK15} showed a tight lower bound to match the upper bound by Ambainis \cite{Amb96}. Let us denote by $\R \Q^{||}(\EQ_n)$ the complexity in the hybrid SMP model for the equality problem. Then, the tight lower bounds are described as follows.

\begin{theorem}[\cite{KP14,BGK15}]\label{thm:lower_bound_hybrid}
    $\R \Q^{||}(\EQ_n) = \Theta(\sqrt{n})$.
\end{theorem}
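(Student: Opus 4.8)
\emph{Upper bound.} A fully classical SMP protocol is a special case of the hybrid model --- let one party encode its classical message in computational-basis qubits and have the referee begin by measuring those qubits in the computational basis --- so $\R\Q^{||}(\EQ_n) \le \R^{||}(\EQ_n) = O(\sqrt n)$ by the Ambainis protocol~\cite{Amb96}. The content is the matching lower bound $\R\Q^{||}(\EQ_n) = \Omega(\sqrt n)$. By the symmetry of $\EQ_n$ we may assume the classical party (say Alice) sends $c$ bits with private randomness and the quantum party (Bob) sends $q$ qubits; we must show $c + q = \Omega(\sqrt n)$.

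\emph{Reduction to a metric statement.} After amplifying the success probability to, say, $9/10$ and absorbing all private randomness by convexity, the protocol is described by a fixed family of POVM elements $\{E_a\}_{a \in \{0,1\}^c}$ on $\mathbb{C}^{2^q}$ (on classical message $a$ the referee accepts Bob's state $\sigma$ with probability $\tr(E_a \sigma)$), density operators $\{\rho_y\}_{y \in \{0,1\}^n}$ on $\mathbb{C}^{2^q}$, and probability vectors $\{p_x\}_{x \in \{0,1\}^n} \subseteq \Delta_{2^c}$. Writing $w_y := \big(\tr(E_a \rho_y)\big)_{a \in \{0,1\}^c} \in [0,1]^{2^c}$, the acceptance probability on $(x,y)$ equals $\langle p_x, w_y \rangle$, so correctness gives $\langle p_x, w_x \rangle \ge 9/10$ and $\langle p_x, w_y \rangle \le 1/10$ for $x \ne y$. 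Since $\|p_y\|_1 = 1$, for $y \ne y'$ we get $\|w_y - w_{y'}\|_\infty \ge \langle p_y, w_y - w_{y'}\rangle \ge 4/5$. Thus the $N = 2^n$ points $w_y$ are pairwise $\tfrac45$-separated in $\ell_\infty$, whence
\[
2^n \;=\; N \;\le\; \big(\text{size of any } \tfrac15\text{-net of } W \text{ in } \ell_\infty\big), \qquad W := \big\{\big(\tr(E_a\rho)\big)_{a\in\{0,1\}^c} \;:\; \rho \text{ a density operator on } \mathbb{C}^{2^q}\big\}.
\]

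\emph{The crux: the net of $W$.} It remains to bound the constant-scale $\ell_\infty$-covering number of $W$ by $2^{\widetilde{O}(qc)}$; given this, $2^n \le 2^{\widetilde{O}(qc)}$ forces $qc = \widetilde{\Omega}(n)$ and hence $c + q \ge 2\sqrt{qc} = \widetilde{\Omega}(\sqrt n)$ (the reversed assignment of roles is identical), which is the bound up to polylogarithmic factors; removing these and obtaining exactly $\Omega(\sqrt n)$ is the sharper analysis of Klauck--Podder~\cite{KP14} and Bottesch--Gavinsky--Klauck~\cite{BGK15}. This covering bound is the real work, and naive estimates fail: $W$ is a convex body of affine dimension at most $\min(2^c,4^q)$ inside $[0,1]^{2^c}$, so a constant-scale net has size $2^{O(\min(2^c,4^q))}$, and an approximate-Carathéodory refinement (writing $W$ as the image of $\mathrm{conv}$ of a net of pure states) gives only $2^{c \cdot 2^{O(q)}}$ --- both merely reprove the trivial $c + q = \Omega(\log n)$ (trivial since one-way $\EQ_n$ already costs $\Theta(\log n)$), because the quantum parameter enters as $2^{\Theta(q)}$ rather than as $q$. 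Getting it to enter linearly is exactly the ``quantum message $\approx$ short classical message'' phenomenon of Aaronson and of Gavinsky--Regev--de Wolf~\cite{GRdW08}: a $q$-qubit state probed only by a fixed family of at most $2^c$ two-outcome measurements is determined up to constant error on \emph{all} of them by the outcomes of $\widetilde{O}(q)$ adaptively chosen such measurements --- an iterative ``measurement-elimination'' argument in which each round records one index in $\{0,1\}^c$ ($c$ bits) plus a coarse value of the corresponding acceptance probability ($O(1)$ bits), and where a dimension/entropy potential of size $\widetilde{O}(q)$ (rather than the metric entropy $2^{\Theta(q)}$) caps the number of informative rounds. This yields a net of $W$ of size $(2^c)^{\widetilde{O}(q)} = 2^{\widetilde{O}(qc)}$.

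\emph{Main obstacle.} The delicate point is precisely this net/covering bound --- equivalently, showing that the effective dimensionality of Bob's $q$-qubit message, relative to the $\le 2^c$ measurements it faces, is $\widetilde{O}(q)$ rather than $\mathrm{poly}(2^q)$ --- and making the potential-function argument go through cleanly (with no polylog loss, for the tight constant in the exponent) is where one should expect to spend the effort. One might instead try to first replace Bob's $q$-qubit message by a $\widetilde{O}(qc)$-bit classical message and then invoke the classical counting bound $N \le 2^{O(c_A c_B)}$ behind $\R^{||}(\EQ_n) = \Omega(\sqrt n)$ (from Newman--Szegedy~\cite{NS96} and Babai--Kimmel~\cite{BK97}), but composing the two losses yields only $N \le 2^{\widetilde{O}(qc^2)}$ and hence the weaker $c + q = \widetilde{\Omega}(n^{1/3})$; the tight bound requires carrying the classical and quantum parts through the argument together, as in~\cite{KP14,BGK15}.
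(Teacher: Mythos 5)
The paper does not actually prove this theorem: it is imported as a black box from Klauck--Podder and Bottesch--Gavinsky--Klauck, so there is no internal proof to compare your route against. Judged on its own terms, your reduction is sound: the upper bound is correctly inherited from the classical $O(\sqrt n)$ protocol, the passage to the vectors $w_y=(\tr(E_a\rho_y))_a$ with pairwise $\ell_\infty$-separation $4/5$ is correct, and the covering bound you invoke is exactly the Aaronson/Gavinsky--Regev--de Wolf replacement lemma (which appears in this paper as \cref{lem:replace}): a $q$-qubit state facing a fixed family of $2^c$ two-outcome measurements is pinned down to within constant $\ell_\infty$-error by a deterministic string of $O(qc\log q)$ bits, giving a net of size $2^{O(qc\log q)}$ and hence $qc\log q=\Omega(n)$.

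The genuine gap is the one you yourself flag: this chain yields $c+q=\Omega(\sqrt{n/\log n})$, not the stated $\Omega(\sqrt n)$. The $\log q$ loss is intrinsic to the replacement lemma (it comes from the $r=O(\delta^{-2}\log q)$ copies needed to make the Hoeffding/quantum-union-bound step work), so no amount of care in the potential argument as you have set it up removes it; the tight bound in \cite{KP14,BGK15} is obtained by genuinely different means (Klauck--Podder simulate the quantum message via a one-way argument with a sharper counting, and Bottesch--Gavinsky--Klauck analyze the hybrid model directly), and your proposal only points to those works rather than closing the gap. So what you have is a correct self-contained proof of the earlier Gavinsky--Regev--de Wolf bound $\R\Q^{||}(\EQ_n)=\Omega(\sqrt{n/\log n})$, with the tight $\Theta(\sqrt n)$ still resting on the citation --- which, to be fair, is also all the paper itself does; note only that the downstream statements (\cref{cor:EQ}, \cref{cor:quartic}) would weaken by the same $\sqrt{\log n}$ factor if one substituted your bound for the cited one.
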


\subsection{Our setting and result}

With our current technology in quantum computing, it is still hard to implement some {\em coherent measurements} over two quantum states 
(i.e., measurements done jointly over them) with high accuracy, and the SWAP test is one of the canonical coherent measurements. In this paper, we pose the following question: 
\newline
\newline
\centerline{\emph{Does there exist an efficient quantum SMP protocol for $\EQ_n$ without coherent measurements?}}
\newline

Here \emph{efficient} means $\poly (\log n)$-qubit cost which exhibits an exponential advantage over the classical SMP models. Let us denote by $\Q^{||,\mathbb{M}}(R)$ the complexity in the quantum SMP for a relation $R$ where the referee is restricted to performing measurements from a measurement class $\mathbb{M}$. See \cref{sec:measurement} for formal definitions of some measurement classes. Note that a function is a special case of a relation in which there is a unique solution for each input, and we also denote by $\Q^{||,\mathbb{M}}(F)$ the complexity for a function $F$. It is known that the SWAP test is a projection to the symmetric subspace and therefore the corresponding measurement operator is in a measurement class SEP\footnote{Other works may denote by $\mathrm{SEP}$ the set of $M$ such that $M$ and $I-M$ can be decomposed as separable operators, and by $\mathrm{SEP}_\mathrm{YES}$ the set of $M$ such that only $M$ has to satisfy the operator is separable. See \cref{sec:measurement} for formal definitions.} \cite{HM13}. Therefore, we can rewrite the result of \cite{BCWdW01} as follows.

\begin{theorem}[\cite{BCWdW01}]
    $\Q^{||,\mathrm{SEP}}(\EQ_n) = \Theta(\log n)$. 
\end{theorem}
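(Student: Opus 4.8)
The plan is to prove the two bounds $\Q^{||,\mathrm{SEP}}(\EQ_n)=O(\log n)$ and $\Q^{||,\mathrm{SEP}}(\EQ_n)=\Omega(\log n)$ separately. The upper bound amounts to running the \cite{BCWdW01} quantum fingerprinting protocol and then checking that the referee's SWAP test is a measurement in the class $\mathrm{SEP}$; the lower bound is immediate, since restricting the referee to a subclass of measurements cannot decrease the communication cost, and combined with the already-known bound $\Q^{||}(\EQ_n)=\Omega(\log n)$ this gives what we want.

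For the upper bound, fix a family of classical codes $E\colon\{0,1\}^n\to\{0,1\}^m$ with $m=O(n)$ such that distinct codewords agree in a fraction of coordinates bounded away from $0$ and from $1$; then the fingerprints $\ket{\phi_x}=m^{-1/2}\sum_{i=1}^m(-1)^{E(x)_i}\ket{i}$, which live on $\ceil{\log m}=O(\log n)$ qubits, satisfy $\braket{\phi_x|\phi_x}=1$ and $|\braket{\phi_x|\phi_y}|\le\delta$ for all $x\ne y$, where $\delta<1$ is a constant. In the protocol Alice sends $\ket{\phi_x}$, Bob sends $\ket{\phi_y}$, and the referee performs the SWAP test, i.e.\ the two-outcome projective measurement $\{\Pi_{\mathrm{sym}},\Pi_{\mathrm{anti}}\}$ with $\Pi_{\mathrm{sym}}=\tfrac12(I+\mathrm{SWAP})$, accepting on outcome $\Pi_{\mathrm{sym}}$; it accepts with probability $\tfrac12+\tfrac12|\braket{\phi_x|\phi_y}|^2$, hence with probability $1$ when $x=y$ and at most $\tfrac12+\tfrac{\delta^2}{2}<1$ when $x\ne y$. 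Sending $O(1)$ independent copies of each fingerprint and accepting only if all of the resulting $O(1)$ SWAP tests accept amplifies the error below $\tfrac13$ while keeping the communication $O(\log n)$.

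It remains to check that this referee strategy lies in $\mathrm{SEP}$. The accepting operator of a single SWAP test is $\Pi_{\mathrm{sym}}$, which is a separable operator across the Alice--Bob cut: the standard identity $\Pi_{\mathrm{sym}}=\binom{d+1}{2}\int (\ket{\psi}\!\bra{\psi})^{\otimes 2}\,d\psi$, the integral being over the Haar measure on unit vectors of the $d$-dimensional message space, exhibits it as a mixture of product operators. For the $k$-fold repetition the accepting operator is the product of the $k$ single-test operators $\Pi_{\mathrm{sym}}^{(A_iB_i)}$ over the $k$ fingerprint pairs, which, after regrouping Alice's registers on one side and Bob's on the other, is again a mixture of product operators and hence separable; so the referee measurement belongs to $\mathrm{SEP}$. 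This is exactly the observation of \cite{HM13} that the SWAP test is a $\mathrm{SEP}$ measurement, under the convention recorded in the footnote that only the accepting operator of the measurement need be separable. Therefore $\Q^{||,\mathrm{SEP}}(\EQ_n)=O(\log n)$.

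For the lower bound, any quantum SMP protocol whose referee uses only $\mathrm{SEP}$ measurements is in particular a quantum SMP protocol with an unrestricted referee, so $\Q^{||,\mathrm{SEP}}(\EQ_n)\ge\Q^{||}(\EQ_n)$; and as recalled in the introduction, $\Q^{||}(\EQ_n)=\Omega(\log n)$, because the quantum one-way communication complexity of $\EQ_n$ is $\Theta(\log n)$ \cite{BdW01,BCWdW01} and a one-way protocol simulates an SMP protocol at no extra cost, so the one-way lower bound transfers. Combining the two parts yields $\Q^{||,\mathrm{SEP}}(\EQ_n)=\Theta(\log n)$. There is no genuine obstacle here, as the statement is just a restatement of \cite{BCWdW01} in the language of measurement classes; the only point needing any care is the classical fact that the symmetric-subspace projector $\Pi_{\mathrm{sym}}$ (and its tensor powers, suitably regrouped) is a separable operator, which is what places the SWAP-test referee inside $\mathrm{SEP}$.
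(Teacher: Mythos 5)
Your proposal is correct and follows essentially the same route the paper takes for this (cited) theorem: the \cite{BCWdW01} fingerprinting protocol with the SWAP test, the observation from \cite{HM13} that the accepting operator (the symmetric-subspace projector, and its regrouped tensor powers) is separable so the referee's measurement lies in $\mathrm{SEP}$, and the matching $\Omega(\log n)$ lower bound inherited from the unrestricted quantum SMP model via the one-way complexity of $\EQ_n$. The only cosmetic difference is that you use phase-encoded fingerprints (requiring codeword agreement bounded away from $0$ and $1$) instead of the paper's $\ket{i}\ket{E_i(x)}$ encoding, which changes nothing substantive.
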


What is a measurement that is not coherent (or {\em incoherent}) one? Operationally, it is considered as the so-called {\em LOCC (Local Operations and Classical Communication)} measurements, which is well studied in the quantum information community (see, e.g., \cite{CLM+14}). This type of measurement over two states is a sequence of measurements over either of the states 
which can use the previous measurement results via classical communication between the states. 
Thus, our question can be described as follows:
\newline
\newline
\centerline{\emph{Does there exist an efficient quantum SMP protocol for $\EQ_n$ with LOCC measurements?}}
\newline

We first investigate the power of a quantum SMP model with a subclass of LOCC measurements, the quantum {\em one-way-LOCC} SMP model. In this model, the referee performs a measurement on one quantum message and then performs a measurement on the other quantum message based on the first measurement result. The final measurement outcome is expected to be a solution for each input. This is easier to implement than coherent measurements as well as general (two-way) LOCC measurements since we do not need any quantum memory for keeping the residue states after the measurements. Let us denote by $\Q^{||,\mathrm{LOCC}_1}(R)$ the message complexity in the model for $R$. Although it is a stronger model than the classical and quantum hybrid SMP model, we provide a new method to replace a quantum message by a classical message and prove the quantum advantage is small. Note that $\Q^{||,\mathrm{LOCC}_1}(R) \leq \R\Q^{||}(R)$ because a quantum one-way-LOCC SMP model is a stronger model than a hybrid SMP model.

\begin{theorem}[Informal version of \cref{thm:one-way_LOCC}]\label{thm:intro_one-way_LOCC}
    For any relation $R \subseteq \{0,1\}^n \times \{0,1\}^n \times \{0,1\}^*$,
    \[
        \R\Q^{||}(R) \leq 2 \Q^{||,\mathrm{LOCC}_1}(R) + O(\log n).
    \]
\end{theorem}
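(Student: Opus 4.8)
The plan is to start from an arbitrary quantum one‑way‑LOCC SMP protocol $\Pi$ for $R$ of cost $q=q_A+q_B$, in which Alice sends a $q_A$‑qubit state $\rho_x$, Bob sends a $q_B$‑qubit state $\sigma_y$, and the referee first applies a POVM $\mathcal{A}=\{M_a\}_a$ to $\rho_x$, obtaining an outcome $a$, and then applies a POVM $\mathcal{B}^a=\{N^a_b\}_b$ to $\sigma_y$, obtaining $b$, finally declaring some $g(a,b)\in\{0,1\}^*$. Any adaptive sequence of local measurements on Alice's register (whose post‑measurement residue is never used again in a one‑way‑LOCC protocol) collapses to a single POVM $\mathcal{A}$, and likewise Bob's side collapses to $\mathcal{B}^a$ conditioned on $a$, so this form is without loss of generality. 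The key observation is that the distribution of $a$ is $\tr(M_a\rho_x)$, which depends only on $\rho_x$ and hence only on $x$; therefore \emph{Alice can sample $a$ herself} and transmit it classically, after which the referee only needs the genuinely quantum state $\sigma_y$ to run the second stage. The only difficulty is that $a$ may range over too large a set to send cheaply.

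To control the description length I would invoke the result of Oszmaniec, Guerini, Wittek, and Ac{\'i}n to rewrite $\mathcal{A}$ as a classically randomized projective measurement: there is a distribution $\mu$ over an index $r$, a projective measurement $\Pi^{(r)}=\{\Pi^{(r)}_j\}_j$ for each $r$ acting on $\rho_x$ together with an ancilla of dimension at most $2^{q_A}$ (equivalently, on at most $2q_A$ qubits, the ancilla initialized to $|0\cdots0\rangle$), and a classical post‑processing map $f$, such that ``sample $r\sim\mu$, measure $\Pi^{(r)}$ on $\rho_x\otimes|0\cdots0\rangle\langle0\cdots0|$ to get $j$, output $f(r,j)$'' reproduces the statistics of $\mathcal{A}$. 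Since each $\Pi^{(r)}$ has at most $2^{2q_A}$ outcomes, $j$ is described by $2q_A$ bits. Now define the hybrid protocol $\Pi'$: Bob sends $\sigma_y$ unchanged ($q_B$ qubits); Alice, who knows $x$ and hence $\rho_x$, samples $r\sim\mu$ and then $j$ from $\tr\bigl(\Pi^{(r)}_j(\rho_x\otimes|0\cdots0\rangle\langle0\cdots0|)\bigr)$, and sends the classical string $(r,j)$; the referee sets $a=f(r,j)$, runs $\mathcal{B}^a$ on $\sigma_y$ to get $b$, and outputs $g(a,b)$ (any stochasticity in $f$ can be supplied by the referee's private coins). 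By construction the joint law of $(a,\sigma_y)$, and hence the output, is identical to that of $\Pi$, so $\Pi'$ solves $R$ with the same error, at cost $q_B$ qubits plus $2q_A+|r|$ bits.

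It remains to make $|r|=O(\log n)$, via a Newman‑type argument. The distribution $\mu$ may have huge support, but $\Pi'$ errs with probability at most $\epsilon$ on each of the at most $2^{2n}$ inputs $(x,y)$, i.e.\ for each such input the event ``the declared output lies outside $\{z:(x,y,z)\in R\}$'' has probability at most $\epsilon$ over $r$ (and the internal coins). Drawing $t=O(n/\delta^2)$ samples $r_1,\dots,r_t$ i.i.d.\ from $\mu$ and replacing $\mu$ by the uniform distribution on a suitable sub‑multiset of them, a Chernoff bound plus a union bound over inputs keeps the error below $\epsilon+\delta$ simultaneously on all inputs; choosing $\delta$ appropriately small (and reconciling the slight increase with the error convention used in \cref{thm:one-way_LOCC}) keeps $\Pi'$ a valid protocol for $R$. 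Since $r$ now lies in a set of size $\poly(n)$, $|r|=O(\log n)$, so the cost of $\Pi'$ is $2q_A+q_B+O(\log n)\le 2(q_A+q_B)+O(\log n)$. Taking $\Pi$ to be an optimal one‑way‑LOCC protocol gives $\R\Q^{||}(R)\le 2\,\Q^{||,\mathrm{LOCC}_1}(R)+O(\log n)$.

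The step I expect to be the main obstacle is the first reduction: invoking the Oszmaniec--Guerini--Wittek--Ac{\'i}n simulation in a form where the ancilla really is at most $q_A$ qubits (this is precisely what yields the coefficient $2$ on $q_A$) and where the post‑processing is something Alice, or the referee with private randomness, can actually perform. One also has to argue carefully that in a one‑way‑LOCC protocol the post‑measurement residue of Alice's register is irrelevant, so that only the classical outcome needs to be forwarded; and the final bookkeeping must verify that the additive $O(\log n)$ genuinely does not hide any dependence on $q_A$ or $q_B$, and that the Newman step's error loss is compatible with the stated (formal) error parameters. The Newman step itself is routine once the correctness criterion for relations is fixed.
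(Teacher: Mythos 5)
Your proposal is correct and follows essentially the same route as the paper: it invokes the Oszmaniec--Guerini--Wittek--Ac\'in simulation to turn $\mathrm{Ref}_A$'s POVM into a randomized projective measurement on $2q_A$ qubits (giving the factor $2$), has Alice perform that measurement herself and send the $2q_A$-bit outcome, and then removes the large shared randomness indexing the projective measurement via a Newman-type sampling argument costing $O(\log n)$ extra bits, exactly as in the paper's \cref{lem:newman} and the proof of \cref{thm:one-way_LOCC}.
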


Combined with \cref{thm:lower_bound_hybrid}, the above theorem shows that there exists no efficient protocol in the quantum one-way-LOCC SMP model.

\begin{corollary}[\cref{cor:EQ}]
    $\Q^{||,\mathrm{LOCC}_1}(\EQ_n) = \Omega(\sqrt{n})$.
\end{corollary}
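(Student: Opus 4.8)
The plan is to derive the corollary directly from \cref{thm:intro_one-way_LOCC} together with the known hybrid-SMP lower bound \cref{thm:lower_bound_hybrid}. First I would note that the equality function $\EQ_n$ is a special case of a relation $R \subseteq \{0,1\}^n \times \{0,1\}^n \times \{0,1\}^*$ — the one with a unique correct output on every input pair — so \cref{thm:intro_one-way_LOCC} applies with $R = \EQ_n$ and gives
\[
    \R\Q^{||}(\EQ_n) \leq 2\,\Q^{||,\mathrm{LOCC}_1}(\EQ_n) + O(\log n).
\]

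Next I would invoke \cref{thm:lower_bound_hybrid}, which states $\R\Q^{||}(\EQ_n) = \Theta(\sqrt{n})$; in particular there are a constant $c > 0$ and an $n_0$ with $\R\Q^{||}(\EQ_n) \geq c\sqrt{n}$ for all $n \geq n_0$. Substituting into the displayed inequality and rearranging yields
\[
    \Q^{||,\mathrm{LOCC}_1}(\EQ_n) \geq \tfrac{1}{2}\bigl(c\sqrt{n} - O(\log n)\bigr) = \Omega(\sqrt{n}),
\]
where the last step uses $\log n = o(\sqrt{n})$, so that both the additive $O(\log n)$ overhead and the constant factor $2$ are asymptotically harmless.

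There is no real obstacle in the corollary itself: once \cref{thm:intro_one-way_LOCC} is available, the statement is a one-line asymptotic manipulation. The only points I would double-check are bookkeeping ones — that the reduction of \cref{thm:intro_one-way_LOCC} is genuinely stated for relations rather than merely total Boolean functions (so that it legitimately covers $\EQ_n$), and that the $O(\log n)$ term there is additive rather than multiplicative, since a multiplicative $\log n$ would degrade the bound to $\Omega(\sqrt{n}/\log n)$. All of the actual difficulty is deferred to the proof of \cref{thm:one-way_LOCC}: the quantum-to-classical message replacement (via the Oszmaniec–Guerini–Wittek–Acín simulation of POVMs by randomized projective measurements with few ancilla qubits, combined with Newman's theorem) and the resulting reduction to the hybrid SMP lower bound of Klauck–Podder.
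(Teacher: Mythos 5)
Your proposal is correct and matches the paper's own argument: the paper likewise obtains the corollary by instantiating \cref{thm:one-way_LOCC} with $|X|=|Y|=2^n$ (making the additive overhead $O(\log n)$) and combining it with the hybrid SMP lower bound $\R\Q^{||}(\EQ_n)=\Omega(\sqrt{n})$ of \cref{thm:lower_bound_hybrid}. Your bookkeeping checks (the reduction is stated for relations, and the $O(\log n)$ term is additive) are exactly the points that make this a one-line rearrangement, as in the paper.
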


Moreover, we can describe a more general result for Boolean functions. For any Boolean function $F:\{0,1\}^n \times \{0,1\}^n \rightarrow \{0,1\}$, it is known that $\R^{||}(F) \leq \D^{||}(F)^2$ \cite{BK97} ($\D^{||}$ denotes the complexity in the deterministic SMP model), and $\R\Q^{||}(F) \leq \R^{||}(F)^2$ \cite{GRdW08}. Combining with \cref{thm:intro_one-way_LOCC}, we can claim that, for any Boolean function which is hard for classical deterministic SMP models, the advantage of quantum one-way-LOCC SMP models over classical deterministic SMP models is at most quartic.

\begin{corollary}[\cref{cor:quartic}]
    For any Boolean function $F:\{0,1\}^n \times \{0,1\}^n \rightarrow \{0,1\}$, if $\D^{||}(F) = \Omega(n^c)$ for a constant $c > 0$, then $\Q^{||,\mathrm{LOCC_1}} (F) = \Omega(n^{\frac{c}{4}})$.
\end{corollary}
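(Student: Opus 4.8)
The plan is to derive the bound purely by composing \cref{thm:intro_one-way_LOCC} with the two quadratic simulation results quoted just before the statement, applied in sequence starting from the deterministic lower bound. Concretely, suppose $\D^{||}(F) = \Omega(n^c)$. The first step is the Babai--Kimmel quadratic relation \cite{BK97}, which in the direction we need states that a randomized SMP protocol can be at most quadratically cheaper than a deterministic one, i.e.\ $\D^{||}(F) = O(\R^{||}(F)^2)$; rearranging gives $\R^{||}(F) = \Omega\bigl(\sqrt{\D^{||}(F)}\bigr) = \Omega(n^{c/2})$. The second step is the analogous relation of Gavinsky--Regev--de Wolf \cite{GRdW08}, $\R^{||}(F) = O(\R\Q^{||}(F)^2)$, which in the same way yields $\R\Q^{||}(F) = \Omega\bigl(\sqrt{\R^{||}(F)}\bigr) = \Omega(n^{c/4})$.

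The third and final step is to apply \cref{thm:intro_one-way_LOCC} with $R = F$, using that a Boolean function is a special case of a relation: from $\R\Q^{||}(F) \le 2\,\Q^{||,\mathrm{LOCC}_1}(F) + O(\log n)$ we obtain $\Q^{||,\mathrm{LOCC}_1}(F) \ge \tfrac{1}{2}\bigl(\R\Q^{||}(F) - O(\log n)\bigr)$. Since $c>0$ is a constant, $n^{c/4}$ dominates $\log n$, so the additive $O(\log n)$ term is swallowed by the $\Omega(\cdot)$, and we conclude $\Q^{||,\mathrm{LOCC}_1}(F) = \Omega(n^{c/4})$, as claimed.

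There is no genuinely hard step here: the entire content is already carried by \cref{thm:intro_one-way_LOCC}, and the corollary is a routine chaining of known quadratic relations with one application of the main theorem. The only points that deserve attention are to invoke the relations of \cite{BK97,GRdW08} in the non-trivial direction (``a cheaper model costs at least the square root of a more powerful one'', not the vacuous converse), to keep track that these simulations are established for total Boolean functions rather than arbitrary relations -- hence the restriction to Boolean $F$ -- and to check the trivial fact that the $\log n$ additive slack is asymptotically negligible against $n^{c/4}$. One may additionally note that $\D^{||}(F) = O(n)$ always holds, so the hypothesis is only meaningful for $c \le 1$, though this is not used.
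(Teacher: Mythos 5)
Your proposal is correct and follows essentially the same route as the paper: chaining the Babai--Kimmel and Gavinsky--Regev--de Wolf quadratic simulations (in the non-trivial direction, giving $\R\Q^{||}(F) = \Omega(n^{c/4})$) with \cref{thm:one-way_LOCC} and absorbing the additive $O(\log n)$ term. The paper states this only as a one-line combination of those ingredients, so your write-up is just a more explicit version of the same argument.
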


The one-way LOCC measurement is a good intermediate setting to analyze because the direction of the information flow in the measurement is indeed one-way. In the more general setting of two-way LOCC measurements, the measurement results go back and forth between two parties, which gives a lot of freedom to do local measurements, and makes the analysis difficult to obtain lower bounds. For the complexity class $\QMA(2)$ \cite{KMY03,HM13}, which is an analogue of $\QMA$ but the quantum polynomial-time verifier receives two separate quantum states for certification, some results have been obtained in a similar direction to investigate what problems the verifier can verify when the measurement ability is restricted. Although the class $\QMA^{\mathbb{M}}(2)$ under the measurement restriction collapses to $\QMA$ when $\mathbb{M}$ is the class of one-way LOCC measurements \cite{BCY11,LS15} (namely, the restriction of one-way LOCC measurements is severe for $\QMA(2)$), to our knowledge, it remains unknown whether it holds or not when $\mathbb{M}$ is the class of LOCC measurements (see \cref{sec:related} for more detailed explanations) for more than 10 years. This situation demonstrates the difficulty of obtaining lower bounds involving two-way LOCC measurements.

In this very challenging setting, we succeed in proving a lower bound in quantum SMP protocols with some restricted two-way LOCC measurements. Here is our result for $\EQ_n$.

\begin{theorem}[Informal version of \cref{thm:two-way_LOCC}]
    Suppose that there exists a quantum two-way-LOCC SMP protocol for $\EQ_n$ whose measurement outcomes are $2$ values and the number of rounds is smaller than $k\log_2 n$ where $k$ is a constant such that $0<k<\frac{2}{3}$. Then, the message size is $n^{\Omega(1)}$.
\end{theorem}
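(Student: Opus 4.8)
The plan is to adapt the deterministic-message technique of Aaronson, as generalized by Gavinsky--Regev--de Wolf, to the two-way LOCC setting, and to argue that with bounded outcomes and few rounds the referee's entire two-way LOCC procedure collapses into something a hybrid SMP protocol can simulate. Concretely, suppose Alice sends an $m$-qubit state $\ket{\psi_x}$, Bob sends an $m$-qubit state $\ket{\phi_y}$, and the referee runs a two-way LOCC measurement with $r < k\log_2 n$ rounds and $2$-valued final outcome. First I would fix Alice's side and view Bob's message register together with the referee's accumulating classical transcript: because the referee's measurement is two-way LOCC, the only information crossing between the two registers at each round is a classical symbol from a bounded alphabet, and over $r$ rounds the full transcript takes at most $C^{r}$ values for some constant $C$ depending on the (constant-size) outcome sets; with $r < k\log_2 n$ and $k < 2/3$ this is $n^{o(1)}$, in particular much less than any $n^{\Omega(1)}$ lower bound we aim to contradict.

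The heart of the argument is the message-replacement step. I would run the Aaronson--GRdW deterministic-message argument on \emph{Bob's} message: if $m$ is small, then for the hard distribution on $\EQ_n$ there is a short classical description (a list of the measurement statistics of $\ket{\phi_y}$ against a small set of test measurements, rounded to low precision) that determines the behaviour of Bob's register inside any low-complexity measurement up to small error. The subtlety is that in a two-way protocol the measurements applied to Bob's register are \emph{adaptive} — they depend on classical messages previously received from Alice's side. Here the bound on the number of rounds is essential: since the transcript has only $n^{o(1)}$ possible values, there are only $n^{o(1)}$ distinct adaptive measurement sequences that can ever be applied to Bob's register, so one can take the union of all of them as the fixed ``test set'' in the GRdW argument, and a classical description of size $\poly(m)\cdot n^{o(1)}$ still reproduces all of Bob's responses. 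Replacing Bob's quantum message by this classical string (and keeping Alice's message quantum) yields a hybrid SMP protocol for $\EQ_n$ of cost $m + \poly(m)\cdot n^{o(1)}$; by \cref{thm:lower_bound_hybrid} this must be $\Omega(\sqrt{n})$, forcing $m = n^{\Omega(1)}$.

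To make this precise I would proceed in the following order. (1) Formalize the two-way-LOCC SMP model and show the round bound implies the transcript alphabet blow-up is $n^{o(1)}$ when outcomes are $2$-valued. (2) Enumerate the $n^{o(1)}$ adaptive POVMs that can act on Bob's register and collect them into a single informationally-structured test family, as in the GRdW framework. (3) Apply the Aaronson/GRdW deterministic-message lemma to produce, for each $y$, a classical string $d_y$ of length $\poly(m,\log n, 1/\varepsilon)\cdot n^{o(1)}$ such that the referee, given $d_y$ and $\ket{\psi_x}$, can simulate the outcome distribution of the original protocol up to error $\varepsilon$ — crucially the referee can now run ``Bob's side'' of the LOCC classically, because $d_y$ already encodes the responses to every possible adaptive query. (4) Observe the resulting protocol is a legal hybrid SMP protocol for $\EQ_n$ and invoke \cref{thm:lower_bound_hybrid}. (5) Solve the resulting inequality for $m$.

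The main obstacle I expect is Step (3) in the adaptive, two-sided setting: the classical Aaronson--GRdW argument fixes the measurement family \emph{before} seeing the message, whereas here the measurement on Bob's register is chosen interactively by a process that also depends on Alice's (quantum) message. The round bound is exactly what tames this — it caps the number of interaction patterns at $n^{o(1)}$ so that ``all possible adaptive measurements'' is still a small enough family — but I would need to be careful that the error does not compound across rounds and across the $n^{o(1)}$ branches (a union bound over branches plus a per-round triangle inequality should suffice, which is why $k < 2/3$ rather than $k < 1$: the constant must leave room for the $\poly(m)$ and precision factors in the description length while staying $o(\sqrt{n})$). The other place requiring care is that the simulation must remain a genuine \emph{simultaneous} protocol — Alice's message cannot depend on $y$ and Bob's classical string cannot depend on $x$ — which holds here because $d_y$ is computed from $\ket{\phi_y}$ alone and the enumeration of adaptive measurements is a fixed, input-independent combinatorial object.
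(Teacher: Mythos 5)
Your overall strategy is the same as the paper's (replace one quantum message by an Aaronson/Gavinsky--Regev--de Wolf deterministic string whose test family is indexed by all possible transcripts, then invoke a lower bound for hybrid protocols), but as written it has two genuine gaps, one quantitative and one structural.

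Quantitatively, the claim that the transcript count $C^{r}$ with $r<k\log_2 n$ is $n^{o(1)}$ is false: it is $n^{\Theta(k)}$, a polynomial with constant exponent. This matters because your union bound over branches forces the per-branch precision to be $\delta=n^{-\Theta(k)}$, and the deterministic-message length in \cref{lem:replace} scales polynomially in $1/\delta$ (the paper's bound is $O\bigl(\frac{q}{\delta^{3}}\log(\frac{q}{\delta})(c+\log\frac{1}{\delta})\bigr)$), so the classical string has length $\poly(m)\cdot n^{\Theta(k)}$, not $m+\poly(m)\cdot n^{o(1)}$. Once this is accounted for, your reduction target is too weak: feeding $m\cdot n^{\Theta(k)}$ into the randomized hybrid bound $\R\Q^{||}(\EQ_n)=\Omega(\sqrt{n})$ of \cref{thm:lower_bound_hybrid} only yields $m=\Omega(n^{1/2-\Theta(k)})$, which is nontrivial for roughly $k<1/3$ and cannot cover the stated range $0<k<2/3$. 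The paper closes this factor-of-two gap by exploiting that the replaced message is \emph{deterministic}: for $\EQ_n$, a deterministic message in an SMP protocol must have length at least $n$ (pigeonhole on the $2^n$ inputs), and it is this $\Omega(n)$ target, not the $\Omega(\sqrt{n})$ hybrid bound, that gives $m=\omega(n^{1-3k-\epsilon})$ in \cref{thm:two-way_LOCC} and hence the threshold $k<2/3$.

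Structurally, the step ``$d_y$ already encodes the responses to every possible adaptive query'' is asserted rather than proved, and it is not literally true: \cref{lem:replace} only provides estimates of $\tr(E\,\sigma_y)$ for \emph{fixed} two-outcome operators $E$ applied to the original state, whereas in the LOCC protocol the round-$t$ measurement acts on the post-measurement state from earlier rounds. The missing ingredient is the paper's \cref{lem:ratio}: for each transcript prefix $h$ one forms the sandwiched operator $M_{h[0]}^\dagger\cdots M^\dagger M\cdots M_{h[0]}$ (which is a valid two-outcome POVM element on the original state), and the conditional outcome probabilities are ratios of these expectation values. One must then (i) truncate the estimated values so the simulated conditional probabilities are genuine probabilities (the $v''$ adjustment), and (ii) show the error does not blow up through the divisions --- the paper avoids dividing by small estimated probabilities because the acceptance probability telescopes into a sum over transcripts of products in which only the $v$-values appear, giving total error $2^{r}(r+1)\delta$ as in \cref{lem:error}. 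Without this ratio/telescoping argument and the validity adjustment, your step (3) does not go through as stated, even though your instinct that the round bound tames adaptivity is exactly right.
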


We also prove a lower bound for general problems.

\begin{theorem}[Informal version of \cref{thm:two-way-LOCC_general_relation}]
    For $R \subseteq \{0,1\}^n \times \{0,1\}^n \times \{0,1\}^*$, suppose $\R^{||}(R) = \Omega(n^c)$. Also, suppose that there exists a quantum two-way-LOCC SMP protocol for $R$ whose measurement outcomes are $2$ values and the number of rounds is smaller than $k\log_2 n$ where $k$ is a constant such that $0<k<\frac{c}{3}$. Then, the message size of the quantum two-way-LOCC SMP protocol is $n^{\Omega(1)}$.
\end{theorem}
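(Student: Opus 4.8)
The plan is to transform the hypothesized quantum two‑way‑LOCC SMP protocol into an ordinary \emph{deterministic classical} SMP protocol for $R$ with only a polynomial blow‑up in message length, and then contradict the classical lower bound $\D^{||}(R)\ge\R^{||}(R)=\Omega(n^{c})$. The structural fact that makes the ``two‑way'' nature manageable is that a $2$‑outcome LOCC measurement using $r$ rounds (each round a constant‑outcome instrument, the standard normalization) is a \emph{separable} measurement with at most $2^{O(r)}$ terms; since $r<k\log_2 n$ this is $n^{O(k)}$, i.e.\ polynomial in $n$. On top of this I would run the technique of Aaronson, as generalized by Gavinsky, Regev and de~Wolf \cite{GRdW08}, which replaces a short quantum message by a deterministic classical string that records the outcomes of a few well‑chosen measurements and suffices to estimate the acceptance probabilities of all the $n^{O(k)}$ relevant local measurement operators.

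First I would record the decomposition explicitly. View the protocol as a depth‑$r$ tree whose internal nodes alternately act on Alice's $q$‑qubit message $\rho_A$ and Bob's $q$‑qubit message $\rho_B$ by a $2$‑outcome instrument (whose Kraus operators may depend on the outcomes seen so far and may append ancillas), and whose leaves carry an output value; by hypothesis only two output values occur, so I treat the output as a bit $\mathrm{out}(\ell)\in\{0,1\}$. Along any root‑to‑leaf path $\ell$ the composition of the Kraus operators factors as $L_{A,\ell}\otimes L_{B,\ell}$, because every factor acts as the identity on one of the two registers; moreover $\|L_{A,\ell}\|,\|L_{B,\ell}\|\le 1$, so $F_{A,\ell}:=L_{A,\ell}^{\dagger}L_{A,\ell}$ and $F_{B,\ell}:=L_{B,\ell}^{\dagger}L_{B,\ell}$ are POVM effects on the respective $q$‑qubit spaces. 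Hence the probability the referee outputs $1$ on inputs $x,y$ is
\[
  p_{\mathrm{acc}}(x,y)=\sum_{\ell\,:\,\mathrm{out}(\ell)=1}\tr(F_{A,\ell}\rho_A)\cdot\tr(F_{B,\ell}\rho_B),
\]
a sum of at most $2^{r}<n^{k}$ products of \emph{local} expectation values.

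Next I would apply the \cite{GRdW08}/Aaronson replacement to each message separately: there is a deterministic classical string $w_A=w_A(x)$ and a function $g$ with $|g(w_A,\ell)-\tr(F_{A,\ell}\rho_A)|\le\delta$ for all leaves $\ell$, and a string $w_B=w_B(y)$ with $|g(w_B,\ell)-\tr(F_{B,\ell}\rho_B)|\le\delta$ for all $\ell$; I take $\delta=\Theta(2^{-r})$ so that the error accumulated over the $\le 2^{r}$ terms stays below $1/6$. Since $\log(1/\delta)=O(\log n)$ and there are $2^{O(r)}=n^{O(k)}$ effects, the lengths of $w_A$ and $w_B$ are $\poly(q)\cdot n^{O(k)}\cdot\poly(\log n)$. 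The new protocol is: Alice sends $w_A(x)$, Bob sends $w_B(y)$, and the referee outputs $1$ iff $\sum_{\ell:\mathrm{out}(\ell)=1}g(w_A,\ell)\,g(w_B,\ell)\ge\tfrac{1}{2}$. By the error bound this matches thresholding $p_{\mathrm{acc}}(x,y)$ at $1/2$, so (using that any valid output is a bit whenever $R(x,y)$ is a singleton) it solves $R$ with bounded error, and it is entirely deterministic and classical. Therefore $\D^{||}(R)\le\poly(q)\cdot n^{O(k)}\cdot\poly(\log n)$, while $\D^{||}(R)\ge\R^{||}(R)=\Omega(n^{c})$; tracking the exponents — the leaf count $2^{r}\le n^{k}$ enters both as the number of terms and through the precision $\delta\approx 2^{-r}$, and the length depends polynomially on $q$ — forces $q=n^{\Omega(1)}$ precisely in the regime $0<k<c/3$.

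The step I expect to be the main obstacle is this last replacement carried out at the fine precision required. The Aaronson/GRdW ``quantum‑message‑to‑classical‑message'' results are cleanest when one only needs the correct \emph{output bit} under a $\tfrac{1}{3}$‑vs‑$\tfrac{2}{3}$ promise, which fails for the intermediate LOCC measurements; so I must use the (more expensive) version that estimates \emph{all} $2^{r}$ expectations $\tr(F_{A,\ell}\rho_A)$ to additive error $\delta\approx 2^{-r}$, and I must check that the resulting classical message length remains $\poly(q)\cdot 2^{O(r)}$ rather than exploding — this is exactly where the $O(\log n)$‑round restriction is indispensable, and getting the constant down so that every $k<c/3$ works needs a careful accounting of how $q$, the number of leaves, and $\delta$ combine. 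A secondary point to pin down is the formal definition of ``round'' and of ``$2$‑valued measurement outcome'' in the two‑way‑LOCC model, so that the tree genuinely has branching $2^{O(r)}$ and the Kraus‑operator factorization above is valid even in the presence of ancillas.
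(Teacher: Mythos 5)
Your core machinery is the same as the paper's: the factorization of the transcript probability into a sum over at most $2^{r}<n^{k}$ leaves of products $\tr(F_{A,\ell}\rho_A)\cdot\tr(F_{B,\ell}\rho_B)$ of local effects is exactly what the paper derives via the telescoping-ratio argument (\cref{lem:ratio}), the replacement of \emph{both} quantum messages by deterministic strings that approximate all leaf expectations is the paper's \cref{lem:replace} applied as in \cref{lem:error_replace_both}, and your parameter accounting ($\delta\approx 2^{-r}\approx n^{-k}$, message length scaling like $q\,\delta^{-3}\,\mathrm{polylog}$, hence the threshold $k<c/3$) is precisely how the paper's \cref{thm:two-way-LOCC_general_relation} gets its exponent. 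The step you flag as the ``main obstacle'' is already resolved by the quoted GRdW-type lemma with its $1/\delta^{3}$ dependence, so that part is fine.

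The genuine gap is in how you convert the estimates into a classical protocol for a \emph{relation}. The hypothesis ``measurement outcomes are $2$ values'' refers to each individual LOCC measurement, not to the protocol's final output: for $R\subseteq\{0,1\}^n\times\{0,1\}^n\times\{0,1\}^{*}$ the output $z$ is a function of the whole transcript and can take up to $2^{r}$ distinct values, and correctness only guarantees that total probability mass $\ge 2/3$ lies on the (possibly many) valid $z$'s. Your referee, who deterministically thresholds a single ``acceptance probability'' at $1/2$, is therefore not well defined in this setting, and the parenthetical about $R(x,y)$ being a singleton does not rescue it (nor can you deterministically output the approximate mode, since an invalid output may carry the largest single leaf mass while the valid mass is spread out); consequently the appeal to $\D^{||}(R)\ge\R^{||}(R)$ does not go through. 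The repair is what the paper actually does: use the leaf estimates to reconstruct approximate \emph{conditional} probabilities, clip/renormalize them so they form valid distributions (the paper's $v''$ adjustments --- a step you skip but which is needed before one can sample), let the randomized referee sample a transcript from this approximate distribution and output the corresponding $z$; the $O(2^{r}\delta)$ total-variation error then contradicts $\R^{||}(R)=\Omega(n^{c})$ directly. Your deterministic-threshold shortcut is only valid for Boolean-valued problems, i.e.\ for the $\EQ_n$ case (\cref{thm:two-way_LOCC}), not for the general statement at hand.
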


One may think that this quantum LOCC SMP model is just a weak model, but we show a quantum 2-round LOCC SMP protocol can be exponentially powerful than a quantum one-way-LOCC SMP protocol for a relation problem, albeit with measurements that are not 2-value\footnote{In fact, this protocol can be conducted by a sequence of 2-value measurements. See \cref{remark} for further information.}. The relation problem we considered is a variant of the Hidden Matching Problem introduced by Bar-Yossef, Jayram, and Kerenidis \cite{BYJK08}. 
Let us denote by $\Q^{||,\mathrm{LOCC}}(R)$ the complexity in quantum two-way-LOCC SMP protocols for a relation $R$.

\begin{proposition}[\cref{prop:separation}]
    There exists a relation problem $R$ whose input size is $n$ such that $\Q^{||,\mathrm{LOCC}}(R) = O(\log n)$ and $\Q^{||,\mathrm{LOCC}_1}(R) = \Theta(\sqrt{n})$.
\end{proposition}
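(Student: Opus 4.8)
The plan is to take $R$ to be a symmetric ``interlocked'' doubling of the Hidden Matching problem. Fix a parameter $m$, set $n=\Theta(m)$, and fix a family $\mathcal{F}=\{M_1,\dots,M_p\}$ of $p=\poly(m)$ perfect matchings on $[m]$ that is ``spread out'' in the sense specified below; a uniformly random family of size $\widetilde{O}(m^2)$ works by the probabilistic method. Alice's input is a pair $(x,a)$ with $x\in\{0,1\}^m$ and $a\in[p]$, and Bob's input is a pair $(y,b)$ with $y\in\{0,1\}^m$ and $b\in[p]$. A tuple $(i,j,\beta,i',j',\beta')$ is a valid answer if $(i,j)$ is an edge of $M_b$ with $\beta=x_i\oplus x_j$, and $(i',j')$ is an edge of $M_a$ with $\beta'=y_{i'}\oplus y_{j'}$. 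The point of the ``cross'' structure is that the matching used to extract a parity of $x$ is indexed by \emph{Bob}'s input, and symmetrically for $y$.

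For the upper bound, Alice sends the fingerprint $\ket{\psi_x}=\frac{1}{\sqrt m}\sum_i(-1)^{x_i}\ket{i}$ together with $a$ in the clear, and Bob symmetrically sends $\ket{\psi_y}$ together with $b$; this is $O(\log m)=O(\log n)$ qubits. The referee runs the following two-round LOCC protocol: read $b$ from Bob's message; move to Alice's message, read $a$, and measure $\ket{\psi_x}$ in the matching basis $\{\tfrac{1}{\sqrt2}(\ket{i}\pm\ket{j}):(i,j)\in M_b\}$, which is a complete orthonormal basis of $\mathbb{C}^m$ and yields a uniformly random edge $(i,j)\in M_b$ with the sign encoding $x_i\oplus x_j$ \emph{exactly}; finally return to Bob's message and measure $\ket{\psi_y}$ in the matching basis of $M_a$. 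Outputting the two edges with their parities solves $R$ with zero error. The crucial feature is the chicken-and-egg dependency: $\ket{\psi_x}$ cannot be interpreted without $b$, which lives in Bob's message, and $\ket{\psi_y}$ cannot be interpreted without $a$, which lives in Alice's message, so no measurement that touches one message and then the other can succeed -- two rounds of classical communication are genuinely used, so the referee's measurement is not in $\mathrm{LOCC}_1$. The matching-basis measurement has $m$ outcomes but is a sequence of binary measurements (see \cref{remark}).

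For the lower bound, by \cref{thm:intro_one-way_LOCC} it suffices to show $\R\Q^{||}(R)=\Omega(\sqrt n)$, i.e.\ to lower bound the hybrid SMP complexity in both directions. Suppose a hybrid protocol makes Bob's message classical of length $c$; I would reduce from the one-way communication complexity of Hidden Matching over the family $\mathcal{F}$ (sender holds a string, receiver holds a matching from $\mathcal{F}$). The sender plays Bob with input $(y,b_0)$ for an arbitrary fixed $b_0$ and forwards the $c$-bit message; the receiver, who holds the matching $M_a$ and hence knows $a$, locally prepares Alice's quantum message for the trivial input $(0,a)$, runs the referee, and outputs the $M_a$-part of the answer -- thus solving Hidden Matching with a $c$-bit message. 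The Bar-Yossef--Jayram--Kerenidis argument (a level-$2$ hypercontractive bound: a length-$c$ message leaves at most $O(c^2)$ pairs of $[m]$ on which the conditional parity bias is non-negligible) then forces $c=\Omega(\sqrt m)$, \emph{provided} $\mathcal{F}$ has the property that no set of $o(m)$ pairs hits a constant fraction of the matchings in $\mathcal{F}$ -- which holds for a random $\widetilde{O}(m^2)$-size family because any fixed set of $o(m)$ pairs is avoided by a uniformly random perfect matching with probability $1-o(1)$. The symmetric reduction handles the case where Alice's message is classical, so $\R\Q^{||}(R)=\Omega(\sqrt n)$. The matching upper bound, up to logarithmic factors, is a plain classical SMP protocol: each party sends its seed together with the restriction of its string to a private random set of size $\widetilde{\Theta}(\sqrt m)$, and the referee finds, for each required matching, an edge contained in the relevant sampled set (which exists with high probability by a birthday argument) and reads off its parity; hence $\Q^{||,\mathrm{LOCC}_1}(R)\le\R^{||}(R)=\widetilde{O}(\sqrt n)$, and combined with the lower bound this gives $\Q^{||,\mathrm{LOCC}_1}(R)=\widetilde{\Theta}(\sqrt n)$ (the logarithmic slack can be removed by instead taking the matchings of $\mathcal{F}$ to be induced by a good error-correcting code, so that sampled coordinates align with edges).

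The main obstacle is pinning down and verifying the ``spread out'' property of $\mathcal{F}$: the family must be small enough ($\poly(m)$ matchings, so its index fits in $O(\log n)$ qubits and the fingerprinting protocol stays efficient) yet robust enough that the Hidden Matching lower bound over $\mathcal{F}$ remains $\Omega(\sqrt m)$ rather than collapsing to $\Omega(\log m)$ -- the hypercontractivity step has to go through against adversarial choices of the matching within $\mathcal{F}$. A secondary point requiring care is the hybrid reduction: one must ensure the quantum party's message cannot covertly supply the information that the classical party would otherwise need, which is why in the reduction the receiver generates the \emph{entire} message of the quantum party herself, its input in the reduced instance being trivial apart from the publicly known seed.
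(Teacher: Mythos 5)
Your construction is, up to renaming, the paper's own relation: the Double Restricted Hidden Matching problem $\mathsf{DRHM}_n$, in which each party holds the string of one Hidden Matching instance and the matching index of the other. Your $O(\log n)$-qubit two-way-LOCC protocol (read the index on one side, measure the other side's fingerprint in that matching basis, and symmetrically) is the paper's protocol, and your lower-bound route is also the paper's: show the hybrid SMP complexity is $\Omega(\sqrt n)$ in both directions by merging the referee with the quantum-sending party (equivalently, your reduction in which the receiver prepares the quantum message for a trivial string herself) so that a single restricted Hidden Matching instance must be solved by a classical one-way message, and then transfer this to $\mathrm{LOCC}_1$ via \cref{thm:one-way_LOCC}.

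The one genuine gap is the step you yourself flag as the main obstacle: the $\Omega(\sqrt m)$ one-way lower bound for Hidden Matching restricted to your bespoke random family $\mathcal{F}$ of $\widetilde{O}(m^2)$ matchings is asserted, not proved. The hypercontractive counting is not automatic once matchings are allowed to share edges: a $c$-bit message leaves $O(c^2)$ significantly biased pairs, but in a non-edge-disjoint family a single biased pair can serve many matchings simultaneously, so you must quantify the ``spread-out'' property (with a union bound over all small sets of pairs, and a bound on how many matchings contain any fixed pair) and re-run the Bar-Yossef--Jayram--Kerenidis argument against an adversarial matching inside $\mathcal{F}$; none of this is carried out. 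The paper sidesteps the issue entirely by using the Restricted Hidden Matching problem $\mathsf{RHM}_n$ of Bar-Yossef, Jayram and Kerenidis: a fixed family of $\Theta(n)$ \emph{pairwise edge-disjoint} matchings for which $\R^1(\mathsf{RHM}_n)=\Theta(\sqrt n)$ is already established (\cref{lem:rhm}). Edge-disjointness is exactly what makes ``few biased pairs answer few matchings'' immediate, the index still costs only $O(\log n)$ (qu)bits, and the birthday-sampling classical SMP protocol then gives the matching $O(\sqrt n)$ upper bound with no logarithmic slack, so your error-correcting-code patch is unnecessary. If you substitute that family for your random one, your argument closes and coincides with the paper's proof of \cref{prop:separation}.
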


We also consider a variant of quantum one-way communication protocols where measurements are not ideal. In the usual setting, a receiver's quantum computer is ``programmable" based on receiver's $2^n$ inputs (see \cref{def:Q_one-way} for a formal definition). In an actual setting, it is hard to implement a measurement from a large number of varieties with high accuracy. Instead of this situation, let us consider the case where we implement some complex and accurate measurement whose outcome can be a large classical string, but the measurement bases are fixed for any input, and we cannot keep residue states after the first measurement. In such a situation, are quantum messages more useful than classical messages? We show a negative answer for the question above by the same technique derived to have a lower bound in the quantum one-way-LOCC SMP models for $\EQ_n$. Let us call the protocol a quantum incoherent one-way communication complexity and denote by $\Q^{1,\perp}(R)$ the quantum incoherent one-way communication complexity for $R$ (see \cref{def:Q_one-way_incoherent} for a formal definition). Let us also denote by $\R^1(R)$ the randomized one-way communication complexity for $R$. Note that $\Q^{1,\perp}(R) \leq \R^1(R)$ from their definitions.

\begin{corollary}[\cref{cor:incoherent}]\label{cor:incoherent_intro}
    For any relation $R \subseteq \{0,1\}^n \times \{0,1\}^n \times \{0,1\}^*$, 
    \[
    \R^1(R) \leq 2 \Q^{1,\perp}(R) + O(\log n).
    \]
\end{corollary}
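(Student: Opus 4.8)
The plan is to specialize the argument behind \cref{thm:intro_one-way_LOCC} to the (simpler) one-way communication setting: Alice will carry out Bob's fixed measurement herself and forward the classical outcome, so that the quantum message becomes a classical one. Fix an optimal quantum incoherent one-way protocol for $R$ using $q := \Q^{1,\perp}(R)$ qubits of communication. By the definition of the incoherent model, Alice prepares a state $\rho_x$ on $q$ qubits, Bob applies a \emph{fixed} POVM $\{M_j\}_j$ --- fixed meaning in particular that it does not depend on Bob's input $y$, and that no residue state is retained --- obtains outcome $j$ with probability $\tr(M_j\rho_x)$, and outputs a value which the original protocol computes from $j$ and $y$. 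The crucial observation is that since $\{M_j\}_j$ is part of the protocol description and hence known to Alice, she can simulate this measurement locally; the only obstruction is that $j$ may range over a set far larger than $2^{O(q)}$, so transmitting $j$ outright would be too expensive.

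To shrink the outcome alphabet I would invoke the simulation result of Oszmaniec, Guerini, Wittek, and Acín: any POVM on the $d = 2^q$-dimensional message space can be realized by classical randomization and post-processing of \emph{projective} measurements acting on a space of dimension at most $d^2$, i.e.\ on the $q$ message qubits together with $q$ ancilla qubits. Concretely there are a distribution $\mu$ over an index $r$, projective measurements $\{\Pi^{(r)}_k\}_k$ on $2q$ qubits (so $k$ takes at most $2^{2q}$ values), and stochastic post-processing maps $g^{(r)}(\cdot\mid k)$ onto the original outcomes, such that $\tr(M_j\rho) = \E_{r\sim\mu}\sum_k g^{(r)}(j\mid k)\,\tr\!\big(\Pi^{(r)}_k\,(\rho\otimes(\ket{0}\bra{0})^{\otimes q})\big)$ for every $\rho$. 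This gives a modified protocol: Alice samples $r\sim\mu$ privately, measures $\rho_x\otimes(\ket{0}\bra{0})^{\otimes q}$ with $\{\Pi^{(r)}_k\}_k$ (she can prepare this state, being the sender), and sends the pair $(r,k)$; Bob samples $j\sim g^{(r)}(\cdot\mid k)$ privately and returns the value the original protocol prescribes from $j$ and $y$. By construction this reproduces the original output distribution on every input, and hence has the same error; its only flaw is that $r$ is not succinct.

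Finally I would apply Newman's theorem --- a Hoeffding bound on the per-input success probability together with a union bound over the $2^{2n}$ input pairs --- to replace $\mu$ by the uniform distribution over a multiset $\{r_1,\dots,r_T\}$ with $T = O(n)$, at the price of an arbitrarily small additive constant in the error, which is immaterial for $\R^1$. Alice now sends $(i,k)$ with $i\in[T]$ and $k\in\{0,1\}^{2q}$, a purely classical message of length $2q + \log_2 T + O(1) = 2\Q^{1,\perp}(R) + O(\log n)$; since only private randomness is used, this is a bona fide randomized one-way protocol for $R$, which proves the corollary. I expect the two points needing care to be (i) confirming that the incoherent model genuinely forces Bob's POVM to be input-independent, so that Alice can reproduce it, and (ii) the exact ancilla count in the Oszmaniec--Guerini--Wittek--Acín simulation, since the doubling from $q$ to $2q$ qubits is precisely what produces the factor $2$ in the statement; neither is a deep difficulty once the machinery of \cref{thm:intro_one-way_LOCC} is in hand.
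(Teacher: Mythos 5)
Your proposal is correct and follows essentially the same route as the paper's proof of \cref{cor:incoherent}: Alice performs Bob's input-independent POVM herself via the Oszmaniec--Guerini--Wittek--Ac\'in simulation (projective measurements on the $q$ message qubits plus $q$ ancillas, giving a $2q$-bit outcome with classical post-processing), and the randomization index is then compressed to $O(\log n)$ bits by the Newman-type argument, exactly as in the paper (which phrases the index as shared randomness before invoking Newman, a cosmetic difference only). The two points you flag are handled the same way in the paper: input-independence is built into \cref{def:Q_one-way_incoherent}, and the ancilla count in \cref{lem:OGWA17} is indeed what yields the factor $2$.
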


There are some known results to show that we can replace quantum messages by classical messages in one-way communication complexity. Here we consider a (possibly partial) function $F : X \times Y \rightarrow \{0,1\}$. Let us denote by $\Q^{1,*}(F)$ a variant of quantum one-way communication complexity where Alice and Bob share (unlimited) prior entanglement. We also consider classical one-way communication complexity and denote by $\D^1(F)$ (${\rm resp.}$ $\R^1(F)$) the classical deterministic (${\rm resp.}$ randomized) one-way communication complexity for $F$.

\begin{theorem}[Theorem 3.4 in \cite{Aar05}]
    $\D^1(F) = O(\log (|Y|) \Q^1(F) \log \Q^1(F) )$.
\end{theorem}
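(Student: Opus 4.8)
The idea is to let Alice replace her $q:=\Q^1(F)$-qubit message with a short \emph{classical} transcript from which Bob can reconstruct a density matrix good enough to feed into his own measurement. Fix an optimal quantum one-way protocol: on input $x$ Alice sends a $q$-qubit state $\rho_x$, and on input $y$ Bob applies a two-outcome POVM $\{E_y,I-E_y\}$ and outputs $1$ on outcome $E_y$, with $\tr(E_y\rho_x)\ge 2/3$ whenever $F(x,y)=1$ and $\tr(E_y\rho_x)\le 1/3$ whenever $F(x,y)=0$. The deterministic message will be a list $(y_1,b_1),\dots,(y_T,b_T)$ with each $b_i$ the correct value of $F(x,y_i)$; Bob reads it as a list of constraints ``$\tr(E_{y_i}\sigma)\ge 2/3$'' or ``$\tr(E_{y_i}\sigma)\le 1/3$'' on an unknown $q$-qubit state $\sigma$. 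Writing down each $y_i$ costs $\lceil\log|Y|\rceil$ bits and each $b_i$ one bit, so the whole message has $O\big(T(\log|Y|+1)\big)$ bits (plus $O(\log T)$ to delimit it); the theorem then follows once we show $T=O(q\log q)$ suffices.

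\textbf{Building the list.} Bob's reconstruction rule is fixed in advance and is a deterministic function of the transcript: from the first $t$ constraints he forms a hypothesis state $\sigma_t$ — for instance via a matrix multiplicative weights / exponentiated-gradient update started from the maximally mixed state $I/2^q$ and nudged by $E_{y_i}$ according to the sign of the $i$-th constraint. Alice, who knows $\rho_x$, simulates this process: as long as there is a \emph{legal} input $y$ (one on which $F(x,\cdot)$ is defined) for which $\sigma_t$ gives the wrong answer — i.e.\ $\tr(E_y\sigma_t)$ is on the wrong side of $1/2$ while $\tr(E_y\rho_x)$ is on the correct side of the $2/3$ / $1/3$ gap — she appends the lexicographically smallest such $(y,F(x,y))$ to the list and lets Bob update; otherwise she stops. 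When the process stops, $\sigma_T$ has the correct sign (relative to $1/2$) on $E_y$ for \emph{every} legal $y$, so Bob's output rule ``$1$ iff $\tr(E_y\sigma_T)>1/2$'' is correct on all inputs. Both parties' strategies are deterministic, so this is a legitimate deterministic one-way protocol.

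\textbf{Bounding $T$ — the crux.} This is where the low dimensionality of quantum states is used. The function class $\{\,y\mapsto\tr(E_y\sigma):\sigma\text{ a }q\text{-qubit state}\,\}$ is ``small'': its $\gamma$-fat-shattering dimension is $O(q/\gamma^2)$ (the quantum Occam's-razor bound), equivalently the regret of the matrix multiplicative weights learner over $2^q\times 2^q$ density matrices against a realizable sequence of constraints is $O(\sqrt{q\,T})$. Because every constraint that gets added is one that $\sigma_t$ violated by a constant margin, each round forces a constant decrease of the natural potential (the relative entropy $\mathrm{S}(\rho_x\,\|\,\sigma_t)$, which is at most $q\ln 2$ at the start and nonnegative), so no violating $y$ can remain after $T=O(q)$ rounds at constant target accuracy; packaging the learning bound in the coarser $\epsilon$-net / minimax form used in \cite{Aar05} costs a further logarithmic factor and yields $T=O(q\log q)$. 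Plugging in, $\D^1(F)=O(q\log q\cdot\log|Y|)=O\big(\Q^1(F)\log\Q^1(F)\cdot\log|Y|\big)$.

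\textbf{Main obstacle.} The delicate point is that correctness is required on \emph{all} legal $y$ at once, not merely on a typical $y$ from some distribution, so a plain sample-complexity bound is not enough by itself: one must run the reconstruction adaptively and prove a worst-case progress/termination bound. This is precisely where the minimax theorem (to convert ``good on a fixed distribution'' into ``good everywhere'' by iterating and amplifying) or the online-learning potential argument has to do the real work, and it is the step that governs whether the extra $\log\Q^1(F)$ factor appears. The remaining points are routine bookkeeping: restricting ``violation'' to legal inputs so that partial $F$ is handled, charging $\lceil\log|Y|\rceil$ bits per recorded input, and checking that Bob's reconstruction is a genuine function of the transcript so that no hidden randomness is used.
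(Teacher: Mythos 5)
Your proposal is essentially sound, but it is a genuinely different argument from the one behind this statement (which the paper only cites; the closest proof actually given in the paper is that of Lemma~\ref{lem:replace}, the Gavinsky--Regev--de Wolf variant). Aaronson's proof, and the paper's Lemma~\ref{lem:replace}, work by amplification plus a counting argument: Alice (implicitly) sends $O(\log q)$ copies, the reconstruction starts from the maximally mixed state and sequentially projects/postselects on the recorded measurements, and the number of recorded inputs is bounded by comparing an upper bound $\eta^{t}$ on the overall success probability (each ``bad'' input cuts it by a constant factor) with the lower bound $2^{-O(rq)}$ coming from the weight of the true state inside the maximally mixed state, via the gentle-measurement/quantum union bound; the $\log \Q^1(F)$ factor is exactly the amplification overhead. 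You instead run an online-learning reconstruction (matrix multiplicative weights with the relative-entropy potential $\mathrm{S}(\rho_x\|\sigma_t)$, equivalently the fat-shattering/quantum Occam bound), where each recorded $y$ is violated with constant margin $1/6$, so the potential drops by a constant per round and $T=O(q)$ --- which in fact yields the stronger bound $\D^1(F)=O(\Q^1(F)\log|Y|)$, making your final sentence about recovering the extra $\log$ factor unnecessary (and a bit hand-wavy, though harmless since the claimed bound is weaker). The trade-off: your route leans on the nontrivial mistake-bound machinery for learning quantum states, which you invoke rather than prove, while the original argument is self-contained and elementary, and its ``estimate all $2^c$ acceptance probabilities, not just signs'' form is what the paper actually needs later for the two-way LOCC simulation; your sign-based reconstruction only certifies Boolean correctness. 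With the MMW mistake bound granted, your bookkeeping (deterministic reconstruction as a function of the transcript, restriction to legal $y$ for partial $F$, $\lceil\log|Y|\rceil+1$ bits per entry) is correct.
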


\begin{theorem}[Theorem 1.4 in \cite{Aar07}]
    $\R^1(F) = O( \log (|Y|) \Q^1(F))$.
\end{theorem}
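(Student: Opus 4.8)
The plan is to simulate the quantum one-way protocol by a classical randomized one through the learnability of quantum states. First I would put the quantum protocol into a convenient normal form. If Alice sends a $q$-qubit state $\rho_x$ with $q=\Q^1(F)$, then since Bob only has to output one bit his strategy is without loss of generality a two-outcome POVM $\{E_y,\I-E_y\}$, and writing $p_x(y):=\tr(E_y\rho_x)$ we have $p_x(y)\ge 2/3$ when $F(x,y)=1$ and $p_x(y)\le 1/3$ when $F(x,y)=0$; in particular the target function $y\mapsto[p_x(y)>1/2]$ agrees with $F(x,\cdot)$ and is realized, with margin at least $1/6$, by the state $\rho_x$ itself. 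The task now becomes: Alice, knowing $\rho_x$, must send a short classical string that lets Bob, who knows $y$, recover $[p_x(y)>1/2]$.

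The key tool is the learnability lemma: the concept class $\mathcal{C}=\{\,y\mapsto\tr(E_y\sigma):\sigma\text{ a }q\text{-qubit density matrix}\,\}$ of functions $Y\to[0,1]$ has $\gamma$-fat-shattering dimension $O(q/\gamma^2)$ --- linear in $q$, not in $2^q$. I would prove this by the matrix multiplicative weights / entropy-potential argument: each $\tr(E_y\sigma)$ is linear in $\sigma$, so if a set of size $d$ were $\gamma$-fat-shattered one could run the matrix-MW online learner against the associated hard sequence of measurements, and since the relative entropy of any $q$-qubit state to the maximally mixed state is at most $q$ while every forced $\gamma$-error decreases this potential by $\Omega(\gamma^2)$, we get $d=O(q/\gamma^2)$. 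Any subexponential-in-$q$ dependence on the number of candidate measurements would be useless here, so the linear-in-$q$ bound is the whole point.

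Next I would convert this into a randomized one-way protocol via Yao's minimax principle: it suffices to show that for every distribution $\mu$ on $\{0,1\}^n\times\{0,1\}^n$ there is a deterministic one-way protocol of cost $O(q\log|Y|)$ with error at most $1/3$ under $\mu$. Fix $\mu$; Alice knows $x$ and hence the conditional distribution $\mu_{Y|x}$ on Bob's input. Fixing a small constant margin $\gamma$ (say $\gamma=1/12$), the realizable-case generalization bound for margin classifiers of fat-shattering dimension $d=O(q)$ gives, for each $x$, a set of $m=O(d)=O(q)$ points $y_1,\dots,y_m$ (obtainable by averaging over $S\sim(\mu_{Y|x})^m$) such that any $\hat\sigma$ which is $\gamma$-margin-consistent with $\{(y_i,[p_x(y_i)>1/2])\}_i$ satisfies $[\tr(E_y\hat\sigma)>1/2]=[p_x(y)>1/2]=F(x,y)$ for at least a $2/3$ fraction of $y\sim\mu_{Y|x}$. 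Alice computes such a set from $x$ and $\mu$ and sends $\big((y_1,b_1),\dots,(y_m,b_m)\big)$ with $b_i=[p_x(y_i)>1/2]$, costing $m(\log|Y|+1)=O(q\log|Y|)$ bits; Bob (who has unbounded computation) finds a $\gamma$-margin-consistent $\hat\sigma$ by exhaustive search and outputs $[\tr(E_y\hat\sigma)>1/2]$. Averaging over $x\sim\mu_X$ shows the error is at most $1/3$ under $\mu$, and minimax then yields $\R^1(F)=O(\log|Y|\cdot\Q^1(F))$.

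The main obstacle is the sample-complexity bookkeeping: to land at $O(q\log|Y|)$ with no spurious $\log\Q^1(F)$ factor --- precisely the factor that separates this bound from the deterministic $\D^1(F)=O(\log|Y|\,\Q^1(F)\log\Q^1(F))$ quoted above --- one must run the learning step at constant error, confidence, and margin in the \emph{realizable, average-case} regime, where the sample size is linear (not $d\log d$) in the fat-shattering dimension, rather than through an online/worst-case mistake bound, whose extra $\log T$ would inflate $T$ to $\Theta(d\log d)$. This is exactly where having a distribution $\mu$, i.e.\ randomness, is used. Establishing the fat-shattering estimate and invoking the right real-valued generalization bound so that all the constants compose correctly is the delicate part; everything else is routine.
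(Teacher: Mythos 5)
This statement is not proved in the paper at all---it is imported verbatim from Aaronson (2007)---and your proposal is essentially Aaronson's own argument for it: reduce via Yao's minimax principle to distributional one-way complexity, let Alice send $O(\Q^1(F))$ labeled samples from Bob's conditional input distribution ($O(\log|Y|)$ bits each), and let Bob output the prediction of any margin-consistent hypothesis state, the key ingredient being the $O(q/\gamma^2)$ fat-shattering bound for the class $y\mapsto\tr(E_y\sigma)$ together with a realizable margin-based generalization bound that is linear in the dimension. Your only deviation---establishing the fat-shattering bound by the matrix-multiplicative-weights/relative-entropy potential argument rather than by Nayak's random access code bound as in the original---is a valid alternative, so the proposal is sound (up to the standard caveat that minimax yields a public-coin protocol, with Newman's theorem handling private coins).
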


\begin{theorem}[Theorem 3.1 in \cite{KP14}]
    $\D^1(F) = O( \log (|Y|) \Q^{1,*}(F))$.
\end{theorem}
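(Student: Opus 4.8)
The plan is to adapt the ``quantum message $\to$ classical message'' compression, realized through online learning of the quantum state Bob holds (Aaronson-style), with one twist that lets the bound survive the shared entanglement. Fix an optimal entanglement-assisted one-way protocol: Alice and Bob share a state $\ket{\psi}_{AB}$; on input $x$ Alice applies an isometry $V_x:A\to A'M$ and sends the $q:=\Q^{1,*}(F)$-qubit register $M$; on input $y$ Bob applies a two-outcome POVM $\{E_y,I-E_y\}$ to $MB$ and outputs the resulting bit, correctly with probability $\ge 2/3$ whenever $F(x,y)$ is defined. Write $\tau_x$ for the state of $MB$ that Bob holds and $p_y(x):=\tr(E_y\tau_x)$, so $p_y(x)\ge 2/3$ if $F(x,y)=1$ and $p_y(x)\le 1/3$ if $F(x,y)=0$. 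The structural observation driving everything is that $\tr_M\tau_x=\omega_B:=\tr_A\ket{\psi}\!\bra{\psi}$ for \emph{every} $x$, since in a one-way protocol Alice never touches $B$. Replacing each $E_y$ by $\Pi E_y\Pi$, with $\Pi$ the projector onto $M\otimes\operatorname{supp}(\omega_B)$ (which leaves every $p_y(x)$ unchanged since $\tau_x$ is supported there), we may assume everything lives on dimension $2^q\cdot r$ with $r:=\rank\omega_B$.

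The deterministic one-way protocol is as follows. Alice (who knows $\ket\psi$, $V_x$, and all $E_y$, hence all $p_y(x)$) runs in her head the matrix multiplicative weights online learner for $\tau_x$, \emph{initialized at the product state $\sigma_1:=\tfrac{I_M}{2^q}\otimes\omega_B$}, which Bob also knows. At step $t$ she searches for a ``mistake'' index $y\in Y$ with $|\tr(E_y\sigma_t)-p_y(x)|>1/12$; if one exists she appends the pair $\big(y,\operatorname{sign}(p_y(x)-\tr(E_y\sigma_t))\big)$ to the message and performs the MMW update $\sigma_{t+1}\propto\exp(\log\sigma_t\pm\eta E_y)$ (the sign read off from the transmitted bit); if no mistake exists she stops, having produced a $\sigma_T$ with $|\tr(E_y\sigma_T)-p_y(x)|\le 1/12$ for all $y\in Y$. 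Bob, knowing $\omega_B$ and the $E_y$'s, replays the same updates from $\sigma_1$ using the transmitted pairs, reconstructs $\sigma_T$, and on his input $y$ outputs $1$ iff $\tr(E_y\sigma_T)>1/2$; this is correct whenever $F(x,y)$ is defined because $\tr(E_y\sigma_T)$ lies within $1/12$ of $p_y(x)\in[0,1/3]\cup[2/3,1]$. The message is a length-$T$ sequence over $Y\times\{\pm\}$, plus $O(\log T)$ bits to encode $T$, for a total of $T(\lceil\log|Y|\rceil+1)+O(\log T)$ bits.

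It remains to bound the number of rounds $T$. The standard MMW analysis uses the potential $\Phi_t:=D(\tau_x\,\|\,\sigma_t)$, which is nonnegative and, for a suitable constant step size $\eta=\Theta(1)$, drops by $\Omega(1)$ on every mistake round; hence $T=O(\Phi_1)=O\!\big(D(\tau_x\,\|\,\tfrac{I_M}{2^q}\otimes\omega_B)\big)$. This is exactly where the choice of initialization pays off: a direct computation gives
\[
D\!\left(\tau_x\,\Big\|\,\tfrac{I_M}{2^q}\otimes\omega_B\right)=q-S(M|B)_{\tau_x}\le 2q,
\]
using only $S(M|B)_{\tau_x}\ge-\log\dim M=-q$. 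Crucially this is \emph{independent of $r$}, so $T=O(q)$ and the deterministic protocol costs $O(q\log|Y|)=O(\Q^{1,*}(F)\log|Y|)$ bits, proving the theorem. (Without entanglement, $\omega_B$ is trivial and one recovers $\D^1(F)=O(\Q^1(F)\log|Y|)$, which is the statement without the extra $\log\Q^1(F)$ factor.)

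The main obstacle I expect is precisely the entanglement: the learner nominally operates on a $2^q r$-dimensional space, and the naive ``$\log(\text{dimension})$'' regret bound would inject an uncontrolled $\log r$ factor into the message length; a worst-case maximally-mixed shared state shows one cannot simply truncate the entanglement instead. The fix — initialize the learner at Bob's \emph{known} marginal $\omega_B$ tensored with the maximally mixed state on the message register, and note that $\tau_x$ is within relative entropy $2q$ of it because it differs only in the $q$ message qubits and their correlations with $B$ — is the one genuinely quantum idea; the rest is routine MMW bookkeeping, together with the minor care (handled by the $\Pi E_y\Pi$ replacement above) needed so that all hypotheses stay supported on $M\otimes\operatorname{supp}(\omega_B)$ and the relative entropies stay finite. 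A slightly lossier alternative avoids online learning entirely: bound the $\gamma$-fat-shattering dimension of the family $\{\,y\mapsto\tr(E_y\tau_x):x\in X\,\}$ by $O(q/\gamma^2)$, convert this via the Alon--Ben-David--Cesa-Bianchi--Haussler covering bound into an $O(q\log|Y|)$ bound on the number of $\tfrac16$-distinguishable rows $F(x,\cdot)$, and have Alice send the index of her row-cluster; but making the fat-shattering estimate insensitive to $r$ again requires subtracting off Bob's marginal in the same way.
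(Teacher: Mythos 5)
The paper never proves this statement---it is quoted verbatim as Theorem~3.1 of \cite{KP14}---so there is no in-paper proof to compare against; judged on its own, your argument is correct in outline, and it takes a genuinely different route from both the machinery this paper develops for related purposes and (as far as I can tell) the original argument of Klauck and Podder. The load-bearing steps all check out: since Alice acts only on her half of the shared state, $\mathrm{tr}_M\,\tau_x=\omega_B$ for every $x$, so all hypotheses live on $M\otimes\mathrm{supp}(\omega_B)$ and the relative entropies stay finite after the $\Pi E_y\Pi$ restriction; the identity $D\bigl(\tau_x \,\big\|\, 2^{-q}I_M\otimes\omega_B\bigr)=q-S(M|B)_{\tau_x}\le 2q$ is a correct computation and is exactly what makes the mistake bound independent of the (arbitrarily large) entanglement rank $r$; and the mistake-driven matrix-multiplicative-weights update with constant threshold and constant $\eta$ does lose $\Omega(1)$ relative entropy per recorded mistake (Golden--Thompson together with $e^z\le 1+z+z^2$ for $|z|\le 1$ gives drop at least $\eta/12-\eta^2$), so $T=O(q)$ and the transcript of $(y,\pm)$ pairs costs $O(q\log|Y|)$ bits, with determinism needing only a canonical (say lexicographic) choice of mistake index and a padded or self-delimiting encoding of $T$. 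By contrast, the Aaronson/Gavinsky--Regev--de Wolf-style sequential-measurement compression that this paper does prove and use (\cref{lem:replace}) would give an extra $\log q$-type factor and does not by itself accommodate prior entanglement; your initialization of the learner at $\tfrac{I_M}{2^q}\otimes\omega_B$ is the one step that absorbs the entangled register, and the online-learning (regret/mistake-bound) formulation, in the spirit of Aaronson, Chen, Hazan, Kale, and Nayak, is what removes the extra logarithm. Two small tidying remarks: the concluding fat-shattering alternative is not needed for the claimed bound, and you should state explicitly that Bob's general strategy is without loss of generality a two-outcome POVM $\{E_y,I-E_y\}$ on $MB$ (Stinespring for Alice's channel you already handle).
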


By comparing \cref{cor:incoherent_intro} with these results, we can claim that the quantum incoherent one-way communication protocols are much weaker protocols than usual quantum one-way communication protocols.

\subsection{Our techniques}

\subsubsection*{Against one-way LOCC measurements}

For clarity, in quantum LOCC SMP protocols, let us decompose the referee into two referees $\rm{Ref}_A$ and $\rm{Ref}_B$, who receive quantum messages from Alice and Bob, respectively.
In the quantum one-way-LOCC SMP model, $\rm{Ref}_A$ first measures Alice's message and, based on the measurement result, $\rm{Ref}_B$ measures Bob's message. Our proof strategy is to make Alice do the measurement of $\rm{Ref}_A$ on Alice's message state and send it to the referee. Then, the SMP protocol is the classical and quantum hybrid SMP model.

However, there is an issue that comes up in implementing this strategy. The measurement by $\rm{Ref}_A$ can be done by adding many ancilla qubits to Alice's message state, and the measurement result can be a large classical string. Therefore, we cannot consider a direct reduction to the lower bound in the hybrid SMP models. To overcome this, we use a result of Oszmaniec, Guerini, Wittek, and Ac{\'i}n \cite{OGWA17}, who showed that general POVM measurements can be simulated by randomized projective measurements with few ancilla qubits (see \cref{lem:OGWA17} for a formal description). Since the number of outcomes is small if the measurements are projective measurements with a small number of ancilla qubits, the result may lead to a nice reduction.

However, there is still another issue that comes up. To recover the original measurement result and exactly simulate $\rm{Ref}_A$, 
%$\rm{Ref}_B$, 
the referee needs to know which measurement Alice made. If the number of measurements is limited, then we could make Alice send which measurement she did. The number of measurements corresponds to the number of unitary channels that must be in a convex combination to obtain any mixed unitary channel \cite{CD04,DPP05}. Moreover, the number is related to Carath{\'e}odory’s theorem \cite{Car11} and generally large (see Proposition 4.9 in \cite{Wat18}). Therefore, for a valid reduction, Alice and the referee need to share a probabilistic mixture over a large number of alphabets to know which measurement Alice did.

Fortunately, we can show that a variant of Newman's theorem \cite{New91} holds in this setting (\cref{lem:newman}), and we can remove the generally large shared randomness to a small ($O(\log n)$ bits) private randomness and make Alice send it to the referee. Therefore, we have a valid reduction from a quantum one-way-LOCC SMP protocol to the hybrid SMP protocol for any relation. Since $\EQ_n$ is hard for the hybrid SMP protocols, it is also hard for quantum one-way-LOCC SMP protocols.

\subsubsection*{Against two-way LOCC measurements}

Aaronson \cite{Aar05} introduced a method to replace quantum messages by classical deterministic messages in one-way communication protocols. His construction exploits the property that if a 2-value measurement accepts a state with high probability, after the measurement, one can recover a state that is close to the original state just before the measurement (see Lemma 2.2 in \cite{Aar05}). Theorem 3.4 in \cite{Aar05} assumes that the measurement probability is close to 0 or 1.  Gavinsky, Regev and de Wolf \cite{GRdW08} proved a similar argument without this assumption. See also Theorem 4 in \cite{Aar18}.

\begin{theorem}[Theorem 5 in \cite{GRdW08}]
    Suppose Alice has the classical description of an arbitrary $q$-qubit density matrix $\rho$, and Bob has $2^c$ $2$-value POVM operators $\{E_b\}_{b \in \{0,1\}^c}$. For any constant $\delta >0$, there exists a deterministic message of $O(qc\log q)$\footnote{This asymptotic order hides some large constant depending on $\delta$ and holds for any small constant $\delta$. See \cref{lem:replace} for a formal statement.} bits from Alice that enables Bob to output values $p'_b$ satisfying that $|p_b-p'_b| \leq \delta$ simultaneously for all $b \in \{0,1\}^c$ where $p_b = \mathrm{tr}(E_b \rho)$.
\end{theorem}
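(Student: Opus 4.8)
The plan is to adapt Aaronson's online \emph{hypothesis-correction} technique --- the argument behind Theorem~3.4 in \cite{Aar05}, in the form refined by \cite{GRdW08} --- so that Alice's message is not a description of $\rho$ (which would cost $2^{\Theta(q)}$ bits) but the transcript of a short sequence of ``corrections'' that lets Bob build a classical hypothesis $\sigma$ with $|\tr(E_b\sigma)-p_b|\le\delta$ for \emph{every} $b\in\{0,1\}^c$. Since the family $\{E_b\}_b$ is fixed in advance, Alice can compute this transcript from $\rho$ and $\{E_b\}_b$ alone; Bob, knowing $\{E_b\}_b$ and the transcript, replays it and outputs $p'_b:=\tr(E_b\sigma)$. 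The point is that a single transcript must be simultaneously good for all $b$, since in the one-way communication application Alice does not know which $b$ Bob holds.

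\textbf{The iteration and its reconstructibility.} Both parties initialize the hypothesis to the maximally mixed state $\sigma_0=I/2^q$, which costs no communication. Alice then loops: while some index $b^\ast$ has $|\tr(E_{b^\ast}\sigma_t)-p_{b^\ast}|>\delta$, she appends to the message the pair $(b^\ast,\tilde p_{b^\ast})$, where $\tilde p_{b^\ast}$ is $p_{b^\ast}$ rounded to precision $\delta/\poly(q)$; both parties then replace $\sigma_t$ by $\sigma_{t+1}=U(\sigma_t;E_{b^\ast},\tilde p_{b^\ast})$ for a fixed update rule $U$ that pushes the $E_{b^\ast}$-statistic of the hypothesis toward $\tilde p_{b^\ast}$. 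The crucial feature is that $U$ depends only on the current hypothesis and the transmitted pair, not on $\rho$, so Bob can reconstruct $\sigma_0,\sigma_1,\dots$ exactly and, when the loop halts, holds a hypothesis $\sigma_T$ that is $\delta$-accurate on all of $\{E_b\}_b$.

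\textbf{Bounding the transcript length.} The quantitative heart is a potential argument. Taking $\rho$ pure without loss of generality (purify it; this at most doubles $q$), let $F_t:=\langle\psi|\sigma_t|\psi\rangle$ with $\rho=|\psi\rangle\langle\psi|$, so $F_0=2^{-q}$ and $F_t\le1$; one shows that a correction performed on an index whose discrepancy exceeds $\delta$ multiplies $F$ by a factor $1+\Omega(\delta^2)$ (equivalently, one runs the same argument with the potential $S(\rho\,\|\,\sigma_t)\le q\ln 2$ and a Pinsker-type decrease). Hence the loop halts after $T=O(q/\delta^2)$ rounds, i.e.\ $T=O(q)$ for constant $\delta$. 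Each round contributes $c$ bits for the index plus $O(\log(1/\delta)+\log\poly(q))=O(\log q)$ bits for the value --- the finer precision being needed so that the composition of the $T$ updates does not drift away from $\rho$'s statistics by more than $\delta$ in total --- for a message of $O(q)\cdot O(c+\log q)=O(qc\log q)$ bits, as claimed.

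\textbf{The main obstacle.} The delicate step is exactly where \cite{GRdW08} improves on \cite{Aar05}: designing $U$ so that the potential strictly decreases when $p_{b^\ast}$ is an \emph{arbitrary} value in $[0,1]$, not only when it is near $0$ or $1$. In the extreme case one simply measures $\{E_{b^\ast},I-E_{b^\ast}\}$ on $\sigma_t$, post-selects on the likely outcome, and invokes the gentle-measurement / state-recovery lemma (Lemma~2.2 in \cite{Aar05}) to conclude that the hypothesis is corrected while barely changing; for intermediate $p_{b^\ast}$ this argument collapses, and one needs either a genuinely ``soft'' (matrix-multiplicative-weights-flavoured) update analysed via the relative-entropy potential, or a reduction of the intermediate case to the extreme case by amplifying $\rho\mapsto\rho^{\otimes k}$ with $k=O(\log(1/\delta))$ and using majority-vote measurements. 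Checking that the resulting update is both Bob-reconstructible and potential-decreasing, and then carefully bookkeeping the round count together with the precision of the transmitted values, is the bulk of the work; everything else is the routine bound above.
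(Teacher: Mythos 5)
Your proposal reproduces the skeleton of the Gavinsky--Regev--de Wolf argument (and of the paper's \cref{lem:replace}): a transcript of corrections $(b^\ast,\tilde p_{b^\ast})$, a hypothesis state initialized to the maximally mixed state that Bob can replay, and a potential/counting bound on the number of corrections. But the one step you defer --- constructing an update rule $U$ that provably gains a fixed multiplicative factor per correction when $p_{b^\ast}$ is an arbitrary value in $[0,1]$ --- is precisely the content of the proof, and as written there is a genuine gap there. The paper resolves it not with a soft multiplicative-weights update but by amplification: Alice works with $\rho'=\rho^{\otimes r}$ for $r=\Theta(\delta^{-2}\log(q/\delta))$, the test for index $b$ is the projector $M_b$ onto the eigenspace of the empirical-frequency observable $F_b=\frac1r\sum_j E_b^{(j)}$ with eigenvalues in $[\tilde p_b-\delta/2,\tilde p_b+\delta/2]$, and the update is renormalized projection of the current hypothesis (starting from $I/2^{rq}$) onto that window. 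Hoeffding makes $\rho'$ pass each such test with probability $1-2e^{-\delta^2 r/8}$ \emph{whatever} the value of $p_b$, a Markov-inequality argument shows any hypothesis flagged as bad passes with probability at most $1-\delta/4$, and the count follows from sandwiching the survival probability of the sequence of bad tests on the maximally mixed state between $(1-\delta/4)^t$ and $2^{-rq-1}$ (the latter via the quantum union bound applied to the $\rho'$-component), giving $t=O(\frac{q}{\delta^3}\log\frac{q}{\delta})$ corrections of $O(c+\log\frac1\delta)$ bits each.

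Your two suggested repairs do not close this gap as stated. The amplification you propose, $k=O(\log(1/\delta))$ copies with majority-vote tests, has the wrong parameters: since the number of corrections is $t=\Omega(q)$, the gentle-measurement/union-bound step needs each test to fail on the true (amplified) state with probability $O(1/t)$, i.e. $r=\Omega(\delta^{-2}\log t)=\Omega(\delta^{-2}\log(q/\delta))$ copies; with only $O(\log(1/\delta))$ copies the survival lower bound collapses and the sandwich argument fails. The matrix-multiplicative-weights route with a relative-entropy potential is plausible (it is how later shadow-tomography arguments go), but it is a different and substantially heavier argument, and you neither specify the update nor verify the claimed per-step gain $1+\Omega(\delta^2)$; note also that for the projection-based update the correct per-step separation is of order $\delta$ (window of width $\delta/2$ plus Markov), not something you may assume without proof. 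In short: correct high-level plan, same family of ideas as the paper, but the quantitative core --- amplified frequency tests, the Markov step, and the union-bound/mixed-state sandwich --- is missing.
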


Our first observation is that the dependence of the message size on the number of measurements in this technique is very good: using a deterministic message, the referee can guess the acceptance probabilities for \emph{all} $2^n$ inputs (this property is also noted in Section 5 in \cite{GRdW08}). The receiver in one-way communication only needs to use one value to estimate acceptance probability. Our idea is to use all the values to simulate quantum two-way-LOCC SMP protocols with not too many rounds.

However, we cannot directly apply the technique because during an LOCC measurement, a number of 2-value measurements will be performed sequentially on the post-measurement state from the previous measurement. To make the idea work, we show that the measurement outcomes at each LOCC iteration can be estimated by taking some ratios of values the referee can estimate from deterministic messages (\cref{lem:ratio}). Then, by carefully evaluating the error caused by the simulation, we have the desired result.

\subsubsection*{Separation between quantum one-way-LOCC and two-way-LOCC SMP protocols}

Bar-Yossef, Jayram and Kerenidis \cite{BYJK08} introduced the Hidden Matching problem to separate classical and quantum one-way communication complexity (this was the first exponential separation between them). By limiting the number of matchings by $\Theta(n)$, they also introduced Restricted Hidden Matching problem and showed an exponential separation between classical and hybrid SMP models.

We introduce a further variant of the Restricted Hidden Matching problem we call Double Restricted Hidden Matching problem, and make both parties send quantum messages, i.e., the problem is hard for hybrid SMP schemes. It is shown that the problem is easy if both quantum messages can be measured adaptively based on the other message, i.e., for quantum two-way-LOCC SMP protocols. Moreover, since we know quantum one-way-LOCC SMP protocols and hybrid SMP protocols are almost equivalent (\cref{thm:one-way_LOCC}), we show that the problem is hard for quantum one-way-LOCC SMP protocols.

\subsection{Related works}\label{sec:related}

Variants with LOCC measurements have also been studied for the complexity class $\QMA(2)$, which was first introduced by Kobayashi, Matsumoto, and Yamakami \cite{KMY03}. Harrow and Montanaro \cite{HM13} showed that, by making the verifier conduct the product test, which is a variant of the SWAP test, $\QMA(2)$ contains the extended class $\QMA(k)$ for any $k\geq 2$ where the verifier receives $k$ separate quantum states for certification. Let us denote $\QMA^\mathbb{M}(k)$ as the class $\QMA(k)$ with the verifier restricted to performing measurements from $\mathbb{M}$. As already noted, the SWAP test is a projection to the symmetric subspace and therefore the corresponding measurement operator is in SEP.

\begin{theorem}[Informal version from \cite{HM13}]
For $k \in \mathbb{N}$,
\[
    \QMA^\mathrm{SEP}(2) \supseteq \QMA(k).
\]
\end{theorem}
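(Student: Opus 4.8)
The plan is to follow the reduction of \cite{HM13}, keeping track of the structure of the measurement performed by the new verifier. Fix a language $L\in\QMA(k)$; by error reduction for $\QMA(k)$ (which can itself be carried out with the product test, as in \cite{HM13}) we may assume it has a verifier $V$ taking $k$ proofs, of $m$ qubits in total arranged in $k$ equal blocks, with completeness at least $1-2^{-n}$ and soundness at most $2^{-n}$. The new $\QMA(2)$ protocol is: Arthur asks Merlin for two $m$-qubit registers $A$ and $B$, each of which, in the honest case, should equal the $k$-tuple $|\psi_1\rangle\cdots|\psi_k\rangle$. Viewing $A$ and $B$ each as $k$ blocks $A_1,\dots,A_k$ and $B_1,\dots,B_k$, Arthur flips a biased coin: with probability $1-\mu$ he runs the \emph{product test} --- for each $i\in[k]$ a $\mathrm{SWAP}$ test between $A_i$ and $B_i$, accepting iff all $k$ of them report ``symmetric'' --- and with probability $\mu$ he runs $V$ on the $k$ blocks of $A$ and ignores $B$, where $\mu$ is a small constant fixed at the end.

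The point I would emphasize is that this measurement lies in $\mathrm{SEP}$. Its acceptance operator on $A\otimes B$ is
\[
    M \;=\; (1-\mu)\,\Pi_{\mathrm{prod}} \;+\; \mu\,\bigl(M_V\otimes I_B\bigr),
    \qquad
    \Pi_{\mathrm{prod}} \;=\; \bigotimes_{i=1}^{k}\Pi^{(i)}_{\mathrm{sym}},
\]
where $\Pi^{(i)}_{\mathrm{sym}}$ is the projector onto the symmetric subspace of $A_i\otimes B_i$ and $M_V$ is the acceptance operator of $V$ acting on register $A$. Now $M_V\otimes I_B$ is a product operator across the cut $A:B$, hence separable. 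Each $\Pi^{(i)}_{\mathrm{sym}}$ is separable across $A_i:B_i$ --- e.g., because $\Pi_{\mathrm{sym}}=\binom{d+1}{2}\,\E_{|\psi\rangle}\bigl[\,|\psi\rangle\!\langle\psi|^{\otimes 2}\bigr]$ ($d$ the block dimension) is a mixture of product operators --- and tensoring these over $i$ and regrouping the tensor factors shows $\Pi_{\mathrm{prod}}$ is separable across $A:B$. A convex combination of separable operators is separable, so $M$ is separable; this is exactly what is required for the measurement to be in $\mathrm{SEP}$ (it is the acceptance operator, not $I-M$, that must be separable here; see \cref{sec:measurement}).

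Completeness and soundness then go as in \cite{HM13}. If Merlin is honest and sets $A=B=|\psi_1\rangle\cdots|\psi_k\rangle$ (which carries no entanglement across $A:B$), every $\mathrm{SWAP}$ test passes and $V$ accepts with probability at least $1-2^{-n}$, so Arthur accepts with probability at least $1-\mu 2^{-n}$. For soundness, by convexity it suffices to consider a pure product state $|\phi\rangle_A\otimes|\phi'\rangle_B$, where each of $|\phi\rangle,|\phi'\rangle$ may be entangled among its own blocks. Writing $P_{\mathrm{test}}=2^{-k}\sum_{S\subseteq[k]}\tr(\rho_S\rho'_S)$ for the product-test acceptance probability (with $\rho_S,\rho'_S$ the reductions of $|\phi\rangle,|\phi'\rangle$ to the blocks in $S$), the soundness analysis of the product test in \cite{HM13} guarantees that if $P_{\mathrm{test}}\ge 1-\gamma$ then there is a product state $|\chi\rangle=|\chi_1\rangle\cdots|\chi_k\rangle$ with $1-|\langle\chi|\phi\rangle|^2=O(\gamma)$ --- for fixed $k$ even a crude Cauchy--Schwarz bound on $P_{\mathrm{test}}$ forces every single-block marginal of $|\phi\rangle$ to be $O(\gamma)$-close to pure, hence $|\phi\rangle$ close to a product state, and \cite{HM13} removes the dependence on $k$. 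Since $V$ accepts any state that is a product across the $k$ blocks, $|\chi\rangle$ included, with probability at most $2^{-n}$, and $|\phi\rangle$ is $O(\sqrt\gamma)$-close to $|\chi\rangle$ in trace distance, $V$ accepts $|\phi\rangle$ with probability at most $2^{-n}+O(\sqrt\gamma)$. Hence Arthur accepts with probability at most $(1-\mu)(1-\gamma)+\mu\bigl(2^{-n}+O(\sqrt\gamma)\bigr)$; optimizing over $\gamma\ge 0$ and then choosing $\mu$ to be a small enough constant makes this at least a fixed constant below the honest acceptance probability. This yields a $\QMA(2)$ protocol with a constant completeness--soundness gap and a separable acceptance operator, i.e.\ $L\in\QMA^{\mathrm{SEP}}(2)$ (and if one insists on the $2/3$--$1/3$ normalization, a standard amplification, again built from product tests and single-register verifications and hence preserving separability, applies).

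The step I expect to be the main obstacle is the soundness of the product test: quantifying how close to a genuine product state a $k$-partite pure state must be if it passes all $k$ $\mathrm{SWAP}$ tests with probability $1-\gamma$. The crude argument above suffices for any fixed $k$ but loses a factor of $2^{k}$; the dimension- and $k$-independent bound --- the technical core of \cite{HM13} --- requires a careful analysis of $2^{-k}\sum_{S\subseteq[k]}\tr(\rho_S^2)$ and its relation to the maximum overlap with product states. A secondary subtlety is the final amplification: naive majority voting introduces rejection operators $I-\Pi_{\mathrm{sym}}$, which are not separable (already for $k=1$ this is the antisymmetric-subspace projector), so the amplification must be phrased, as above, via product tests and single-register verifications placed in separate coin branches, keeping the acceptance operator a convex combination of separable operators.
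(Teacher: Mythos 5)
The paper never proves this statement---it is quoted as an informal consequence of \cite{HM13}---and your reconstruction follows exactly that route: two copies of the $k$-fold proof, a coin flip between the product test and a single-register run of the original verifier, with the key observation (which is precisely why the paper cites it in connection with SEP) that the acceptance operator $(1-\mu)\,\Pi_{\mathrm{prod}}+\mu\,(M_V\otimes I)$ is separable because each symmetric projector is a positive combination of product operators. The points you gloss---passing from the two-different-proof test probability $2^{-k}\sum_{S}\tr(\rho_S\rho'_S)$ to the identical-copy quantity via Cauchy--Schwarz, the dimension- and $k$-independent soundness bound for the product test, and the final adjustment to standard completeness/soundness thresholds (where naive mixing alone cannot reach $2/3$ vs.\ $1/3$)---are exactly the parts supplied by \cite{HM13}, so your proposal matches the intended argument.
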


In contrast, Brandao, Christandl and Yard \cite{BCY11}, and Li and Smith \cite{LS15} showed that if the verifier's ability is restricted to one-way LOCC measurements, the class is contained in $\QMA$. Note that, to our knowledge, it is still unknown whether $\QMA^{\mathrm{LOCC}}(k) \subseteq \QMA$ is or not.

\begin{theorem}[Informal version from \cite{BCY11,LS15}]
For $k \in \mathbb{N}$,
\[
    \QMA^{\mathrm{LOCC}_1}(k) \subseteq \QMA.
\]    
\end{theorem}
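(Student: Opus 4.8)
The plan is to simulate the $k$-prover protocol by a single $\QMA$ prover through a ``many copies, symmetrize, then de~Finetti'' reduction. The single prover is asked for many independent copies of each of the $k$ would-be proofs; the verifier randomly permutes the copies (which does not hurt the honest prover, whose state is a tensor power and hence already symmetric) and then runs the original one-way-LOCC verifier $V$ on each block of copies and takes a majority vote. A de~Finetti theorem for one-way adaptive local measurements---the technical heart, due to \cite{BCY11,LS15}---guarantees that, against such a measurement, a permutation-symmetric state looks almost like a convex mixture of i.i.d.\ block states, which pins a cheating prover down to (essentially) the single-block behaviour of $V$.

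Concretely, I would first amplify the given $\QMA^{\mathrm{LOCC}_1}(k)$ protocol. Repeating it in parallel on independent $k$-tuples of proofs and taking a majority keeps the verifier's measurement one-way LOCC across the $k$ proof registers---concatenate each prover's copies into one register and perform the repetitions' local measurements one prover at a time, in the original order---and, since parallel repetition amplifies $\QMA(k)$ even under cross-repetition entanglement, drives the soundness against product proofs down to any desired $2^{-N}$, $N=\poly(n)$, while keeping the proof size polynomial. Then the $\QMA$ verifier asks the prover for registers $R_j^{(i)}$ with $j\in[k]$ and $i\in[m]$, $m=\poly(n)$; the honest prover sends $\bigotimes_{j=1}^{k}\ket{\psi_j}^{\otimes m}$. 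The verifier applies a uniformly random $\pi\in S_m$ to the copy index simultaneously in all $k$ groups, making the received state symmetric across the blocks $B^{(i)}=(R_1^{(i)},\dots,R_k^{(i)})$; runs the amplified verifier $V$ independently on each block (legitimate, as $V$'s measurement is one-way LOCC inside a block); and outputs the majority of the resulting bits $c_1,\dots,c_m$. Completeness is immediate.

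For soundness, apply the de~Finetti theorem: since the state is symmetric across the $m$ blocks and the measurement is a product of two-outcome one-way-LOCC measurements on the blocks---so the de~Finetti error is governed by the number of outcomes, $2$, and not by the (huge) block dimension---the law of $(c_1,\dots,c_m)$ is $o(1)$-close, for $m$ a large enough polynomial, to that in which a single-block state $\omega$ on $R_1\otimes\cdots\otimes R_k$ is drawn from a prior $\mu$ and then the $c_i$ are i.i.d.\ Bernoulli$(\Pr[V\text{ accepts }\omega])$. If every $\omega$ in the support of $\mu$ had $\Pr[V\text{ accepts }\omega]\leq \tfrac{1}{2}-\Omega(1)$ on a no-instance, a Chernoff bound over the $m$ blocks would finish the proof; so it remains to bound the acceptance probability of $V$ on a possibly-entangled block state in terms of the (amplified) product-proof soundness.

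That last bound is where I expect the real difficulty, and where the one-way-LOCC hypothesis is essential. The acceptance operator of $V$ on a block is $E_V=\sum_{a}M_a^{(1)}\otimes E^{(a)}$, where $\{M_a^{(1)}\}$ is a POVM on $R_1$ and $E^{(a)}$ is the (again one-way-LOCC) acceptance operator of the residual $(k-1)$-register protocol; because $R_1$ is measured first and then kept only as classical data, one should be able to show---by induction on $k$, tracking how the post-measurement ensemble on $R_2\otimes\cdots\otimes R_k$ can be steered by the first prover---that $\max_\omega\tr(E_V\omega)$ exceeds $\max_{\text{product }\omega}\tr(E_V\omega)$ by at most a factor \emph{polynomial in the circuit size and independent of the register dimensions}. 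It is this dimension-freeness that makes the scheme work: it forces amplification and reduction to be carried out jointly (amplifying by a mere logarithmic factor already beats a polynomial loss, whereas any dimension-dependent loss would be self-defeating, since amplifying also blows up the dimension). By contrast, for the SWAP-/product-test measurements of \cite{HM13} that place $\QMA(k)$ inside $\QMA^{\mathrm{SEP}}(2)$, both the de~Finetti step and this structural step fail---these are precisely the coherent, non-LOCC measurements that can detect entanglement across the provers---which is consistent with $\QMA^{\mathrm{LOCC}}(k)\subseteq\QMA$ remaining open.
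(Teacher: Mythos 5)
There is a genuine gap, and it sits exactly where you flag ``the real difficulty.'' Your de~Finetti-over-blocks step, even granting a local-measurement de~Finetti theorem whose error depends only on the number of outcomes, only reduces the cheating prover to (a mixture of) i.i.d.\ block states $\omega^{\otimes m}$, where each block state $\omega$ is still an \emph{arbitrary entangled} state across the $k$ proof registers. So soundness of your $\QMA$ protocol rests entirely on the final ``structural'' claim: that for a one-way-LOCC acceptance operator $E_V=\sum_a M_a^{(1)}\otimes E^{(a)}$, the entangled value $\max_\omega \tr(E_V\omega)$ exceeds the product value by at most a polynomial factor that is \emph{independent of the register dimensions}. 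You do not prove this, and the sketched ``induction on $k$, tracking steering of the post-measurement ensemble'' does not obviously yield it: steering is precisely the mechanism by which entanglement helps against adaptive measurements, and no dimension-free bound of this kind is known. Note also that if such a bound were available, your whole copy-and-symmetrize apparatus would be unnecessary --- amplification of the product-proof soundness plus that bound would already give $\QMA^{\mathrm{LOCC}_1}(k)\subseteq\QMA$ --- which is a sign that the entire content of the theorem has been deferred to an unproven lemma.

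The statement in the paper is quoted from \cite{BCY11,LS15}, and the known proofs take a different route that avoids any dimension-free entangled-vs-product comparison. There, the single $\QMA$ prover is asked for a state in which (say) the second proof register is $m$-fold symmetrically extended, i.e.\ a state on $R_1\otimes R_2^{\otimes m}$; the verifier picks a random copy of $R_2$ and runs the original one-way-LOCC test on $R_1$ together with that copy. The key lemma is an information-theoretic de~Finetti/monogamy bound (via squashed entanglement and its faithfulness with respect to the one-way-LOCC norm): any $m$-extendable state on $R_1\otimes R_2$ is within $O\bigl(\sqrt{\log\dim(R_1)/m}\bigr)$ of a separable state \emph{in one-way-LOCC distinguishability}, so the cheating prover's acceptance probability is close to the product-proof soundness. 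The dimension dependence is only logarithmic, hence harmless for polynomial-length proofs with $m=\poly(n)$, and this is exactly where the $\mathrm{LOCC}_1$ hypothesis does its work. To repair your write-up you would either have to prove your structural lemma (which would be a new and strong result) or restructure the reduction along these extendability/monogamy lines.
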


Aaronson, Beigi, Drucker, Fefferman and Shor \cite{ABD+09} showed that 3-SAT can be solved by a $\QMA(k)$ protocol with $k=\Tilde{O}(\sqrt{n})$ and messages of length $O(\log n)$ exploiting the property of the SWAP test. Chen and Drucker \cite{CD10} showed that the result can be obtained even when the verifier's ability is restricted to BELL measurements, a subclass of one-way LOCC measurements. 

Recently, the power of incoherent measurements has been explored from sample complexity in the literature of quantum state learning (see, e.g., \cite{BCL20,HKP21,ACQ22,CLO22,CCHL22,CHLL22,CHL+23,CLL24,CGY24}). The tasks they considered include quantum state tomography \cite{HHJ+16,OW16}, state certification \cite{BOW19}, shadow tomography \cite{Aar18,HKP20}, and purity testing \cite{MdW16}.

Anshu, Landau and Liu \cite{ALL22} considered a task where two players
Alice and Bob who have multiple copies of unknown $d$-dimensional states $\rho$ and $\sigma$ respectively, and they want to estimate the inner product of $\rho$ and $\sigma$ up to additive error $\epsilon$ by classical communication. They showed that for this task, which they call distributed quantum inner product estimation, $\Theta( \max \{ \frac{1}{\epsilon^2}, \frac{\sqrt{d}}{\epsilon} \})$ copies are sufficient and necessary in multiple measurement and communication settings. One may consider that their technique for the lower bound on sample complexity can be applied in our setting, but we cannot. To prove it, they derive a connection with optimal cloning by Chiribella \cite{Chi10} and exploit the property that the states $\rho$ and $\sigma$ can be any states over the Haar measure. Although the SWAP test in the quantum fingerprinting protocol is essentially inner product estimation, the encoding of quantum messages on their inputs is discrete. Hence there could be a cleverer encoding compared to Haar-random states, and we are unable to use techniques from \cite{ALL22}.

\subsection{Outlook}

In this paper, we introduce and investigate the setting in which the referee can conduct only incoherent measurements in the SMP model. We show that quantum one-way-LOCC SMP models are almost equivalent to classical and quantum hybrid SMP models and thus $\EQ_n$ is hard for quantum one-way-LOCC SMP models. Moreover, if we restrict the number of outcomes and rounds, it is also hard for quantum two-way-LOCC SMP models. We also present a separation between quantum one-way and two-way-LOCC SMP models by a relation problem.

However, despite our efforts, it is still unknown whether there exists a quantum fingerprinting protocol without coherent measurements. We believe that a lower bound would hold with more general two-way LOCC measurements. In quantum information theory, separable measurements are often used to obtain lower bounds for LOCC measurements (this is also the case in \cite{ALL22}). This is because separable measurements include LOCC measurements and they are mathematically simple (see \cref{sec:measurement}). We conjecture that equality remains hard even in the SMP setting where the referee can make measurements where both the accept and reject operators are separable. We use SEP-BOTH to denote this class of measurements.\footnote{This is different from the class of measurements SEP described earlier, in which only the accept operator is required to be separable; the SWAP test is in SEP, but not SEP-BOTH.}

\begin{conjecture}\label{conj}
    $\Q^{||,\mathrm{LOCC}}(\EQ_n) \geq \Q^{||,\mathrm{SEP\text{-}BOTH}}(\EQ_n) \geq \Omega(\sqrt{n})$.
\end{conjecture}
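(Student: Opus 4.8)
\medskip

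\noindent The first inequality is immediate from the definitions in \cref{sec:measurement}: any two-outcome measurement $\{M,I-M\}$ implementable by LOCC has both $M$ and $I-M$ separable (each is a sum, over the classical-communication branches, of products of local Kraus pairs), so $\mathrm{LOCC}\subseteq\mathrm{SEP\text{-}BOTH}$ as classes of referee measurements, and enlarging the class of allowed measurements can only lower the communication cost. Thus the real content of \cref{conj} is the bound $\Q^{||,\mathrm{SEP\text{-}BOTH}}(\EQ_n)\geq\Omega(\sqrt{n})$, and one would ideally want the stronger statement $\R\Q^{||}(R)\leq\poly\big(\Q^{||,\mathrm{SEP\text{-}BOTH}}(R)\big)$, an analogue of \cref{thm:intro_one-way_LOCC} for the full class of separable measurements.

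\medskip

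\noindent The plan is to mimic the proof of \cref{thm:two-way_LOCC}: turn a $\mathrm{SEP\text{-}BOTH}$ SMP protocol into a classical one using the deterministic-message technique of \cref{lem:replace}. Say Alice sends the $q$-qubit state $\rho_x$, Bob sends $\sigma_y$, and the referee applies $\{M,I-M\}$ with both operators separable on the $2q$-qubit system. Writing $M=\sum_{i}\lambda_i\,\alpha_i\otimes\beta_i$ as a conic combination of product states, the acceptance probability is $\sum_i\lambda_i\tr(\alpha_i\rho_x)\tr(\beta_i\sigma_y)$ --- a bilinear form in the local quantities $\tr(\alpha_i\rho_x)$ and $\tr(\beta_i\sigma_y)$. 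Alice applies \cref{lem:replace} to the $2$-value POVMs $\{\alpha_i,I-\alpha_i\}$ and sends a deterministic message letting the referee estimate all $\tr(\alpha_i\rho_x)$ up to additive error $\delta$; Bob does likewise for the $\tr(\beta_i\sigma_y)$; the referee then outputs its estimate of the acceptance probability. Since Carath\'eodory gives $2^{O(q)}$ product terms, each message has length $O(q^2\log q)$ when $\delta$ is a constant, and the protocol is \emph{deterministic}, so we would conclude $\D^{||}(\EQ_n)=\widetilde{O}(q^2)$ --- hence $q=\Omega(\sqrt{n})$, and in fact $q=\widetilde\Omega(n)$ if $M$ admits a $\poly(q)$-term separable decomposition.

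\medskip

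\noindent The main obstacle --- and the reason this is only a conjecture --- is the error analysis. The referee's estimate of $\sum_i\lambda_i\tr(\alpha_i\rho_x)\tr(\beta_i\sigma_y)$ accumulates the per-term errors: with additive error $\delta$ on each local estimate the total error is $\Theta(\delta)\cdot\sum_i\lambda_i\tr(\beta_i\sigma_y)=\Theta(\delta)\cdot\tr\!\big(M(I_A\otimes\sigma_y)\big)\le\Theta(\delta)\cdot\tr(M)$, and $\tr(M)$ can be as large as $2^{2q}$. Forcing the total error below a constant then needs $\delta=2^{-\Omega(q)}$, and the message length in \cref{lem:replace} grows polynomially in $1/\delta$, wiping out the savings. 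Equivalently, one would need to replace $M$ by a separable operator $M'$ with only $\poly(q)$ product terms --- equivalently, of small trace --- that agrees with $M$ on all fingerprint states $\rho_x\otimes\sigma_y$, while keeping $I-M'$ separable too. The natural moves fail here: matrix approximate Carath\'eodory (matrix Chernoff) needs $\poly(q)\cdot\tr(M)^2$ samples, and compressing onto a local subspace $\Pi_A\otimes\Pi_B$ --- which does preserve separability of both $M$ and $I-M$ and all acceptance probabilities, once $\Pi_A,\Pi_B$ contain the supports of the fingerprints --- need not reduce the dimension, since those supports can span all of $\mathbb{C}^{2^q}$. This is precisely where the $\mathrm{SEP}$/$\mathrm{LOCC}$ gap bites; it is of the same flavor as the long-open question of whether $\QMA^{\mathrm{LOCC}}(2)\subseteq\QMA$. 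Reassuringly, the obstruction is consistent with the quantum fingerprinting protocol, whose accept operator is the symmetric-subspace projector of rank $\sim 2^{2q-1}$: it survives because its \emph{reject} operator, the antisymmetric projector, is not separable, so it lies outside $\mathrm{SEP\text{-}BOTH}$. A provable special case would therefore be: any $\mathrm{SEP\text{-}BOTH}$ SMP protocol for $\EQ_n$ whose accept operator has trace $\poly(q)$ requires $q=\widetilde\Omega(n)$.

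\medskip

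\noindent A complementary line, closer in spirit to the round-bounded argument of \cref{thm:two-way_LOCC}, is to relax separability through the semidefinite hierarchies for the separable cone (PPT symmetric extensions, or the associated sum-of-squares relaxations): level-$\ell$ of the hierarchy converges to $\mathrm{SEP}$ as $\ell\to\infty$, and one may hope that a ``level-$\ell$'' $\mathrm{SEP\text{-}BOTH}$ protocol can be simulated with $O(\ell)\cdot\poly(q)$ classical communication, just as $k$ LOCC rounds are, with the conjecture following from a quantitative rate of convergence of the hierarchy on the structured instances at hand. Obtaining such a level-uniform simulation --- or, dually, a lower bound that does not degrade with the number of rounds --- is, I expect, the crux; without it, the bounded-outcome and bounded-round hypotheses of \cref{thm:two-way_LOCC} seem hard to remove.
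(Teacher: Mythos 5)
You have not proved the statement, and neither does the paper: \cref{conj} is stated as a conjecture precisely because no argument for $\Q^{||,\mathrm{SEP\text{-}BOTH}}(\EQ_n) = \Omega(\sqrt{n})$ is known. The only part of your write-up that is an actual proof is the first inequality, $\Q^{||,\mathrm{LOCC}}(\EQ_n) \geq \Q^{||,\mathrm{SEP\text{-}BOTH}}(\EQ_n)$, which indeed follows from the inclusion $\mathrm{LOCC} \subset \mathrm{SEP\text{-}BOTH}$ recorded in \cref{sec:measurement} together with the observation that enlarging the referee's measurement class cannot increase the cost. The remainder is a plan plus a correct diagnosis of why the plan fails: applying \cref{lem:replace} to a separable decomposition $M = \sum_i \lambda_i\, \alpha_i \otimes \beta_i$ founders because Carath\'eodory-type decompositions have exponentially many terms and the per-term errors aggregate with weight controlled by $\tr(M)$ (or $\tr(M(I\otimes\sigma_y))$, which can already be $2^{q}$), forcing $\delta = 2^{-\Omega(q)}$ and destroying the $\poly(q)$ message length. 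This is consistent with the paper's own assessment: the round- and outcome-bounded hypotheses in \cref{thm:two-way_LOCC} are exactly what keeps the number of estimated quantities and the accumulated error under control, and \cref{remark} notes that removing them toward \cref{conj} appears to require a genuinely new strategy, likely one specific to $\EQ_n$.

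Two smaller cautions. First, your ``provable special case'' (accept operators of trace $\poly(q)$) and the claimed strengthening $q = \widetilde{\Omega}(n)$ under a $\poly(q)$-term decomposition are asserted, not proved; if you want to keep them, you must verify that the referee's simulation from the two deterministic messages really decides $\EQ_n$ within the error budget, i.e.\ redo the analogue of the error bookkeeping in the proof of \cref{thm:two-way_LOCC} for the bilinear estimator, and then invoke the deterministic SMP lower bound for $\EQ_n$. Second, your aside that the SWAP test lies in $\mathrm{SEP}$ but outside $\mathrm{SEP\text{-}BOTH}$ matches the paper's footnote and is a good sanity check, but it does not by itself give any lower bound. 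As it stands, the core inequality $\Q^{||,\mathrm{SEP\text{-}BOTH}}(\EQ_n) \geq \Omega(\sqrt{n})$ remains open, so your submission should be read as a (reasonable) research program, not a proof.
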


\subsection{Organization}

In \cref{sec:prel}, we introduce the preliminary concepts used in our results. In \cref{sec:one-way_locc}, we prove the lower bound against one-way LOCC measurements. In \cref{sec:two-way_locc}, we prove the lower bound against two-way LOCC measurements. In \cref{sec:separation}, we present a separation between quantum one-way-LOCC and two-way-LOCC SMP protocols by a relation problem. 
\section{Preliminaries}\label{sec:prel}

We assume that readers are familiar with basic notions of quantum computation and information. We refer to \cite{NC10,Wat18,dW19} for standard references.

For a Hilbert (finite-dimensional complex Euclidean) space $\mathcal{H}$ whose dimension is $d$, $\mathcal{B}(\mathbb{C}^d)$ and $\mathcal{D}(\mathbb{C}^d)$ denote the sets of pure and mixed states over $\mathcal{H}$ respectively.

Here are basic probabilistic inequalities.

\begin{fact}[Markov's inequality]\label{fact:markov}
    For a nonnegative random variable $X$ and $a>0$,
    \[
        \Pr[X \geq a] \leq \frac{\E[X]}{a}.
    \]
\end{fact}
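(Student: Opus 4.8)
The plan is to prove the inequality via the standard indicator-variable argument. First I would introduce the indicator random variable $\mathbf{1}[X \geq a]$, which equals $1$ on the event $\{X \geq a\}$ and $0$ otherwise, and establish the pointwise inequality
\[
    a \cdot \mathbf{1}[X \geq a] \leq X.
\]
This I would verify by a two-case check: on the event $\{X \geq a\}$ the left-hand side equals $a$ while the right-hand side is at least $a$; on the complementary event $\{X < a\}$ the left-hand side is $0$ while the right-hand side is nonnegative, using the hypothesis $X \geq 0$.

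Next I would take expectations of both sides, applying linearity and monotonicity of expectation together with $\E[\mathbf{1}[X \geq a]] = \Pr[X \geq a]$, to get $a \Pr[X \geq a] \leq \E[X]$. Dividing through by $a > 0$ yields the claim. An essentially equivalent route is to write $\E[X] = \int_0^\infty t \, d\mu_X(t)$ for the distribution $\mu_X$ of $X$, discard the contribution of the integral over $[0,a)$ (which is nonnegative since $X \geq 0$), and bound $t \geq a$ on the remaining range, giving $\E[X] \geq a \Pr[X \geq a]$ directly.

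There is no genuine obstacle here; the argument is routine. The only points meriting a word of care are that nonnegativity of $X$ is exactly what licenses discarding the $[0,a)$ part (equivalently, what makes the pointwise bound hold on $\{X < a\}$), and that $a > 0$ is needed in order to divide through at the final step.
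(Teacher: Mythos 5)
Your proof is correct: the pointwise bound $a\cdot\mathbf{1}[X\geq a]\leq X$ (valid precisely because $X\geq 0$) followed by taking expectations and dividing by $a>0$ is the standard argument, and the integral variant you sketch is equivalent. The paper states Markov's inequality as a background fact without proof, so there is nothing to compare against; your argument is exactly the canonical one and is complete.
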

\begin{fact}[Hoeffding's bound \cite{Hoe63}]\label{fact:hoeffding}
    Let $x_1,\ldots,x_t$ be independent random variables such that $0 \leq x_i \leq 1$ for $i \in [t]$. Then,
    \[
        \mathrm{Pr}\left[ \left| \frac{1}{t} \left(\sum_{i=1}^t x_i - \mathbb{E} \left[ \sum_{i=1}^t x_i \right] \right) \right| \geq \delta \right] \leq 2e^{-2 \delta^2 t}.
    \]
\end{fact}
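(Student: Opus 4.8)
The plan is to prove Hoeffding's bound by the standard exponential-moment (Chernoff) method together with Hoeffding's lemma for bounded variables; since this is a textbook inequality, one could alternatively just invoke \cite{Hoe63}, but here is the argument I would write out. First I would reduce the two-sided statement to a one-sided one. Setting $S = \sum_{i=1}^t\bigl(x_i - \E[x_i]\bigr)$, a union bound reduces the claim to showing $\Pr[S \geq \delta t] \leq e^{-2\delta^2 t}$ and $\Pr[S \leq -\delta t] \leq e^{-2\delta^2 t}$ separately; the second follows from the first applied to the variables $1 - x_i$ (which also lie in $[0,1]$), so only the upper tail needs work.

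For the upper tail I would fix a parameter $s > 0$ and apply Markov's inequality (\cref{fact:markov}) to the nonnegative random variable $e^{sS}$, using independence to factor the expectation:
\[
    \Pr[S \geq \delta t] = \Pr\bigl[e^{sS} \geq e^{s\delta t}\bigr] \leq e^{-s\delta t}\,\E\bigl[e^{sS}\bigr] = e^{-s\delta t}\prod_{i=1}^t \E\bigl[e^{s(x_i - \E[x_i])}\bigr].
\]
The one nontrivial ingredient is Hoeffding's lemma: if $Y$ has $\E[Y] = 0$ and $a \leq Y \leq b$ almost surely, then $\E[e^{sY}] \leq e^{s^2(b-a)^2/8}$. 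I would prove this by using convexity of $t \mapsto e^{st}$ on $[a,b]$ to get the pointwise bound $e^{sY} \leq \frac{b-Y}{b-a}e^{sa} + \frac{Y-a}{b-a}e^{sb}$, taking expectations, and — after substituting $p = -a/(b-a)$ and $u = s(b-a)$ — rewriting the resulting bound as $e^{\phi(u)}$ with $\phi(u) = -pu + \log(1 - p + pe^u)$. One checks $\phi(0) = \phi'(0) = 0$ and $\phi''(u) = q(u)(1-q(u)) \leq 1/4$ where $q(u) = pe^u/(1-p+pe^u) \in [0,1]$, so a second-order Taylor expansion gives $\phi(u) \leq u^2/8$, which is the claimed bound.

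Applying Hoeffding's lemma with $Y = x_i - \E[x_i] \in [-\E[x_i],\, 1 - \E[x_i]]$, an interval of length $1$, yields $\E[e^{s(x_i - \E[x_i])}] \leq e^{s^2/8}$, hence $\Pr[S \geq \delta t] \leq e^{-s\delta t + s^2 t/8}$. Optimizing the exponent over $s$ (the minimum is at $s = 4\delta$) gives $\Pr[S \geq \delta t] \leq e^{-2\delta^2 t}$; combining this with the matching lower-tail bound via the union bound produces the factor of $2$ and completes the proof. The main obstacle is isolated entirely in Hoeffding's lemma — establishing $\phi'' \leq 1/4$ and the Taylor step cleanly — while the reduction to one side, the Chernoff step, and the optimization over $s$ are all routine.
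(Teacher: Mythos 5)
Your proof is correct: the reduction to the upper tail, the Chernoff/Markov step, Hoeffding's lemma via convexity and the bound $\phi''(u)=q(u)(1-q(u))\leq 1/4$, and the optimization at $s=4\delta$ are all sound and give exactly the stated bound with the factor $2$ from the union bound. The paper itself gives no proof of this fact (it simply cites Hoeffding's original paper), so invoking the citation would have sufficed, and what you wrote is the standard textbook argument behind that citation.
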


\subsection{Communication complexity}
Let us recall some basic definitions of quantum communication complexity. As standard references, we refer to \cite{KN96,RY20} for classical communication complexity and \cite{BCMdW10} for quantum communication complexity and the simultaneous message passing (SMP) model.

The goal in communication complexity is for Alice and Bob to compute a function $F : \mathcal{X} \times \mathcal{Y} \to \{0,1,\perp\}$. We interpret $1$ as ``accept'' and $0$ as ``reject'' and mostly consider $\mathcal{X}=\mathcal{Y}=\{0,1\}^n$. In the computational model, Alice receives an input $x \in \mathcal{X}$ (unknown to Bob), and Bob receives an input $y \in \mathcal{Y}$ (unknown to Alice) promised that $(x,y) \in \mathsf{dom}(F)=F^{-1}(\{0,1\})$. We say $F$ is a total function if $F(x,y) \in \{0,1\}$ for all $x \in \mathcal{X}$ and $y \in \mathcal{Y}$, and $F$ is a partial function otherwise. The bounded error communication complexity of $F$ is defined as the minimum cost of classical or quantum communication protocols to compute $F(x,y)$ with high probability, say $\frac{2}{3}$.

In some settings, it is useful to consider multiple correct outputs for each input. We call such problems relational problems and denote them as $R \subseteq X \times Y \times Z$ where $(x,y) \in X \times Y$ is an input and $z \in Z$ such that $(x,y,z) \in R$ are correct outputs. We can consider a relational problem as a generalization of functions. This is because from a function $F : X \times Y \rightarrow \{0,1, \perp \}$, we can construct a relation $R = \{ (x,y,F(x,y)) \}$ for all $x \in X$ and $y \in Y$ such that $(x,y) \in F^{-1}(\{0,1\})$. Therefore, we state some of our statements for relations, and they also hold for functions.

The simultaneous message passing (SMP) model is a specific model of communication protocols. In this model, Alice and Bob send a single (possibly deterministic, randomized or quantum) message to a referee. The goal of the referee is to output $z$ such that $(x,y,z) \in R$ with high probability, say at least $\frac{2}{3}$. The complexity measure of the protocol is the total amount (bits or qubits) of messages the referee receives from Alice and Bob.  

In this paper, $\D^{||}(R)$ denotes the classical deterministic SMP communication complexity for $R$, $\R^{||}(R)$ denotes the classical randomized SMP communication complexity for $R$, and $\Q^{||}(R)$ denotes the quantum SMP communication complexity for $R$. We also consider a hybrid SMP model in which Alice or Bob sends a quantum message and the other party sends a classical message, and let us denote by $\R\Q^{||}(R)$ the hybrid SMP communication complexity for $R$. There is no shared randomness or entanglement between the parties in these definitions, although we will later introduce definitions with shared randomness. We also denote by $\Q^{||,\mathbb{M}}(R)$ the complexity of the quantum SMP for $R$ where the referee is restricted to performing measurements from $\mathbb{M}$. Several measurement operators will be formally defined in \cref{sec:measurement}.

A basic function considered in communication complexity is the equality function $\EQ_n:~\{0,1\}^n\times\{0,1\}^n\rightarrow \{0,1\}$, 
which is defined as $\EQ_n(x,y)=1$ if $x=y$ and $0$ otherwise.

Let us review the construction of the quantum fingerprint in \cite{BCWdW01}. For a constant $c$, let $E : \{0,1\}^n \to \{0,1\}^N$, where $N=O(n)$, be an error correcting code such that $E(x)$ and $E(y)$ have Hamming distance greater than $cN$ for any distinct $x$ and $y$. Let $E_i(x)$ be the $i$-th bit of $E(x)$. For $x \in \{0,1\}^n$, the quantum fingerprint is defined as $\ket{h_x} = \frac{1}{\sqrt{N}} \sum_i \ket{i}\ket{E_i(x)}$ \cite{BCWdW01}. We can alternatively regard $\ket{h_x}$ as a superposition of hash functions. Let us consider the protocol where Alice and Bob create $\ket{h_x}$ and $\ket{h_y}$ on inputs $x$ and $y$ respectively, and send them to the referee. Since the acceptance probability of the SWAP test is $\frac{1}{2} + \frac{|\braket{h_x | h_y}|^2}{2}$, the referee can distinguish $x=y$ or $x \neq y$ using constantly many copies of quantum fingerprints $\ket{h_x}$ and $\ket{h_y}$ with high probability \cite{BCWdW01}.

Let us recall the formal definition of the standard quantum one-way communication complexity.

\begin{definition}[Quantum one-way communication complexity, $\Q^1(R)$]\label{def:Q_one-way}
    Given a relation $R \subseteq X \times Y \times Z$, Alice has a part of the input $x \in X$ and Bob has the other part of the input $y \in Y$. Alice produces a quantum state $\rho_x$ and sends it to Bob. Then, Bob performs a POVM $\{M^y_{z}\}_{z \in Z}$ on $\rho_x$. We say that a quantum one-way communication protocol computes a relation $R \subseteq X \times Y \times Z$, if for all inputs $(x,y) \in X \times Y$ the measurement outputs $z$ such that $(x,y,z) \in R$ with high probability (say $\frac{2}{3}$). The cost of the protocol is the size of the message $\rho_x$ and we denote by $\Q^1(R)$ the minimum cost for all such protocols to compute $R$.
\end{definition}

We denote by $\Q^{1,*}(R)$ a variant of quantum one-way communication complexity where Alice and Bob share (unlimited) prior entanglement. We also consider classical one-way communication complexity and denote by $\D^1(R)$ (${\rm resp.}$ $\R^1(R)$) the classical deterministic (${\rm resp.}$ randomized) one-way communication complexity for $R$.

Newman \cite{New91} showed that any public-coin randomized communication protocol can be converted into a small private-coin randomized protocol allowing for a small constant error. For a function $F:X \times Y \rightarrow \{0,1\}$ and $c>0$, let us denote by $\R_\gamma(F)$ the communication complexity with private randomness and error $\gamma$, and let us denote by $\R_\gamma^{pub}(F)$ the communication complexity with public unlimited randomness and error $\gamma$. 

\begin{fact}[\cite{New91}]
    Let $F: X \times Y \rightarrow \{0,1\}$ be a function. For every $\epsilon>0$ and $\delta>0$, $\R_{\epsilon+\delta} (F) \leq \R_\epsilon^{pub} (F) + O(\log (\log |X| + \log |Y|) + \log (\frac{1}{\delta}))$.
\end{fact}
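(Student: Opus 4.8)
The plan is to carry out the standard probabilistic argument that trades a long public random string for a short private one. Fix a public-coin protocol $\Pi$ computing $F$ with error at most $\epsilon$ and communication cost $\R_\epsilon^{pub}(F)$, and write $\Pi_r(x,y)$ for its output when the public random string is $r$, so that $\Pr_r[\Pi_r(x,y)\neq F(x,y)]\leq\epsilon$ for every input $(x,y)\in X\times Y$. The goal is to produce a \emph{fixed} short list $r_1,\dots,r_t$ of random strings such that for every input, strictly fewer than an $(\epsilon+\delta)$-fraction of the $r_i$ cause $\Pi$ to err. Given such a list, the new private-coin protocol has both parties hard-wire $r_1,\dots,r_t$, Alice uses private randomness to pick $i\in[t]$ uniformly and prepends $i$ to her first message, and then the two parties simulate $\Pi_{r_i}(x,y)$; this errs with probability less than $\epsilon+\delta$ on every input and costs only $\lceil\log_2 t\rceil$ bits more than $\Pi$.

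To obtain the list, sample $r_1,\dots,r_t$ independently from the public-coin distribution of $\Pi$. Fix an input $(x,y)$ and let $Z_1,\dots,Z_t$ be the indicator random variables of the events $\Pi_{r_i}(x,y)\neq F(x,y)$; these are independent, take values in $[0,1]$, and have mean at most $\epsilon$, so Hoeffding's bound (\cref{fact:hoeffding}) gives
\[
    \Pr\left[\frac{1}{t}\sum_{i=1}^{t} Z_i \geq \epsilon+\delta\right] \leq 2e^{-2\delta^2 t}.
\]
Choosing $t=\Theta\left(\frac{\log|X|+\log|Y|}{\delta^2}\right)$ large enough that the right-hand side is strictly less than $\frac{1}{|X|\,|Y|}$, a union bound over the at most $|X|\,|Y|$ inputs shows that with positive probability the sampled list is good for all inputs simultaneously; in particular such a list exists, and we hard-wire it.

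It remains to tally the cost and error. The simulated protocol $\Pi_{r_i}$ has the same communication as $\Pi$, and Alice spends an extra $\lceil\log_2 t\rceil$ bits to announce $i$. Since $\log_2 t = \log_2(\log|X|+\log|Y|)+2\log_2(1/\delta)+O(1)$, the total is $\R_\epsilon^{pub}(F)+O(\log(\log|X|+\log|Y|)+\log(1/\delta))$, and by construction the error on every input is at most $\epsilon+\delta$, which is exactly the statement. I do not expect any real obstacle here: the argument is routine, and the only points needing a little care are choosing the list length $t$ so that the union bound is nontrivial while $\log_2 t$ stays of the claimed order, and the trivial edge cases where $|X|$ or $|Y|$ equals $1$ (read the logarithms as $\max(\cdot,1)$).
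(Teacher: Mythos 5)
Your proof is correct and follows the standard probabilistic argument (sample $t$ strings, Hoeffding, union bound, hard-wire the list, send the index with private coins), which is exactly the approach the paper uses when it reproves this statement in its hybrid-SMP variant (\cref{lem:newman}). No issues to flag.
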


\subsection{Measurement in quantum mechanics}
In this subsection, let us recall the definitions of measurements on quantum states.

Quantum measurements are described by a collection of $\{M_m\}_m$ of measurement operators. These are operators acting on the state space of the system being measured. The index $m$ refers to the measurement outcomes that may occur in the experiment. If the state of the quantum system is $\rho$ immediately before the measurement, then the probability that the result $m$ occurs is given by $\mathrm{tr}(M_m^\dagger M_m \rho)$. The state after the measurement is $\frac{M_m \rho M_m^\dagger}{\mathrm{tr}(M_m^\dagger M_m \rho)}$. The measurement operators satisfy the completeness equation, $\sum_m M_m^\dagger M_m = I$.

When we do not care about residual states after measurements, we have a simpler description. Let us define $E_m:=M_m^\dagger M_m$. The collection $\{E_m\}$ is sufficient to determine the probabilities $\mathrm{tr}(E_m\rho)$ of measurement outcomes. This is called a POVM (Positive Operator-Valued Measurement): A POVM on $\mathcal{D}(\mathbb{C}^d)$ with $n$ outcomes is a collection $\mathbf{E}=\{E_1,\ldots,E_n\}$ of non-negative operators satisfying $\sum_i E_i = I$. We denote the set of POVMs on $\mathcal{D}(\mathbb{C}^d)$ with $n$ outcomes by $\mathcal{P}(d,n)$. Given two POVMs $\mathbf{E}=\{E_i\},\mathbf{F}=\{F_i\} \in \mathcal{P}(d,n)$, their convex combination $p \mathbf{E} + (1-p) \mathbf{F}$ is the POVM with the $i$th element given by $p E_i + (1-p) F_i$. A projective measurement is a POVM whose effects are orthogonal projectors. We denote by $\mathbb{P}(d,n)$ the set of projective POVM, which is a subset of $\mathcal{P}(d,n)$.

There are two techniques to simulate quantum measurements. The first is to manipulate quantum measurements randomly by considering a convex combination of POVMs $\{ \mathbf{N}_k \}$, $\mathbf{N}=\sum_k p_k \mathbf{N}_k$. Classical post-processing of a POVM $\mathbf{N}$ is a strategy in which, upon obtaining an output $j$, one produces the final output $i$ with probability $q(i|j)$. For a given post-processing strategy $q(i|j)$, this procedure gives a POVM $\mathcal{Q}(\mathbf{N})$ with elements given by $[\mathcal{Q}(\mathbf{N})]_i := \sum_j q(i|j)N_j$. We say that a POVM $\mathbf{M} \in \mathcal{P}(d,n)$ is PM-simulable if and only if it can be realized by classical randomization followed by classical postprocessing of some projective measurements.

We say $\{M_i\}_{i \in [N] }$ are 2-value POVM operators if $\{M_i,I-M_i\}$ is a valid 2-value POVM for $i \in [N]$, i.e., $0 \leq M_i \leq I$. For a 2-value POVM operator $M$, we say the corresponding 2-value POVM $\{M,I-M\}$ on some quantum state succeeds if the measurement outcome corresponds to $M$. 

\subsection{Classes of measurements on bipartite states}\label{sec:measurement}

Let us formally introduce some specific classes of 2-value POVMs acting on bipartite states. See Appendix C in \cite{HM13} for a reference. Each class of measurement operators describes operators on $\mathbb{C}^d \otimes \mathbb{C}^d$, and for a 2-value POVM $\{M, I-M\}$, $M$ corresponds to ``accept'' and $I-M$ corresponds to ``reject''.

\begin{definition}
    We say a 2-value POVM $\{M, I-M\} \in \mathrm{BELL}$ if $M$ can be expressed as 
    \[
        M = \sum_{(i,j)\in S} \alpha_i \otimes \beta_j,
    \]
    where $\sum_i \alpha_i = I$ and $\sum_j \beta_j = I$, and $S$ is a set of pairs of indices.
\end{definition}

In other words, with BELL measurements, systems are measured locally, and the verifier accepts based on the measurement results.

Let us next introduce some classes where adaptive measurements are allowed. LOCC is the abbreviation for local operations and classical communication.
\begin{definition}
     We say a 2-value POVM $\{M, I-M\} \in \mathrm{LOCC}_1$ if $M$ can be realized by measuring the first system and then choosing a measurement on the second system conditional on the result of the first measurement.  Such $M$ can be written as
     \[
        M = \sum_i \alpha_i \otimes M_i,
     \]
    where $\sum_i \alpha_i = I$ and $0\leq M_i \leq I$ for each $i$.
\end{definition}

\begin{definition}
     We say a 2-value POVM $\{M, I-M\} \in \mathrm{LOCC}$ is if $M$ that can be realized by alternating partial measurements on the two systems a finite number of times, choosing each measurement conditioned on the previous outcomes. An inductive definition is that M is in $\mathrm{LOCC}$ if there exist operators $\{E_i\}, \{M_i\}$, with $\sum_i E_i\leq I$ and each $M_i\in{\rm LOCC}$, such that either $M = \sum_i (\sqrt{E_i}\otimes I)M_i(\sqrt{E_i}\otimes I)$ or $M = \sum_i (I\otimes \sqrt{E_i})M_i(I\otimes \sqrt{E_i})$.   For the base case, it suffices to take $I\in {\rm LOCC}$.
\end{definition}

We finally introduce a set of operators that can be written as a sum of separable operators.

\begin{definition}
     We say a 2-value POVM $\{M, I-M\} \in \mathrm{SEP}$ if
    \begin{equation*}
        M = \sum_i \alpha_i \otimes \beta_i
    \end{equation*}
    for some positive semidefinite matrices $\{\alpha_i\},\{\beta_i\}$.
\end{definition}

\begin{definition}
    We say a 2-value POVM $\{M, I-M\} \in \mathrm{SEP\text{-}BOTH}$ if $M$ and $I-M$ both have the form $\sum_i\alpha_i\otimes\beta_i$.
\end{definition}

\begin{definition}
    $\mathrm{ALL}$ is any $M$ such that $0 \leq M \leq I$.
\end{definition}

Note that SEP-BOTH is a natural relaxation of LOCC
because they preserve the property that both $M$ and $I-M$ must be
realizable through local operations and classical communication.  On
the other hand, SEP is more natural when we consider $M$ by
itself and do not wish to consider additional constraints on $I-M$.

These sets satisfy the following inclusions

$$
\begin{array}{ccccccccccc}
{\rm BELL} &\subset& {\rm LOCC}_1 &\subset& {\rm LOCC}& \subset &
\text{SEP-BOTH} & \subset & {\rm SEP} & \subset & {\rm ALL}.
\end{array}
$$

\subsection{Quantum LOCC SMP protocols}

In this section, let us present the definitions of quantum one-way-LOCC and two-way-LOCC SMP protocols more explicitly. For clarity, let us decompose the referee into the two referees $\rm{Ref}_A$ and $\rm{Ref}_B$ who receive quantum messages from Alice and Bob respectively. In the definitions, we assume that $\rm{Ref}_A$ conducts the first measurement.

\begin{definition}[Quantum one-way-LOCC SMP protocols, $\Q^{||,\mathrm{LOCC}_1}(R)$]
    For a relation $R \subseteq X \times Y \times Z$, Alice has a part of the input $x \in X$ and Bob has the other part of the input $y \in Y$. Alice sends a quantum state $\rho_x$ to $\rm{Ref}_A$ and Bob sends a quantum state $\sigma_y$ to $\rm{Ref}_B$. $\rm{Ref}_A$ conducts some POVM measurement on $\rho_x$ and sends a measurement result to $\rm{Ref}_B$. $\rm{Ref}_B$ conducts some POVM measurement on $\sigma_y$ based on the classical message from $\rm{Ref}_A$. The measurement result of $\rm{Ref}_B$ is the output of the whole communication protocol. We call this protocol a quantum one-way-LOCC protocol for $R$, and say that a quantum one-way-LOCC SMP protocol computes a relation $R \subseteq X \times Y \times Z$, if for all inputs $(x,y) \in X \times Y$, $\rm{Ref}_B$ computes $z \in Z$ such that $(x,y,z) \in R$ with high probability (say $\frac{2}{3}$). The complexity $\Q^{||,\mathrm{LOCC}_1}(R)$ is the sum of sizes of the two messages from Alice and Bob, i.e., $|\rho_x|+|\sigma_y|$.
\end{definition}

\begin{definition}[Quantum two-way-LOCC SMP protocols, $\Q^{||,\mathrm{LOCC}}(R)$]
    For a relation $R \subseteq X \times Y \times Z$, Alice has a part of the input $x \in X$ and Bob has the other part of the input $y \in Y$. Alice sends a quantum state $\rho_x$ to $\rm{Ref}_A$ and Bob sends a quantum state $\sigma_y$ to $\rm{Ref}_B$. $\rm{Ref}_A$ measures $\rho_x$ and sends a measurement result to $\rm{Ref}_B$. Then, $\rm{Ref}_B$ measures $\sigma_y$ based on the measurement result of $\rho_x$ and sends a measurement result to $\rm{Ref}_A$. Then, $\rm{Ref}_A$ measures the residue state of $\rho_x$ based on all the previous measurement results. After iterating these measurements and communication, $\rm{Ref}_A$ or $\rm{Ref}_B$ finally outputs $z \in Z$ based on all the previous measurement outcomes. We call this protocol a quantum two-way-LOCC protocol for $R$, and say that a quantum two-way-LOCC SMP protocol computes a relation $R \subseteq X \times Y \times Z$, if for all inputs $(x,y) \in X \times Y$,  $\rm{Ref}_A$ or $\rm{Ref}_B$ outputs $z \in Z$ such that $(x,y,z) \in R$ with high probability (say $\frac{2}{3}$). The complexity $\Q^{||,\mathrm{LOCC}}(R)$ is the sum of two messages from Alice and Bob, i.e., $|\rho_x|+|\sigma_y|$.
\end{definition}

\subsection{Quantum incoherent one-way communication protocols}

In this paper, we introduce a new quantum one-way communication protocol. There is a quantum channel from Alice to Bob, and Bob's quantum computer can indeed receive quantum messages from Alice. It can perform some complex measurement on the message state, but the measurement cannot depend on his input and it cannot keep residue states after the first measurement. Let us call complexity in this scenario quantum incoherent one-way communication complexity, and denote it by $\Q^{1,\perp}(R)$.

\begin{definition}[Quantum incoherent one-way communication complexity, $\Q^{1,\perp}(R)$]\label{def:Q_one-way_incoherent}
    Given a relation $R \subseteq X \times Y \times Z$, Alice has a part of the input $x \in X$ and Bob has the other part of the input $y \in Y$. Alice produces a quantum state $\rho_x$ and sends it to Bob. Then, Bob performs some POVM $\{M_m\}_m$ on $\rho_x$ and then outputs $z \in Z$ based on $m$ and $y$. We say that a quantum one-way communication protocol computes a relation $R \subseteq X \times Y \times Z$, if for all inputs $(x,y) \in X \times Y$, Bob computes $z \in Z$ such that $(x,y,z) \in R$ with high probability (say $\frac{2}{3}$). The cost of the protocol is the size of the message $\rho_x$ and we denote by $\Q^{1,\perp}(R)$ the minimum cost for all such communication protocols to compute $R$.
\end{definition}

Note that $\Q^{1,\perp}(R) \leq \R^1(R)$ because classical messages can be encoded in quantum messages with the computational basis and Bob measures it with the basis.
\section{Lower bound against one-way LOCC measurement}\label{sec:one-way_locc}

\subsection{Proof}\label{subsec:proof_1waylocc}

We first show that a variant of Newman's Theorem holds in our setting. We assume that in the hybrid SMP model, Alice sends a classical message and Bob sends a quantum message.
Let us denote by $\R\Q^{||,pub}(F)$ and $\R\Q^{||,pub}(R)$ the complexity in the hybrid SMP model of a function $F$ and a relation $R$, respectively, when Alice and the referee share unlimited randomness.

\begin{lemma}\label{lem:newman}
    Let $R \subseteq X \times Y \times Z$ be any relational problem. For every $\epsilon>0$ and $\delta>0$, $\R\Q_{\epsilon+\delta}^{||}(R) \leq \R\Q_\epsilon^{||,pub} (R) + O(\log (\log |X| + \log |Y|) + \log (\frac{1}{\delta}))$.
\end{lemma}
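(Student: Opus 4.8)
The plan is to adapt Newman's theorem to the hybrid SMP setting. The key observation is that a public-coin hybrid SMP protocol can be viewed as a distribution over deterministic-public-string protocols: once the shared random string $r$ (held by Alice and the referee) is fixed, Alice's classical message $a_r(x)$ is determined, Bob's quantum message $\sigma_y$ is unchanged (Bob has no access to $r$), and the referee's POVM $\{M^{a,r}_z\}_z$ is determined. So the protocol is $\mathbb{E}_{r}[\Pi_r]$ where each $\Pi_r$ is a valid hybrid SMP protocol with a fixed public string, using at most $\R\Q_\epsilon^{||,pub}(R)$ qubits/bits, and for every input $(x,y)$ the probability that $\mathbb{E}_r[\Pi_r]$ outputs a correct $z$ is at least $1-\epsilon$.

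The main step is the sampling/union-bound argument. Fix an input $(x,y)$ and a public string $r$; let $Z^r_{x,y}$ be the indicator random variable (over the protocol's internal measurement randomness, for fixed $r$) that $\Pi_r$ errs on $(x,y)$; by assumption $\mathbb{E}_r[\,\mathbb{E}[Z^r_{x,y}]\,]\le\epsilon$. Sample $t$ strings $r_1,\dots,r_t$ i.i.d.\ from the public-coin distribution and consider the new protocol that picks $i\in[t]$ uniformly and runs $\Pi_{r_i}$; its error on $(x,y)$ is $\frac1t\sum_{i=1}^t \mathbb{E}[Z^{r_i}_{x,y}]$, which has expectation $\le\epsilon$ over the choice of $r_1,\dots,r_t$, and by Hoeffding's bound (\cref{fact:hoeffding}) deviates from its mean by more than $\delta$ with probability at most $2e^{-2\delta^2 t}$. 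Taking a union bound over all $|X|\cdot|Y|$ inputs, this fails somewhere with probability at most $2\,|X|\,|Y|\,e^{-2\delta^2 t}$, which is $<1$ once $t = O\!\big((\log|X| + \log|Y|)/\delta^2\big)$. Hence there is a fixed choice of $r_1,\dots,r_t$ such that the protocol "pick $i\in[t]$ uniformly, run $\Pi_{r_i}$" has error at most $\epsilon+\delta$ on every input.

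It remains to turn this into a private-coin protocol of the claimed cost. Alice and the referee no longer share $r$, but the index $i\in[t]$ requires only $\lceil\log_2 t\rceil = O(\log(\log|X|+\log|Y|) + \log(1/\delta))$ bits; Alice samples $i$ with her private coins, appends it to her classical message, and the referee reads it off and uses $\Pi_{r_i}$ (the list $r_1,\dots,r_t$ is hard-wired into both parties' strategies, which is allowed since it is fixed). Bob's quantum message is untouched, so his cost is unchanged; Alice's message grows by $O(\log\log|X| + \log\log|Y| + \log(1/\delta))$ bits, and $\mathrm{Ref}_B$'s measurement is just the corresponding one from $\Pi_{r_i}$. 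This gives $\R\Q_{\epsilon+\delta}^{||}(R) \le \R\Q_\epsilon^{||,pub}(R) + O(\log(\log|X|+\log|Y|) + \log(1/\delta))$.

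The only point that needs a little care — and the spot I'd flag as the main obstacle — is checking that the argument genuinely goes through for \emph{relations} and for a model where one message is quantum: one must confirm that "correctness on $(x,y)$" is a well-defined Bernoulli event for each fixed public string (it is: the referee's output $z$ is a classical random variable obtained by measuring a fixed state with a fixed POVM, and $(x,y,z)\in R$ is a fixed predicate), and that mixing finitely many fixed-string protocols by a private coin index yields an honest hybrid SMP protocol whose cost is the sum of message lengths as in the definition. Since the quantum side of the message never interacts with the public randomness, no quantum-information subtlety arises, and the classical Newman-style counting is exactly as in \cite{New91}; the discretization/union-bound is over inputs only, so $|Z|$ never enters the bound.
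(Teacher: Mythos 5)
Your proposal is correct and follows essentially the same route as the paper's proof: treat the public-coin hybrid protocol as a mixture of fixed-string protocols, sample $t = O((\log|X|+\log|Y|)/\delta^2)$ strings, apply Hoeffding's bound with a union bound over inputs to fix a good multiset, and then have Alice privately sample and send the $O(\log(\log|X|+\log|Y|)+\log(1/\delta))$-bit index to the referee, leaving Bob's quantum message untouched. No gaps.
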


\begin{proof}
    We will show that any hybrid SMP protocol $\mathcal{P}$ with unlimited random bits between Alice and the referee can be transformed into another hybrid SMP protocol $\mathcal{P}'$ in which Alice and the referee share only $O(\log n + \log (\frac{1}{\delta}))$ random bits while increasing the error by only $\delta$. Since the amount of randomness is small, by making Alice send all the random bits to the referee, we have the desired protocol. Let $\Pi$ be the probabilistic distribution of the shared randomness of the protocol $\mathcal{P}$.
    % The proof then follows because Alice can send her private random bits to Bob and then the two players proceed as in $\mathcal{P}'$.

    Let $V(x, y, r)$ be a random variable which is defined as the probability that $\mathcal{P}$’s output $z \in Z$ on input $(x, y) \in X \times Y$ and random string $r$ shared by Alice and the referee satisfy $(x, y, z) \notin R$. Because $\mathcal{P}$ computes $R$ with $\epsilon$ error, we have $\mathbb{E}_{r \in \Pi} [V(x,y,r)] \leq \epsilon$ for all $(x,y)$. We will build a new protocol that uses fewer random bits, using the probabilistic method. Let $t$ be a parameter, and let $r_1,\ldots,r_t$ be $t$ strings. For such strings, let us define a protocol $\mathcal{P}_{r_1,\ldots,r_t}$ as follows: Alice and the referee choose $1 \leq i \leq t$ uniformly at random and then proceed as in $\mathcal{P}$ with $r_i$ as their common random string. We now show that there exist strings $r_1,\ldots,r_t$ such that $\frac{1}{t} \sum_{i=1}^t [V(x,y,r_i)] \leq \epsilon + \delta$ for all $(x,y)$. For this choice of strings, the protocol $\mathcal{P}_{r_1,\ldots,r_t}$ is the desired protocol.

    To do so, we choose the $t$ values $r_1,\ldots,r_t$ by sampling the distribution $\Pi$ $t$ times. Consider a particular input pair $(x,y)$ and compute the probability that $\frac{1}{t} \sum_{i=1}^t V(x,y,r_i) > \epsilon + \delta$. By Hoeffding's bound (\cref{fact:hoeffding}), since $\mathbb{E}_{r \in \Pi}[V(x,y,r)] \leq \epsilon$, we get
    \[
        \mathrm{Pr} \left[ \left( \frac{1}{t}  \sum_{i=1}^t V(x,y,r_i) - \epsilon \right) > \delta \right] \leq 2e^{-2\delta^2 t}.
    \]
    By choosing $t=O(\frac{ \log |X| + \log |Y| }{\delta^2})$, this is smaller than $|X|^{-1} |Y|^{-1}$. Thus, for a random choice of $r_1,\ldots,r_t$, the probability that for some input $(x,y)$, $\frac{1}{t} \sum_{i=1}^t [V(x,y,r_i)] > \epsilon + \delta$ is smaller than $|X||Y| |X|^{-1} |Y|^{-1} = 1$. This implies that there exists a choice of $r_1,\ldots,r_t$ where for every $(x,y)$ the error of the protocol $\mathcal{P}_{r_1,\ldots,r_t}$ is at most $\epsilon + \delta$. Finally, note that the number of random bits used by the protocol $\mathcal{P}_{r_1,\ldots,r_t}$ is $\log t = O(\log (\log |X| + \log |Y|) + \log (\frac{1}{\delta}))$.
\end{proof}

We also need the result by Oszmaniec, Guerini, Wittek, and Ac{\'i}n \cite{OGWA17}, who showed that any POVM can be simulated by projective measurements with randomization and postprocessing.

\begin{lemma}[Theorem 1 in \cite{OGWA17}]\label{lem:OGWA17}
    Let $\mathbf{S}\mathbb{P}(d^2,nd)$ be the set of PM simulable, $n$-outcome POVMs on $\mathcal{D}(\mathbb{C}^d \otimes \mathbb{C}^d)$. Let $\mathbf{M} \in \mathcal{P}(d,n)$ be an arbitrary $n$-outcome POVM on $\mathcal{D}(\mathbb{C}^d)$ and $\ket{\phi}$ be some fixed pure state on $\mathcal{B}(\mathbb{C}^d)$. Then there exists a PM-simulable POVM $\mathbf{N} \in \mathbf{S}\mathbb{P}(d^2,nd)$ such that $\mathrm{tr}(\rho M_i) = \mathrm{tr}((\rho \otimes \ket{\phi} \bra{\phi}) N_i)$ for $i=1,\ldots,n$ \footnote{Since the ancilla state is fixed, $\mathrm{tr}((\rho \otimes \ket{\phi} \bra{\phi}) N_i)=0$ for $i=n+1,\ldots,nd$. The number of the outcomes of the POVM $\mathbf{N}$ is $nd$ to make $\mathbf{N}$ valid (i.e., $\sum_{i=1}^{nd} N_i = I$).} and for all states $\rho \in \mathcal{D}(\mathbb{C}^d)$.
\end{lemma}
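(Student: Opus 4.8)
The plan is to realize the simulation in three moves: decompose $\mathbf{M}$ into extremal POVMs, Naimark-dilate each extremal piece using only a $d$-dimensional ancilla frozen in the state $\ket{\phi}$, and recombine the pieces by classical randomization.

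First I would invoke Carath\'eodory's theorem. The set $\mathcal{P}(d,n)$ is a compact convex subset of the real vector space $\operatorname{Herm}(\mathbb{C}^d)^{n}$ of $n$-tuples of Hermitian operators, so $\mathbf{M} = \sum_{k=1}^{K} p_k \mathbf{M}^{(k)}$ for finitely many extremal POVMs $\mathbf{M}^{(k)} = \{M^{(k)}_i\}_{i=1}^n$ and a probability vector $(p_k)_k$. The key structural input is that any extremal $\mathbf{M}^{(k)}$ obeys the rank bound $\sum_{i} \operatorname{rank}(M^{(k)}_i)^2 \le d^2$: if this failed, the space of Hermitian tuples $\{D_i\}_i$ with $\operatorname{supp}(D_i) \subseteq \operatorname{supp}(M^{(k)}_i)$ for each $i$ and $\sum_i D_i = 0$ --- whose real dimension is at least $\sum_i \operatorname{rank}(M^{(k)}_i)^2 - d^2$ --- would be nonzero, and for such a nonzero tuple the perturbations $M^{(k)}_i \mapsto M^{(k)}_i \pm \varepsilon D_i$ (valid POVMs for small $\varepsilon$, since $D_i$ is supported on $\operatorname{supp}(M^{(k)}_i)$) would display $\mathbf{M}^{(k)}$ as a nontrivial convex combination. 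In particular $D^{(k)} := \sum_i \operatorname{rank}(M^{(k)}_i) \le d^2$, so at most $d^2$ of the $M^{(k)}_i$ are nonzero.

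Next, for a fixed $k$ I would take the standard Naimark dilation: writing $M^{(k)}_i = A_i^\dagger A_i$ with $A_i \colon \mathbb{C}^d \to \mathbb{C}^{\operatorname{rank}(M^{(k)}_i)}$ surjective and setting $\mathbb{C}^{D^{(k)}} = \bigoplus_i \mathbb{C}^{\operatorname{rank}(M^{(k)}_i)}$ (only the nonzero $M^{(k)}_i$ contribute), the map $V\ket{\psi} = \bigoplus_i A_i\ket{\psi}$ is an isometry, and the block projectors $P_i$ satisfy $\sum_i P_i = I_{\mathbb{C}^{D^{(k)}}}$ and $V^\dagger P_i V = M^{(k)}_i$. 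Since $D^{(k)} \le d^2$, I embed $\mathbb{C}^{D^{(k)}}$ into $\mathbb{C}^d \otimes \mathbb{C}^d \cong \mathbb{C}^{d^2}$, extend each $P_i$ by zero on the orthogonal complement, and fold that complement into $P_1$; the $P_i$ are then orthogonal projectors on $\mathbb{C}^{d^2}$ summing to the identity. Both $V(\mathbb{C}^d)$ and $\{\ket{\psi}\otimes\ket{\phi} : \ket{\psi}\in\mathbb{C}^d\}$ are $d$-dimensional subspaces of $\mathbb{C}^{d^2}$, so the partial isometry $\ket{\psi}\otimes\ket{\phi} \mapsto V\ket{\psi}$ extends to a unitary $U$ on $\mathbb{C}^{d^2}$; put $N^{(k)}_i := U^\dagger P_i U$. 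Then $\{N^{(k)}_i\}_i$ is a projective measurement on $\mathbb{C}^d\otimes\mathbb{C}^d$, and because $U(\rho\otimes\ket{\phi}\bra{\phi})U^\dagger = V\rho V^\dagger$ is supported inside $\mathbb{C}^{D^{(k)}}$ --- hence invisible to the complement piece folded into $P_1$ --- one gets $\tr\bigl(N^{(k)}_i(\rho\otimes\ket{\phi}\bra{\phi})\bigr) = \tr(P_i V\rho V^\dagger) = \tr(M^{(k)}_i\rho)$ for every $\rho \in \mathcal{D}(\mathbb{C}^d)$.

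Finally I would let $\mathbf{N}$ be the POVM that samples $k$ with probability $p_k$, applies the projective measurement $\{N^{(k)}_i\}_i$, and outputs $i$ (padding the outcome alphabet to size $nd$ with zero operators to match the statement). This $\mathbf{N}$ is by construction a classical mixture of projective measurements followed by a trivial relabeling, hence PM-simulable, i.e.\ $\mathbf{N}\in\mathbf{S}\mathbb{P}(d^2,nd)$, and $\tr\bigl(N_i(\rho\otimes\ket{\phi}\bra{\phi})\bigr) = \sum_k p_k\,\tr(M^{(k)}_i\rho) = \tr(M_i\rho)$ for all $i=1,\dots,n$ and all $\rho\in\mathcal{D}(\mathbb{C}^d)$, as required. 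I expect the one real obstacle to be the extremality rank bound of the second paragraph: it is exactly this that makes a $d$-dimensional ancilla --- rather than the $n$-dimensional one demanded by a naive Naimark dilation --- suffice, and everything after it is routine dilation bookkeeping, the only caveat being that absorbing the leftover dimensions into an arbitrary outcome is legitimate solely because the ancilla register is always supplied in the fixed product state $\ket{\phi}$.
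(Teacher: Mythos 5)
Your proof is correct: the Carath\'eodory decomposition into extremal POVMs, the perturbation argument giving $\sum_i \operatorname{rank}(M_i)^2 \le d^2$ (hence Naimark dimension $\sum_i \operatorname{rank}(M_i) \le d^2$), the unitary aligning the Naimark isometry with the fixed-ancilla subspace, and the final randomization over $k$ all go through. Note that the paper itself does not prove this lemma but imports it as Theorem~1 of \cite{OGWA17}; your reconstruction follows essentially the same route as that source (extremal decomposition plus Naimark dilation with a $d$-dimensional ancilla), so there is nothing further to reconcile.
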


We are now ready to prove the main result of this section.

\begin{theorem}\label{thm:one-way_LOCC}
    Let $R \subseteq X \times Y \times Z$ be any relation. Suppose that there exists a quantum one-way-LOCC SMP protocol $\mathcal{P}$ for $R$ whose message sizes are $a$ and $b$, with $a$ being the size of Alice's message, on which the first measurement is performed by $\rm{Ref}_A$. Then, $\R\Q(R) \leq 2a + b + O(\log (\log |X| + \log |Y|))$.
\end{theorem}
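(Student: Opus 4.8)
The plan is to turn the one-way-LOCC protocol $\mathcal{P}$ into a public-coin hybrid SMP protocol — Alice classical, Bob quantum — in which Alice herself performs on $\rho_x$ the measurement that $\mathrm{Ref}_A$ would have performed, and then to invoke \cref{lem:newman} to compress the public randomness into Alice's message. Write $d=2^a$. Since the protocol is one-way, $\mathrm{Ref}_A$'s action is just a POVM $\mathbf{A}=\{A_m\}_m$ on $\rho_x$ (with, we may assume, finitely many outcomes, after discretizing the protocol), and $\mathrm{Ref}_B$'s action is a conditional POVM $\{B^m_z\}_z$ applied to $\sigma_y$ upon receiving $m$; thus $\mathcal{P}$ outputs $z$ on $(x,y)$ with probability $\sum_m \mathrm{tr}(A_m\rho_x)\,\mathrm{tr}(B^m_z\sigma_y)$.

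The first step is to apply \cref{lem:OGWA17} to $\mathbf{A}$, obtaining a fixed ancilla $\ket{\phi}$ on $a$ qubits and a PM-simulable POVM $\mathbf{N}$ on $2a$ qubits with $\mathrm{tr}(A_m\rho)=\mathrm{tr}\big((\rho\otimes\ket{\phi}\bra{\phi})N_m\big)$ for every state $\rho$ and every outcome $m$. Unfolding PM-simulability, $\mathbf{N}$ is a classical post-processing of a randomized projective measurement: there is a distribution $\{p_k\}$ over projective measurements $\mathbf{P}_k$ on $2a$ qubits (each with at most $2^{2a}$ outcomes) and post-processing kernels $q_k$ with $N_m=\sum_k p_k\sum_j q_k(m\mid j)(P_k)_j$. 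The point is that the outcome $j$ actually extracted from the state now ranges over a set of size at most $2^{2a}$, so it costs only $2a$ bits to report; the one object that can still be enormous is the index $k$ of which projective measurement was used, and it is precisely this index that will become shared randomness.

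So the public-coin hybrid protocol $\mathcal{P}_1$ is: Alice and the referee share $k\sim\{p_k\}$; Alice prepares $\rho_x\otimes\ket{\phi}\bra{\phi}$, runs $\mathbf{P}_k$, and sends its outcome $j$ (a $2a$-bit classical string); Bob sends $\sigma_y$ ($b$ qubits); the referee, knowing $k$ and $j$, samples $m\sim q_k(\cdot\mid j)$ using its own randomness, measures $\sigma_y$ with $\{B^m_z\}_z$, and outputs $z$. I would then verify, using $\sum_k p_k\sum_j q_k(m\mid j)(P_k)_j=N_m$ and the identity from \cref{lem:OGWA17}, that the output distribution of $\mathcal{P}_1$ is exactly that of $\mathcal{P}$, so $\R\Q_{1/3}^{||,pub}(R)\le 2a+b$. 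Finally, applying \cref{lem:newman} with a small constant $\delta$ replaces the shared randomness by $O(\log(\log|X|+\log|Y|))$ additional classical bits sent from Alice to the referee, at the cost of an irrelevant increase of the error by $\delta$, which gives the claimed bound $\R\Q^{||}(R)\le 2a+b+O(\log(\log|X|+\log|Y|))$.

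A couple of points need care, and I expect the verification of $\mathcal{P}_1$'s correctness to be the only genuinely substantive step. First, \cref{lem:OGWA17} must be used only through its \emph{state-independent} guarantee (the identity holds for all $\rho$): this is what makes the substitution legitimate given that the referee does not know $x$ and that neither $\mathrm{Ref}_A$'s nor Alice's measurement may depend on $x$. Second, the reduction is exact only because the referee may use free local randomness to carry out the (possibly $k$-dependent) post-processing $q_k(\cdot\mid j)$ and the randomized measurement $\{B^m_z\}_z$; this is also why the relevant Newman-type statement is the one about randomness shared between Alice and the referee rather than between other parties. The whole construction is designed around a single difficulty: $\mathrm{Ref}_A$'s POVM may have astronomically many outcomes, and a decomposition of an arbitrary POVM into projective measurements may involve astronomically many terms — and here \cref{lem:OGWA17} disposes of the first issue (projective measurements on $2a$ qubits have at most $2^{2a}$ outcomes) while \cref{lem:newman} disposes of the second (the number of $k$'s no longer matters once it is shared randomness that can be compressed to $O(\log(\log|X|+\log|Y|))$ bits).
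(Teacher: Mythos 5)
Your proposal is correct and follows essentially the same route as the paper's proof: apply \cref{lem:OGWA17} to replace $\mathrm{Ref}_A$'s POVM by a randomized projective measurement on $\rho_x\otimes\ket{\phi}\bra{\phi}$, have Alice perform it and send the $2a$-bit outcome while the choice of projective measurement becomes randomness shared with the referee, and then compress that shared randomness via \cref{lem:newman}. Your explicit handling of the classical post-processing kernel $q_k(\cdot\mid j)$ is just a more detailed unfolding of the PM-simulability step that the paper treats implicitly.
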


\begin{proof}
    We give a new hybrid SMP protocol $\mathcal{P}'$ for $R$ with shared randomness between Alice and the referee constructed from the protocol $\mathcal{P}$. Let $\mathbf{M}$ be the $m$-outcome POVM performed by $\rm{Ref}_A$ and let $\rho_x$ be a quantum message from Alice to $\rm{Ref}_A$ in the protocol $\mathcal{P}$. From \cref{lem:OGWA17}, there exist a fixed pure state $\ket{\phi} \in \mathcal{B}(\mathbb{C}^{2^a})$ and a PM-simulable POVM $\mathbf{N} \in \mathbf{S}\mathbb{P}(2^{2a},m 2^a)$ such that $\mathrm{tr}(\rho_x M_i) = \mathrm{tr}((\rho_x \otimes \ket{\phi} \bra{\phi}) N_i)$ for $i=1,\ldots,m$. Since $\mathbf{N}$ is a PM-simulable POVM, there exist a set of projectors $\{\mathbf{N}_k\}$ and a probability distribution $\{p_k\}_k$ such that $\mathbf{N} = \sum_k p_k \mathbf{N}_k$. The measurement results of each $\mathbf{N}_k$ on $\rho_x \otimes \ket{\phi} \bra{\phi}$ are represented by $2a$ bits. Consider a hybrid SMP protocol $\mathcal{P}'$ in which Alice and the referee share the probability distribution $\{p_k\}_k$ and Alice performs $\sum_k p_k \mathbf{N}_k$ on $\rho_x \otimes \ket{\phi}\bra{\phi}$ and sends the measurement result of $2a$ bits. Then, the referee can recover the original result of $\mathbf{M}$ from the $2a$ bits message from Alice because the referee knows the index $k$ from the shared randomness with Alice, and thus simulates $\rm{Ref}_B$ exactly.
    
    Let us apply \cref{lem:newman} for the protocol $\mathcal{P}'$ by taking $\delta$ as a sufficiently small constant. We then obtain a private-coin hybrid SMP protocol $\mathcal{P}''$ to solve $R$ with bounded error and the complexity of the protocol $\mathcal{P}''$ is $2a + O(\log (\log |X| + \log |Y|)) + b$.
\end{proof}

Since for a Boolean function $F:\{0,1\}^n \times \{0,1\}^n \rightarrow \{0,1\}$, $|X|=|Y| = 2^n$, from the lower bound in the hybrid scheme for $\EQ_n$ (\cref{thm:lower_bound_hybrid}), we have a lower bound in quantum one-way-LOCC SMP protocols for $\EQ_n$.

\begin{corollary}\label{cor:EQ}
    $\Q^{||,\mathrm{LOCC}_1}(\EQ_n) = \Omega(\sqrt{n})$.
\end{corollary}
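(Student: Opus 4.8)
The plan is to obtain \cref{cor:EQ} as an immediate consequence of \cref{thm:one-way_LOCC} and the known hybrid lower bound \cref{thm:lower_bound_hybrid}, used entirely as black boxes. Suppose $\mathcal{P}$ is any quantum one-way-LOCC SMP protocol for $\EQ_n$; write $a$ for the size of the message whose owner's referee ($\mathrm{Ref}_A$) performs the first measurement and $b$ for the size of the other message, so that the cost of $\mathcal{P}$ is $a+b$ and $\Q^{||,\mathrm{LOCC}_1}(\EQ_n)$ is the minimum such $a+b$. Since for $\EQ_n$ we have $X=Y=\{0,1\}^n$, the additive term $O(\log(\log|X|+\log|Y|))$ appearing in \cref{thm:one-way_LOCC} is $O(\log n)$. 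Applying \cref{thm:one-way_LOCC} therefore yields a (private-coin) hybrid SMP protocol for $\EQ_n$ of cost $2a+b+O(\log n)$, and crudely bounding $2a+b \le 2(a+b)$ gives $\R\Q^{||}(\EQ_n) \le 2(a+b) + O(\log n)$.

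Next I would invoke \cref{thm:lower_bound_hybrid}, namely $\R\Q^{||}(\EQ_n) = \Omega(\sqrt n)$; this applies regardless of which of the two players sends the quantum message, since $\EQ_n$ is symmetric in its two arguments and the reduction in \cref{thm:one-way_LOCC} produces a hybrid protocol in which the party whose message was measured first now sends classical bits while the other party's message $\sigma_y$ stays quantum. Combining the two displayed inequalities, $2(a+b) + O(\log n) = \Omega(\sqrt n)$; since $\log n = o(\sqrt n)$ the logarithmic term is absorbed, leaving $a+b = \Omega(\sqrt n)$, i.e.\ $\Q^{||,\mathrm{LOCC}_1}(\EQ_n) = \Omega(\sqrt n)$, as claimed. (If one also wanted the matching $O(\sqrt n)$ upper bound, it follows from $\Q^{||,\mathrm{LOCC}_1}(R) \le \R\Q^{||}(R)$ together with \cref{thm:lower_bound_hybrid}, or directly from Ambainis's classical protocol.)

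There is essentially no obstacle here, since all the real work has already been carried out in building \cref{thm:one-way_LOCC} (the Newman-type argument of \cref{lem:newman}, the POVM-simulation result \cref{lem:OGWA17}, and their assembly). The only points needing a moment's care are bookkeeping: first, that the asymmetry $2a+b$ versus $2(a+b)$ is harmless because we only seek a lower bound on the total cost $a+b$, so the crude bound loses nothing asymptotically; and second, that one must check the protocol produced by \cref{thm:one-way_LOCC} genuinely lies in the hybrid SMP model to which \cref{thm:lower_bound_hybrid} speaks — which it does, by the observation above about which message remains quantum.
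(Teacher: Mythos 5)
Your proposal is correct and follows essentially the same route as the paper: it applies \cref{thm:one-way_LOCC} with $|X|=|Y|=2^n$ (so the additive term is $O(\log n)$) and then invokes the hybrid lower bound \cref{thm:lower_bound_hybrid} to conclude $a+b=\Omega(\sqrt n)$. The extra bookkeeping about $2a+b$ versus $2(a+b)$ and about which message remains quantum is harmless and matches what the paper leaves implicit.
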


Since $\mathrm{BELL} \subseteq \mathrm{LOCC_1}$, we also have a lower bound with the BELL measurements.

\begin{corollary}
    $\Q^{||,\mathrm{BELL}}(\EQ_n) = \Omega(\sqrt{n})$.
\end{corollary}

Moreover, we can obtain a more general result for any Boolean function. For any Boolean function $F:\{0,1\}^n \times \{0,1\}^n \rightarrow \{0,1\}$, it is known that $\R^{||}(F) \leq \D^{||}(F)^2$ \cite{BK97}, and $\R\Q^{||}(F) \leq \R^{||}(F)^2$ \cite{GRdW08}. Combining this with \cref{thm:one-way_LOCC}, we have the following claim.

\begin{corollary}\label{cor:quartic}
    For any Boolean function $F:\{0,1\}^n \times \{0,1\}^n \rightarrow \{0,1\}$, if $\D^{||}(F) = \Omega(n^c)$ for a constant $c > 0$, $\Q^{||,\mathrm{LOCC_1}} (F) = \Omega(n^{\frac{c}{4}})$.
\end{corollary}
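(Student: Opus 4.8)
The plan is to compose \cref{thm:one-way_LOCC} with the two standard quadratic simulations between SMP models; I do not anticipate a genuine obstacle here, so the only real work is bookkeeping of constants and logarithmic terms.

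First I would specialise \cref{thm:one-way_LOCC} to a Boolean function $F:\{0,1\}^n\times\{0,1\}^n\to\{0,1\}$, for which $|X|=|Y|=2^n$. Given an optimal quantum one-way-LOCC SMP protocol for $F$ with message sizes $a$ and $b$ (say $\rm{Ref}_A$ measures first), we have $a+b=\Q^{||,\mathrm{LOCC}_1}(F)$, hence $2a+b\le 2(a+b)$, while $\log(\log|X|+\log|Y|)=\log(2n)=O(\log n)$. Thus \cref{thm:one-way_LOCC} gives the clean linear bound
\[
  \R\Q^{||}(F)\ \le\ 2\,\Q^{||,\mathrm{LOCC}_1}(F)+O(\log n).
\]

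Next I would feed this into the two known quadratic relations: the bound of \cite{BK97}, that deterministic SMP complexity is at most quadratic in randomized SMP complexity, and the bound of \cite{GRdW08}, that randomized SMP complexity is at most quadratic in hybrid SMP complexity. Chaining them yields $\D^{||}(F)=O\big(\R\Q^{||}(F)^{4}\big)$, and substituting the displayed bound gives
\[
  \D^{||}(F)\ =\ O\!\left(\big(\Q^{||,\mathrm{LOCC}_1}(F)+\log n\big)^{4}\right).
\]
By hypothesis $\D^{||}(F)=\Omega(n^{c})$ for a constant $c>0$, so $\Q^{||,\mathrm{LOCC}_1}(F)+\log n=\Omega(n^{c/4})$; since $\log n=o(n^{c/4})$ for every constant $c>0$, the logarithmic term is dominated and $\Q^{||,\mathrm{LOCC}_1}(F)=\Omega(n^{c/4})$, as claimed.

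There is essentially nothing difficult here; the only two points of care are (i) the asymmetric count $2a+b$ in \cref{thm:one-way_LOCC}, which is harmless because $2a+b\le 2(a+b)=2\,\Q^{||,\mathrm{LOCC}_1}(F)$, and (ii) checking that the additive $O(\log n)$ term together with the constants hidden in the two quadratic simulations are all swallowed once the target is a polynomial lower bound $\Omega(n^{c/4})$. If one prefers, the same argument reads more smoothly in contrapositive form: if $\Q^{||,\mathrm{LOCC}_1}(F)=o(n^{c/4})$, then \cref{thm:one-way_LOCC} forces $\R\Q^{||}(F)=o(n^{c/4})$, hence $\R^{||}(F)=o(n^{c/2})$ and then $\D^{||}(F)=o(n^{c})$, contradicting $\D^{||}(F)=\Omega(n^{c})$.
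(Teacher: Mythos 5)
Your proof is correct and is essentially the paper's own (largely unwritten) argument: chain \cref{thm:one-way_LOCC} with the Babai--Kimmel and Gavinsky--Regev--de Wolf quadratic simulations to get $\D^{||}(F) = O\bigl((\Q^{||,\mathrm{LOCC}_1}(F)+\log n)^{4}\bigr)$ and absorb the additive logarithmic term. Your two points of care --- bounding $2a+b\le 2(a+b)$ and reading the quadratic relations in the useful direction ($\D^{||}\lesssim(\R^{||})^{2}$ and $\R^{||}\lesssim(\R\Q^{||})^{2}$, which is what the cited results actually give despite how the paper writes them) --- are exactly the intended bookkeeping.
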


\subsection{Implication for quantum incoherent one-way communication complexity}

By applying the technique developed in \cref{subsec:proof_1waylocc}, we show that, in the setting of the quantum incoherent one-way communication protocols, quantum messages can be replaced by classical messages with very small overhead.
\begin{corollary}\label{cor:incoherent}
    Let $R \subseteq X \times Y \times Z$ be any relation. $\R^1(R) \leq 2 \Q^{1,\perp}(R) + O(\log (\log |X| + \log |Y|)$.
\end{corollary}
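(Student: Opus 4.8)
The plan is to mirror the proof of \cref{thm:one-way_LOCC}, now treating Bob himself as the ``referee'' whose measurement is fixed in advance (since in a quantum incoherent one-way protocol Bob's POVM cannot depend on $y$). Let $\mathcal{P}$ be an optimal quantum incoherent one-way protocol for $R$ in which Alice sends an $a$-qubit state $\rho_x$ with $a = \Q^{1,\perp}(R)$, Bob applies a fixed POVM $\mathbf{M} = \{M_m\}_m$ to $\rho_x$, and then outputs some $z$ from the outcome $m$ and his input $y$. Applying \cref{lem:OGWA17} to $\mathbf{M}$ with a fixed ancilla $\ket{\phi}\in\mathcal{B}(\mathbb{C}^{2^a})$ yields a PM-simulable POVM $\mathbf{N} = \sum_k p_k \mathbf{N}_k$ on $\mathcal{D}(\mathbb{C}^{2^a}\otimes\mathbb{C}^{2^a})$, with each $\mathbf{N}_k$ a projective measurement on $\mathbb{C}^{2^{2a}}$, such that $\tr(\rho M_m) = \tr\big((\rho\otimes\ket{\phi}\bra{\phi})N_m\big)$ for every $\rho$ and every $m$.

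Next I would build a public-coin classical one-way protocol $\mathcal{P}'$: Alice and Bob share the distribution $\{p_k\}_k$; Alice samples $k\sim p_k$, performs the projective measurement $\mathbf{N}_k$ on $\rho_x\otimes\ket{\phi}\bra{\phi}$, obtains a raw outcome representable in $2a$ bits (the dimension being $2^{2a}$), and sends it to Bob. Any classical post-processing implicit in PM-simulability can be applied by Bob using private randomness; since he also knows $k$ from the shared randomness, he recovers an outcome $m$ with exactly the distribution it has in $\mathcal{P}$, and then outputs $z$ from $m$ and $y$ as before. Thus $\mathcal{P}'$ has the same success probability as $\mathcal{P}$, uses $2a$ bits of communication, and uses only shared randomness between Alice and Bob.

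Finally I would invoke Newman's theorem in the relational one-way setting to delete the shared randomness, by the same argument as \cref{lem:newman}. Writing $V(x,y,r)$ for the failure probability of $\mathcal{P}'$ on input $(x,y)$ with shared string $r$, a Hoeffding bound (\cref{fact:hoeffding}) over $t = O\big((\log|X| + \log|Y|)/\delta^2\big)$ independent samples of $r$, followed by a union bound over all $(x,y)$, shows that some fixed tuple $r_1,\dots,r_t$ makes the failure probability at most $\epsilon+\delta$ on every input. Replacing the shared randomness by a uniform index in $[t]$, which Alice appends to her message, costs only $\log t = O(\log(\log|X| + \log|Y|))$ extra bits, giving a private-coin classical one-way protocol of cost $2a + O(\log(\log|X|+\log|Y|))$, i.e.\ $\R^1(R) \leq 2\Q^{1,\perp}(R) + O(\log(\log|X| + \log|Y|))$.

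I do not expect a deep obstacle here, since the heavy lifting is already done by \cref{lem:OGWA17} and \cref{lem:newman}; the two points needing a little care are (i) verifying that the classical post-processing in PM-simulability does not spoil exact recovery of $m$ — it does not, since the target distribution is fixed and Bob only needs private randomness — and (ii) confirming that Newman's theorem still goes through for \emph{relations} and for the one-way rather than hybrid-SMP model, which it does because the probabilistic argument in \cref{lem:newman} uses only that the protocol errs with probability at most $\epsilon$ on each input, not any structural feature of the protocol.
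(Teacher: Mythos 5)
Your proposal is correct and follows essentially the same route as the paper's own proof: apply \cref{lem:OGWA17} to Bob's fixed POVM to get a PM-simulable measurement with $2a$-bit outcomes, have Alice perform the sampled projective measurement and send the outcome so Bob can recover the original outcome distribution using the shared index $k$, and then remove the shared randomness with a Newman-type argument as in \cref{lem:newman}. Your extra remarks on the classical post-processing and on Newman's theorem applying to relations in the one-way setting are exactly the (minor) points the paper also relies on.
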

\begin{proof}
    Let $\mathcal{P}$ be an incoherent quantum one-way communication protocol for $R$. Let $a$ be the qubit size from Alice to Bob. Let $\mathbf{M} = \{M_i\}_{i \in [m]}$ be the $m$-outcome POVM performed by Bob and let $\rho_x$ be a quantum message from Alice to Bob in the protocol $\mathcal{P}$. From \cref{lem:OGWA17}, there exist a fixed pure state $\ket{\phi} \in \mathcal{B}(\mathbb{C}^{2^a})$ and a PM-simulable POVM $\mathbf{N} \in \mathbf{S}\mathbb{P}(2^2a,m2^a)$ such that $\mathrm{tr}(\rho_x M_i) = \mathrm{tr}((\rho_x \otimes \ket{\phi} \bra{\phi}) N_i)$ for $i=1,\ldots,m$. Since $\mathbf{N}$ is a PM-simulable POVM, there exist a set of projectors $\{\mathbf{N}_k\}$ and a probability distribution $\{p_k\}_k$ such that $\mathbf{N} = \sum_k p_k \mathbf{N}_k$. The measurement result of each $\mathbf{N}_k$ on $\rho_x \otimes \ket{\phi} \bra{\phi}$ is represented by $2a$ bits. Consider a classical one-way communication protocol $\mathcal{P}'$ in which Alice and Bob share the probability distribution $\{p_k\}_k$ and Alice performs $\sum_k p_k \mathbf{N}_k$ on $\rho_x \otimes \ket{\phi}\bra{\phi}$ and sends the measurement result with $2a$ bits. Then, Bob can recover the original result of $\mathbf{M}$ from the $2a$ bits message from Alice because Bob knows the index $k$ from the shared randomness with Alice.
    We then apply a variant of the Newman's theorem proven by the same way as \cref{lem:newman}, and we obtain a classical one-way communication protocol whose complexity is $2a + O(\log (\log |X| + \log |Y|))$, which concludes the proof.
\end{proof}
\section{Lower bound against two-way LOCC measurement}\label{sec:two-way_locc}

In \cref{subsec:replace}, we first describe how to replace quantum messages by deterministic messages. To illustrate our idea and analysis more clearly, we will give a proof of a lower bound in quantum two-value two-round LOCC SMP protocols as a first step in \cref{subsec:warm-up}. We then prove our lower bound in quantum two-value multiple-round LOCC SMP protocols in \cref{subsec:many_rounds}.

\subsection{Replacing quantum messages by classical messages}\label{subsec:replace}

\begin{lemma}\label{lem:replace}
    Suppose Alice has the classical description of an arbitrary $q$-qubit density matrix $\rho$, and Bob has $2^c$ $2$-value POVM operators $\{E_b\}_{b \in \{0,1\}^c}$. For any $\delta >0$, there exists a deterministic message of $O(\frac{q}{\delta^3} \log (\frac{q}{\delta}) (c+ \log \frac{1}{\delta}) )$ bits from Alice that enables Bob to output values $p'_b$ satisfying that $|p_b-p'_b| \leq \delta$ simultaneously for all $b \in \{0,1\}^c$ where $p_b = \mathrm{tr}(E_b \rho)$.
\end{lemma}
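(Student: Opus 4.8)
The plan is for Alice to send a short \emph{training transcript} from which Bob reconstructs, by a purely classical computation, a $q$-qubit hypothesis state $\sigma$ that agrees with $\rho$ up to additive error $\delta$ on \emph{every} one of Bob's $2^c$ measurements; Bob then simply outputs $p'_b:=\tr(E_b\sigma)$. Fix once and for all a deterministic learning rule $\mathcal{A}$, run identically by Alice and Bob: it starts from $\sigma^{(0)}:=I/2^q$, and on an input pair $(b,v)$ with $b\in\{0,1\}^c$ and $v$ on a $\Theta(\delta)$-grid it does nothing if $|\tr(E_b\sigma^{(t-1)})-v|$ is below a fixed margin ($\delta/4$, say), and otherwise replaces $\sigma^{(t-1)}$ by the matrix-multiplicative-weights / Bregman-projection update onto the slice $\{\tau:\tr(E_b\tau)=v\}$, i.e.\ by $\arg\min_{\tau}\{S(\tau\,\|\,\sigma^{(t-1)}):\tr(E_b\tau)=v\}$ with $S$ the quantum relative entropy. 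Alice, who knows $\rho$ and the (fixed, public) family $\{E_b\}$, builds the transcript greedily: while some $b$ witnesses $|\tr(E_b\sigma^{(t)})-p_b|>\delta$, she appends $(b,\tilde p_b)$ with $\tilde p_b$ equal to $p_b$ rounded to the nearest multiple of $\delta/8$ (so the discretization eats less than half the margin and the witness does trigger an update). When no witness remains, $\sigma:=\sigma^{(T)}$ has exactly the desired property, so everything reduces to bounding $T$ and the encoding cost.

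The key step is a \emph{mistake bound}: $T$ is $\poly(q,1/\delta)$ and independent of $c$. This is the usual online-learning potential argument (the learnability/shadow-tomography subroutine in the spirit of \cite{Aar05,GRdW08}): $S(\rho\,\|\,\sigma^{(t)})\le q\ln 2$ at $t=0$, is always nonnegative, and — by the Pythagorean inequality for Bregman divergences together with quantum Pinsker — decreases by $\Omega(\delta^2)$ on every step where Alice forces an update (such a step moves $\tr(E_b\sigma^{(t)})$ by more than $\delta/4$), up to a lower-order correction since $\rho$ only approximately lies on the slice; hence $T=O(q/\delta^2)$ in the cleanest form. To obtain the precise bound claimed and, if one instead follows the gentle-measurement learner of \cite{Aar05} (which behaves best when the $p_b$ are bounded away from $0$ and $1$), to dispose of the extreme values, one first amplifies: work with $\rho^{\otimes N}$ on $Nq$ qubits for $N=O(\delta^{-2}\log(1/\eta))$ together with threshold versions of $E_b^{\otimes N}$, whose acceptance probabilities are then pushed near $0/1$; a union bound via \cref{fact:hoeffding} over all $2^c$ measurements and $O(N)$ thresholds fixes $\eta$, and re-running the potential count on $Nq$ qubits yields $T=O\!\big(\tfrac{q}{\delta^3}\log\tfrac{q}{\delta}\big)$.

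For the encoding, each of the $T$ transcript entries is an index $b\in\{0,1\}^c$ together with a value on a $\Theta(\delta)$-grid (plus, in the amplified version, a threshold in $\{0,\dots,N\}$ and an outcome bit), i.e.\ $c+O(\log\tfrac1\delta)$ bits, so the whole message has $O\!\big(T(c+\log\tfrac1\delta)\big)=O\!\big(\tfrac{q}{\delta^3}\log\tfrac{q}{\delta}\,(c+\log\tfrac1\delta)\big)$ bits, as claimed; it depends only on $\rho$, hence is deterministic, and Bob's side is entirely classical (he replays $\mathcal{A}$ on the transcript and evaluates a trace). I expect the second paragraph to be the real obstacle: the soft bound $O(q/\delta^2)$ is immediate, but matching the stated $\widetilde O(q/\delta^3)$ following \cite{GRdW08} — keeping the margin alive under the $\Theta(\delta)$-rounding of the transmitted values, and paying only one factor of $c$ rather than $c^2$ after the union bounds over Bob's measurements and over the amplification thresholds — is the delicate bookkeeping, while the rest (existence of the greedy transcript, Bob's reconstruction, the final accuracy guarantee) is routine once the mistake bound is in hand.
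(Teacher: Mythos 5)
Your proposal is correct in spirit but takes a genuinely different route from the paper. The paper follows Gavinsky--Regev--de Wolf: Alice conceptually sends $r=O(\delta^{-2}\log(q/\delta))$ copies of $\rho$, the relevant observables are the empirical frequencies $F_b=\frac1r\sum_j E_b^{(j)}$, the hypothesis states are obtained by projecting the maximally mixed state on $rq$ qubits onto eigenspaces of $F_b$ for the ``bad'' $b$'s, and the number $t$ of bad $b$'s is bounded by sandwiching the probability that all $t$ projections succeed between $(1-\delta/4)^t$ (via Markov) and $2^{-rq-1}$ (via the quantum union bound applied to $\rho^{\otimes r}$), giving $t=O(\frac{q}{\delta^3}\log\frac q\delta)$. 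You instead run an online-learning/multiplicative-weights learner on the original $q$-qubit state with the relative-entropy potential $S(\rho\,\|\,\sigma^{(t)})$ --- essentially the route of \cite{ACH+18}, which the paper itself cites as an alternative. Your mistake bound is sound once one point is repaired: since $\rho$ lies only approximately on the slice $\{\tau:\tr(E_b\tau)=\tilde p_b\}$, apply the Pythagorean inequality to the Bregman projection onto the slab $\{\tau:|\tr(E_b\tau)-\tilde p_b|\le\delta/8\}$, which does contain $\rho$; then each forced update moves $\tr(E_b\,\cdot\,)$ by $\Omega(\delta)$, Pinsker gives a potential drop of $\Omega(\delta^2)$, and $T=O(q/\delta^2)$, so the message has $O(\frac q{\delta^2}(c+\log\frac1\delta))$ bits. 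This is \emph{stronger} than the stated bound and already proves the lemma, so your second paragraph should simply be dropped: there is no need to ``match'' $O(\frac q{\delta^3}\log\frac q\delta)$, and the amplification as you sketch it (a classical union bound over all $2^c$ measurements) would force $N=\Omega(c/\delta^2)$ copies and the $c^2$ loss you worry about --- the paper avoids any union bound over $2^c$ by invoking the quantum union bound only over the $t\le\poly(q/\delta)$ bad measurements. The trade-off is that your route needs the Bregman-projection machinery (and full-rankness of the projections to keep the potential finite), while the paper's argument is elementary, using only Hoeffding, Markov, and the quantum union bound, at the cost of an extra $\frac1\delta\log\frac q\delta$ factor.
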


For completeness, we will give a proof of this theorem. See also Theorem 5 in \cite{GRdW08}, Theorem 4 in \cite{Aar18}, and Theorem 6 in \cite{ACH+18}. We will need the quantum union bound for the proof.

\begin{lemma}[Quantum union bound, see, e.g., Lemma 3.1 in \cite{Wil13}]\label{lem:quantum_union_bound}
    Suppose that for a quantum state $\rho$, we conduct a sequence of 2-value POVM operators $\{M_i\}_{i\in[k]}$ such that $\mathrm{tr}(M_i \rho) \geq 1 - \delta$. Then, the probability that all the measurements succeed is at least $1-2\sqrt{k\delta}$.
\end{lemma}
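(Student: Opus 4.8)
The plan is to reduce to the cleanest possible situation and then run a two-layer Cauchy--Schwarz argument that converts the per-step acceptance guarantee into a square-root bound on the total failure probability. First I would assume without loss of generality that $\rho=\ket{\psi}\bra{\psi}$ is pure (purify $\rho$ and replace each $M_i$ by $M_i\otimes\mathrm{I}$; both the marginals $\tr(M_i\rho)$ and the overall success probability are unchanged) and that each $\{M_i,\mathrm{I}-M_i\}$ is a \emph{projective} measurement $\{P_i,\mathrm{I}-P_i\}$ (dilate each two-outcome POVM by a fresh ancilla via Naimark; the enlarged projectors still satisfy $\tr(P_i\ket{\Psi}\bra{\Psi})=\tr(M_i\rho)\ge 1-\delta$ on the enlarged pure state $\ket{\Psi}$, and the sequential success probability is preserved). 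Writing $A=P_k\cdots P_1$ and $\ket{\psi_i}=P_i\cdots P_1\ket{\psi}$, the success probability is $p=\|A\ket{\psi}\|^2$, and the ``first failure at step $i$'' operators $F_i=(\mathrm{I}-P_i)P_{i-1}\cdots P_1$ together with $A$ form a complete measurement, so that $1-p=\sum_{i=1}^k\|F_i\ket{\psi}\|^2$ exactly.

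The heart of the argument is to bound the survival \emph{amplitude} $r:=\mathrm{Re}\,\bra{\psi}A\ket{\psi}$ rather than $p$ directly. Using the telescoping identity $\mathrm{I}-A=\sum_{i=1}^k(\mathrm{I}-P_i)P_{i-1}\cdots P_1$ together with the idempotency of $\mathrm{I}-P_i$, I would write $1-\bra{\psi}A\ket{\psi}=\sum_{i=1}^k\braket{(\mathrm{I}-P_i)\psi \,|\, F_i\psi}$, since $(\mathrm{I}-P_i)\ket{\psi_{i-1}}=F_i\ket{\psi}$. Applying Cauchy--Schwarz to each summand and then once more across the index $i$ gives $1-r\le\sum_i\sqrt{\epsilon_i}\,\|F_i\ket{\psi}\|\le\big(\textstyle\sum_i\epsilon_i\big)^{1/2}\big(\sum_i\|F_i\ket{\psi}\|^2\big)^{1/2}=\sqrt{\textstyle\sum_i\epsilon_i}\,\sqrt{1-p}\le\sqrt{k\delta}$, where $\epsilon_i=\|(\mathrm{I}-P_i)\ket{\psi}\|^2\le\delta$ and the last step uses $1-p\le 1$. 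Finally, Cauchy--Schwarz gives $\|A\ket{\psi}\|\ge|\bra{\psi}A\ket{\psi}|\ge |r|$, hence $p\ge r^2$, so $1-p\le 1-r^2=-(1-r)^2+2(1-r)\le 2(1-r)\le 2\sqrt{k\delta}$, which is exactly the claimed bound $p\ge 1-2\sqrt{k\delta}$. (In fact, retaining the factor $\sqrt{1-p}$ instead of bounding it by $1$ and combining $1-r\le\sqrt{k\delta}\,\sqrt{1-p}$ with $1-p\le 2(1-r)$ yields the stronger linear estimate $1-p\le 4k\delta$; the stated square-root form is the weaker consequence.)

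The step I expect to matter most is the second Cauchy--Schwarz, across the steps $i$: pairing $\sqrt{\epsilon_i}$ with the \emph{first-failure amplitudes} $\|F_i\ket{\psi}\|$, whose squares sum to exactly $1-p$, is what produces the square-root dependence $\sqrt{k\delta}$; the cruder estimate $\|F_i\ket{\psi}\|\le\|\ket{\psi_{i-1}}\|\le 1$ would only give the much weaker $\sum_i\sqrt{\epsilon_i}=k\sqrt{\delta}$. The main obstacle is that the premise $\tr(M_i\rho)\ge 1-\delta$ constrains only the \emph{original} state, whereas the $i$-th measurement acts on the drifted intermediate state $\ket{\psi_{i-1}}$; the argument avoids tracking this drift explicitly by exploiting the exact identity $(\mathrm{I}-P_i)\ket{\psi_{i-1}}=F_i\ket{\psi}$, which relies on $\mathrm{I}-P_i$ being a projector. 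This is precisely why the Naimark reduction to projective measurements is needed: for a genuine operator square root $\sqrt{M_i}$, the telescoping term $(\mathrm{I}-\sqrt{M_i})\ket{\psi_{i-1}}$ would not collapse onto the failure branch $\sqrt{\mathrm{I}-M_i}\ket{\psi_{i-1}}$. I would therefore double-check that the Naimark dilation preserves the sequential success probability and that the completeness relation $A^\dagger A+\sum_i F_i^\dagger F_i=\mathrm{I}$ holds, so that $1-p=\sum_i\|F_i\ket{\psi}\|^2$ is an exact identity rather than merely an inequality.
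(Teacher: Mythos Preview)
The paper does not prove this lemma at all; it merely cites it from \cite{Wil13} and uses it as a black box inside the proof of \cref{lem:replace}. So there is no ``paper's own proof'' to compare your proposal against.

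Your argument is correct. The reductions to a pure state (by purification) and to projective measurements (by Naimark dilation with fresh ancillas per step) are standard and preserve both the per-step acceptance probabilities and the sequential success probability. The telescoping identity $I-A=\sum_i F_i$ with $F_i=(I-P_i)P_{i-1}\cdots P_1$ is exact, and the key step $\bra{\psi}F_i\ket{\psi}=\bra{(I-P_i)\psi\,|\,F_i\psi}$ uses precisely the idempotency of $I-P_i$ that the Naimark reduction provides. The double Cauchy--Schwarz then yields $1-r\le\sqrt{k\delta}\sqrt{1-p}\le\sqrt{k\delta}$, and $1-p\le 2(1-r)$ finishes the claimed bound. (Your parenthetical remark that keeping the factor $\sqrt{1-p}$ gives the sharper $1-p\le 4k\delta$ is also correct, and is essentially Sen's non-commutative union bound; the $2\sqrt{k\delta}$ form stated here is the looser, classical-looking consequence that Wilde records.) One small point worth making explicit: when $\sqrt{k\delta}>1$ the conclusion is vacuous, so in the nontrivial regime $r\ge 1-\sqrt{k\delta}\ge 0$, which is what legitimizes the step $p\ge|\bra{\psi}A\ket{\psi}|^2\ge r^2$.
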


\begin{proof}[Proof of \cref{lem:replace}]
    Suppose that Alice sends $r$ many copies of her state.
    Let $\rho'=\rho^{\otimes r}$ be the state she sends, and $r q$ is the total number of qubits. Define the operator
    \[
        F_b=\frac{1}{r}\sum_{j=1}^r E_b^{(j)},
    \]
    where $E_b^{(j)}$ applies $E_b$ to the $j$th copy. This operator gives the fraction of successes if you separately measure each of the $r$ copies of $\rho$ with $E_b$. Hoeffding's bound (\cref{fact:hoeffding}) implies that the outcome $p_b'$ of this measurement applied to $\rho'$ will probably be close to its expectation $p_b=\mathrm{tr}(E_b\rho)$:
    \begin{equation}\label{eq:use_chernoff}
        \Pr[|p'_b-p_b|>\delta/4]\leq 2 e^{-\frac{\delta^2 r}{8}}.
    \end{equation}

    Let us show what is Alice's classical message. Consider all $b=1,\ldots,2^c$ in order. We will sequentially build $rq$-qubit density matrices $\rho_b$, one for each $E_b$.
    Alice's classical message will enable Bob to reconstruct this entire sequence. Let us say $b$ is \emph{good} if $|\mathrm{tr}(F_b \rho_b)-p_b| \leq \delta$, and say $b$ is \emph{bad} otherwise. Note that if Bob has a classical description of a good $\rho_b$, then he can approximate $p_b$ to within $\pm \delta$ (since he knows what $F_b$ is). We start with the completely mixed state: $\rho_1=\frac{I}{2^{rq}}$ and define the subsequent $\rho_b$ one by one, as follows. If $b$ is good, then define $\rho_{b+1}$ as equal to $\rho_b$. If $b$ is bad, Alice adds the pair $(b,\widetilde{p}_b)$ to her message, where $\widetilde{p}_b$ is the $\log(1/\delta)+O(1)$ most significant bits of $p_b$, and then $|\widetilde{p}_b-p_b| \ll \delta$. In this case, let $M_b$ be the projector on the subspace spanned by the eigenvectors of $F_b$ with eigenvalues in the interval $[\widetilde{p}_b-\delta/2,\widetilde{p}_b+\delta/2]$, and let $\rho_{b+1}$ be the renormalized projection of $\rho_b$ on this subspace. Note that we will later show $\mathrm{tr}(M_b \rho_b)$ is nonzero in \cref{eq:lower_bound} and thus this renormalized projection is well defined. Continuing all the way to $b=2^c$, we obtain a message $(b_1,\widetilde{p}_{b_1}),\ldots,(b_T,\widetilde{p}_{b_t})$ for some $t$. We need to show two things: (i) this message enables Bob to approximate all $p_b$ to within $\pm \delta$, and (ii) $t=O(rq)$, which implies that the message length is 
    \begin{equation}\label{eq:total_length}
        O\left(t \left( c+\log \frac{1}{\delta} \right) \right)
    \end{equation}    
    bits. We will show these two things in turn.

    First, we will show (i): the construction of the messages works. Note that Bob knows which $b\in[2^c]$ are bad, since those $b$ are exactly the ones in Alice's message. Bob can in fact compute the whole sequence $\rho_1,\ldots,\rho_{2^c}$ given the message: $\rho_1=\frac{I}{2^{rq}}$; if $b$ is good, then $\rho_{b+1}=\rho_b$; if $b$ is bad, then $(b,\widetilde{p}_b)$ is part of Alice's message and $\rho_{b+1}$ can be computed from this information. Suppose that Bob wants to approximate $p_b=\mathrm{tr}(E_b\rho)$. If $b$ is good then by definition $|\mathrm{tr}(F_b\rho_b)-p_b| \leq \delta$ and Bob can calculate $\mathrm{tr}(F_b\rho_b)$. If $b$ is bad, then the pair $(b,\widetilde{p}_b)$ is part of Alice's message, so Bob knows $p_b$ with sufficient precision. Hence Bob can approximate all $p_b$ up to $\pm \delta$, for all $b$ simultaneously.

    Second, we will show (ii): $t=O(rq)$. Define $\eta=1-\delta/4$ and assume $t \leq (\frac{q}{\delta})^{10}$ (this assumption will be justified later). We consider the sequence $b_1,\ldots,b_t$ of the first $t$ bad $b$'s.
    Let
    \[
        p=\mathrm{tr} \left( M_{b_t} \cdots M_{b_1} \frac{I}{2^{rq}} M_{b_1} \cdots M_{b_t} \right)
    \]
    be the probability that all $t$ measurements succeed if we start with the completely mixed state and sequentially measure $M_{b_1},\ldots,M_{b_t}$. 

    We will upper bound and lower bound $p$ and do the upper bound first. If we sequentially measure $M_{b_1},\ldots,M_{b_t}$, starting from the completely mixed state, and if all $t$ measurements succeed, then we exactly have the sequence of density matrices $\rho_{b_1}= \frac{I}{2^{rq}},\ldots,\rho_{b_t},\rho_{b_{t+1}}$. 
    We will show the claim that if $\rho_b$ is bad, then $\mathrm{tr}(M_b\rho_b)\leq \eta$. Let $X$ denote the random variable representing the outcome of measuring $\rho_b$ with the observable $F_b$. Note that $X$ takes values in $[0,1]$. Assume $\mathrm{tr}(M_b\rho_b)=\Pr[|X-\widetilde{p}_b|\leq\delta/2] > \eta$. Let us evaluate the value $\mathrm{tr}(F_b\rho_b) = \E[X]$. From the assumption, we have
    \[
        \Pr [ X \geq \widetilde{p}_b - \delta/2 ] \geq \Pr[|X-\widetilde{p}_b|\leq\delta/2] > \eta.
    \]
    Then, from Markov's inequality (\cref{fact:markov}), if $\widetilde{p}_b - \delta/2 >0$, we have
    \[
        \E[X] > \eta (\widetilde{p}_b - \delta/2).
    \]
    Otherwise, it is trivial that $\E[X] \geq 0 \geq \eta (\widetilde{p}_b - \delta/2)$. Since $X$ takes values in $[0,1]$, by a similar discussion, we have
    \[
        \E[X] \leq \eta (\widetilde{p}_b + \delta/2) + 1 - \eta.
    \]
    Since $|\widetilde{p}_b-p_b| \ll \delta$ and thus $\widetilde{p}_b \leq 1 + \delta/100$, $\mathrm{tr}(F_b\rho_b) = \E[X]$ must necessarily be in the range
    \[
        [\eta (\widetilde{p}_b - \delta/2), \eta (\widetilde{p}_b + \delta/2) + 1-\eta] \subseteq [\widetilde{p}_b - 3\delta/4, \widetilde{p}_b + 3\delta/4] \subseteq [p_b - \delta, p_b + \delta],
    \]
    and hence $\rho_b$ is good. Thus we have the claim by contraposition, and then the probability that all $t$ measurements succeed is 
    \begin{equation}\label{eq:upper_bound}
        p\leq \eta^t.
    \end{equation}

    Now we lower bound on $p$. Note that $M_b$ succeeds on $\rho'$ if and only if the outcome $p'_b$ of the observable $F_b$ is at most $\delta/2$ away from the number $\widetilde{p}_b$, which is the truncated version of $p_b=\mathrm{tr}(E_b\rho)$ (recall $|\widetilde{p}_b-p_b|\ll \delta$). Hence by \cref{eq:use_chernoff}, we have
    \begin{equation}\label{eq:nonzero}
        \mathrm{tr}(M_b\rho')=\Pr[|p'_b-\widetilde{p}_b|\leq\delta/2]\geq \Pr[|p'_b-p_b|\leq\delta/4]\geq 1-2 e^{-\frac{\delta^2 r}{8}}.
    \end{equation}
    This allows us to measure $\rho'$ with $M_b$ while disturbing the state by only an insignificant amount. If we measure each $M_b$, for the first $t$ bad $b$'s in sequence, starting from $\rho'$, then from \cref{lem:quantum_union_bound} with probability at least 
    \[
        1-2\sqrt{2t e^{-\frac{\delta^2 r}{8}}}
    \]
    all measurements will succeed. Set $r:=\frac{C}{\delta^2}\log\frac{q}{\delta}$ for a sufficiently large constant $C$. Then we have
    \[
        1-2\sqrt{2t e^{-\frac{\delta^2 r}{8}}} = 1-2\sqrt{t \frac{\delta^{10}}{400 q^{10}}} \geq \frac{1}{2},
    \]
    where we use the assumption $t \leq (\frac{q}{\delta})^{10}$. Moreover, the completely mixed state can be written as $\frac{I}{2^{rq}}=\frac{1}{2^{rq}}\rho'+(1-\frac{1}{2^{rq}})\rho''$ where $\rho''$ is orthogonal to $\rho'$. Hence, if we start from $\frac{I}{2^{rq}}$, then the probability of all measurements succeeding is 
    \begin{equation}\label{eq:lower_bound}
        p\geq \frac{1}{2^{rq+1}}.
    \end{equation}

    Combining the upper bound \cref{eq:upper_bound} and lower bound \cref{eq:lower_bound}, we have
    \[
        \frac{1}{2^{rq+1}} \leq \left(1-\frac{\delta}{4} \right)^t.
    \]
    Since $rq = O(\frac{q}{\delta^2} \log \frac{q}{\delta})$, we have
    \[
        t = O\left(\frac{q}{\delta^3} \log \frac{q}{\delta}\right).
    \]
    This bound also justifies the assumption $t \leq (\frac{q}{\delta})^{10}$. Therefore, from \cref{eq:total_length}, the total length of the deterministic message is
    \[
        O\left(\frac{q}{\delta^3} \log \left(\frac{q}{\delta}\right) \left(c+ \log \frac{1}{\delta} \right) \right)
    \]
    bits, as claimed.
\end{proof}

\subsection{Warm-up case: 2-round-LOCC SMP protocols}\label{subsec:warm-up}

\begin{proposition}\label{prop:warm-up}
    Assume that there exists a 2-value 2-round LOCC SMP protocol $\mathcal{P}$ to solve $\EQ_n$ with high probability. Then the number of qubits of Alice's message
    %$\rho_x$ 
    is $\Omega(n/\log n)$.
\end{proposition}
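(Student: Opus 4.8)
The plan is to reduce a hypothetical $2$-round protocol to the elementary fact that in \emph{any} deterministic SMP protocol for $\EQ_n$ Alice's message already has length at least $n$: her encoding $x\mapsto\alpha(x)$ must be injective, since otherwise the referee cannot distinguish the inputs $(x,x)$ and $(x',x)$ whenever $x\neq x'$ and $\alpha(x)=\alpha(x')$. So, given a $2$-value $2$-round LOCC SMP protocol $\mathcal{P}$ for $\EQ_n$ in which Alice's message $\rho_x$ consists of $a$ qubits, I want to build from it a deterministic SMP protocol for $\EQ_n$ whose messages are longer by only a $\poly(\log n)$ factor, and then read off $a=\Omega(n/\log n)$.

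The first step is to make the transcript structure explicit. Since the number of rounds is constant, on input $(x,y)$ the referee's interaction is a constant number of $2$-value measurements --- $\mathrm{Ref}_A$ measuring $\rho_x$ (and, in later rounds, its residue), $\mathrm{Ref}_B$ measuring $\sigma_y$ (and its residue), each operator chosen as a function of the earlier outcomes --- followed by a fixed Boolean function $f$ of the resulting constant-length transcript $\tau$. Because $\mathrm{Ref}_A$ acts only on $\rho_x$ and $\mathrm{Ref}_B$ only on $\sigma_y$, at every step the next outcome of $\mathrm{Ref}_A$ depends on $\rho_x$ and the transcript so far but not on $y$, and symmetrically for $\mathrm{Ref}_B$; multiplying the per-step conditional probabilities and telescoping the renormalizations of the post-measurement states, the probability of a transcript factors as
\[
\Pr[\tau\mid x,y]=\tr(E^A_\tau\,\rho_x)\cdot\tr(E^B_\tau\,\sigma_y),
\]
where $E^A_\tau$ is the (appropriately sandwiched) product of $\mathrm{Ref}_A$'s measurement operators along $\tau$ --- positive semidefinite with $E^A_\tau\preceq I$ --- and likewise $E^B_\tau$ for $\mathrm{Ref}_B$. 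Hence the acceptance probability on $(x,y)$ is the bilinear form $\sum_{\tau:\,f(\tau)=1}\tr(E^A_\tau\rho_x)\tr(E^B_\tau\sigma_y)$ over families $\{E^A_\tau\}_\tau$ and $\{E^B_\tau\}_\tau$ of only \emph{constantly many} $2$-value POVM operators on each side.

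Next I would invoke \cref{lem:replace}, with Alice in the role of ``Alice'', the referee in the role of ``Bob'' (who knows $\mathcal{P}$ and hence the whole family $\{E^A_\tau\}_\tau$), $q=a$, the operator family $\{E^A_\tau\}_\tau$ (so $c=O(1)$), and $\delta$ a sufficiently small absolute constant. It produces a \emph{deterministic} message of $O(a\log a)$ bits from Alice from which the referee can compute numbers $\widetilde p_\tau$ with $|\widetilde p_\tau-\tr(E^A_\tau\rho_x)|\le\delta$ simultaneously for all $\tau$; symmetrically Bob sends $O(b\log b)$ bits yielding $\widetilde q_\tau$ with $|\widetilde q_\tau-\tr(E^B_\tau\sigma_y)|\le\delta$. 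The referee outputs $1$ iff $\sum_{\tau:\,f(\tau)=1}\widetilde p_\tau\widetilde q_\tau>\tfrac12$. As there are only constantly many transcripts and each factor is estimated within $\delta$, this sum is within $O(\delta)$ of the true acceptance probability; taking $\delta$ small enough that this error is below $\tfrac1{12}$, the threshold correctly separates $\Pr[\mathrm{accept}]\ge\tfrac23$ from $\Pr[\mathrm{accept}]\le\tfrac13$, so the protocol computes $\EQ_n$. This is a deterministic SMP protocol for $\EQ_n$ in which Alice sends $O(a\log a)$ bits, and combining with the deterministic lower bound forces $a\log a=\Omega(n)$; if $a\ge n$ the claim is immediate, and otherwise $\log a<\log n$ gives $a=\Omega(n/\log n)$ (the same argument also bounds Bob's message).

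The only delicate point is the error analysis, and in the $2$-round case it is essentially free: there are $O(1)$ transcripts, so a constant additive accuracy $\delta$ already propagates to a small enough error in the acceptance probability. This is precisely where the multi-round generalization gets hard: with $k$ rounds there are $2^{\Theta(k)}$ transcripts, and the per-step conditional probabilities can no longer be bundled away but must be recovered one at a time as \emph{ratios} of estimated inner products (this is \cref{lem:ratio}), so the per-estimate accuracy has to be pushed down to roughly $2^{-\Theta(k)}$; tracking how this shrinking $\delta$ feeds into the length bound of \cref{lem:replace} is what caps the number of affordable rounds at $o(\log n)$ and turns the lower bound from $\Omega(n/\log n)$ into $n^{\Omega(1)}$.
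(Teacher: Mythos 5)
Your proof is correct, but it takes a different route through the reduction than the paper's own proof of this proposition. The paper replaces only Alice's quantum message: it applies \cref{lem:replace} to the five sandwiched operators built from $\mathrm{Ref}_A$'s measurements, keeps Bob's message quantum, and has the referee re-enact the LOCC interaction by \emph{sampling} $\mathrm{Ref}_A$'s outcomes from the estimated conditional probabilities (recovered as ratios $v''_{1|h}/v''_{a}$, which forces the clipping adjustments $v\mapsto v''$ to ensure these are valid probabilities) while physically measuring $\sigma_y$ for $\mathrm{Ref}_B$'s steps; this yields a hybrid SMP protocol whose classical message is deterministic, and the injectivity argument for $\EQ_n$ then gives $O(q\log q)=\Omega(n)$. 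You instead replace \emph{both} messages, use the bilinear factorization $\Pr[\tau\mid x,y]=\tr(E^A_\tau\rho_x)\tr(E^B_\tau\sigma_y)$ of transcript probabilities (the telescoping that the paper isolates as \cref{lem:ratio}), and let a fully deterministic referee threshold the estimated acceptance probability $\sum_{\tau:f(\tau)=1}\widetilde p_\tau\widetilde q_\tau$; since there are $O(1)$ transcripts, constant accuracy suffices and no sampling or clipping is needed, and the same injectivity argument applies to the resulting deterministic SMP protocol. Your variant is essentially the paper's later ``general arguments'' (cf.\ \cref{lem:error_replace_both} and \cref{prop:two-way-LOCC_general}) specialized to two rounds, and it is arguably cleaner for this purpose: it avoids the validity adjustments entirely. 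What the paper's one-sided version buys is a reduction to the hybrid SMP model with Bob's quantum message intact, matching the theme of \cref{sec:one-way_locc} and illustrating the simulation-by-sampling technique that the multi-round analysis in \cref{subsec:many_rounds} builds on; both routes end at the same $q=\Omega(n/\log n)$ bound, and your closing remarks correctly identify why the error and message-length trade-off caps the affordable number of rounds at $O(\log n)$ in the general case.
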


\begin{proof}
    Let $\{\rho_x\}_{x\in \{0,1\}^n}$ be a quantum encoding by Alice of her input $x$, and $\{\sigma_y\}_{y \in \{0,1\}^n}$ be a quantum encoding by Bob of his input $y$.
    Let $r$ be a number of rounds, and $a \in \{0,1\}^{r}$ be previous measurement results by $\rm{Ref}_A$, and $b \in \{0,1\}^{r}$ be previous measurement results by $\rm{Ref}_B$. Let $h$ be all previous measurement results $a_0b_0\cdots a_{r-1}b_{r-1}$. For $m \in \{0,1\}$, let $M_{m|h}$ be a measurement operator of the $(r+1)$th measurement by $\rm{Ref}_A$, for previous measurement results $h$. Without loss of generality, we assume that the first measurement is done by $\rm{Ref}_A$ for $\rho_x$, and let $M_0$ and $M_1$ be the first measurement operators. We also assume, without loss of generality, that the final measurement outcome by $\rm{Ref}_A$ is the output of the protocol because it depends on all the previous measurement outcomes.
    
    For each $r \in \mathbb{N}$, let us denote by $p^A_{m|h}$ the probability that, conditioned on all previous measurement outcomes $h = a_0 b_0\cdots a_{r-1} b_{r-1}$ which consists of measurement outcomes $a \in \{0,1\}^r$ by $\rm{Ref}_A$ and measurement outcomes $b \in \{0,1\}^r$ by $\rm{Ref}_B$, the $(r+1)$th measurement result by $\rm{Ref}_A$ is $m \in \{0,1\}$. Let us denote by $p^B_{m|h}$ the probability that, conditioned on all previous measurement outcomes $h = a_0 b_0\cdots a_{r-1} b_{r-1} a_r$ which consists of measurement outcomes $a \in \{0,1\}^{r+1}$ by $\rm{Ref}_A$ and measurement outcomes $b \in \{0,1\}^r$ by $\rm{Ref}_B$, the $(r+1)$th measurement result by $\rm{Ref}_A$ is $m \in \{0,1\}$. Let us abbreviate $p^A_{m|\emptyset}$ as $p^A_m$. 
     
    Let us also denote by $v_0 = \mathrm{tr}(M_0^\dagger M_0 \rho_x)$, $v_1 = \mathrm{tr}(M_1^\dagger M_1 \rho_x) = 1 - v_0$,
    \begin{align*}
    v_{1|00}=\mathrm{tr}(M^{ \dagger}_0 M_{1|00}^{\dagger} M_{1|00} M_0 \rho_x), \quad &  v_{1|01}=\mathrm{tr}(M_0^\dagger M_{1|01}^{\dagger} M_{1|01} M_0 \rho_x), \\
    v_{1|10}=\mathrm{tr}(M^{\dagger}_1 M_{1|10}^{\dagger} M_{1|10} M_1 \rho_x), \quad & v_{1|11}=\mathrm{tr}(M^{\dagger}_1 M_{1|11}^{\dagger} M_{1|11} M_1 \rho_x).
    \end{align*}
    By definition,
    \[
        p^A_0 = v_0,\ \ p^A_1 = v_1.
    \]
    For example, when the result of the first measurement by $\rm{Ref}_A$ is $1$ and that of the first measurement by $\rm{Ref}_B$ is $0$, the probability that the second measurement result by $\rm{Ref}_A$ is $1$ is 
    \begin{equation}\label{eq:fraction}
        p^A_{1|10} = \mathrm{tr}(M_{1|10}^\dagger M_{1|10} \frac{M_1 \rho_x M^\dagger_1}{\mathrm{tr}(M^\dagger_1 M_1 \rho_x)}) = \frac{\mathrm{tr}(M^\dagger_1 M_{1|10}^\dagger M_{1|10} M_1 \rho_x) }{\mathrm{tr}(M^\dagger_1 M_1 \rho_x)} = \frac{v_{1|10}}{v_1},
    \end{equation}
    where we use the cyclic property of the trace. By the same argument, we have 
     \begin{equation}
         p^A_{1|00} = \frac{v_{1|00}}{v_0},\ \ p^A_{1|01} = \frac{v_{1|01}}{v_0},\ \ p^A_{1|10} = \frac{v_{1|10}}{v_1},\ \ p^A_{1|11} = \frac{v_{1|11}}{v_1}.
     \end{equation}
    
    Since $ p^A_{0|00} +  p^A_{1|00} = 1$, we have
    \[
        \frac{v_{0|00}}{v_0} + \frac{v_{1|00}}{v_0} = 1,
    \]
    which implies 
    \[
         v_0 = v_{0|00} + v_{1|00}.
    \]
    By the same argument, we also have
    \[
         v_0 = v_{0|01} + v_{1|01},\ \ v_1 = v_{0|10} + v_{1|10},\ \ v_1 = v_{0|11} + v_{1|11}.
    \]
    
    Then, the probability that the original LOCC protocol accepts is
    \begin{align*}
        & p^A_{0} \cdot p_{0|0}^B \cdot p^A_{1|00} + p^A_{0} \cdot p_{1|0}^B \cdot p^A_{1|01} + p^A_{1} \cdot p_{0|1}^B \cdot p^A_{1|10} + p^A_{1} \cdot p_{1|1}^B \cdot p^A_{1|11} \\
        &= v_{0} \cdot p_{0|0}^B \cdot \frac{v_{1|00}}{v_0} + v_{0} \cdot p_{1|0}^B \cdot \frac{v_{1|01}}{v_0} + v_{1} \cdot p_{0|1}^B \cdot \frac{v_{1|10}}{v_1} +  v_{1} \cdot p_{1|1}^B \cdot \frac{v_{1|11}}{v_1}\\
        &= p_{0|0}^B \cdot v_{1|00} + p_{1|0}^B \cdot v_{1|01} + p_{0|1}^B \cdot v_{1|10} + p_{1|1}^B \cdot v_{1|11}.
    \end{align*}
    
    Next, we will replace quantum messages $\rho_x$ with classical messages $s_x$ using \cref{lem:replace}. Let $\delta$ be a sufficiently small constant, say $\frac{1}{10^6}$. Let us denote by $q$ the number of qubits of the message $\rho_x$. The operators we will consider are 
    \begin{align*}
        \{E_b\} = \{ M_0^\dagger M_0, M_0^\dagger M_{1|00}^\dagger M_{1|00} M_0, M_0^\dagger M_{1|01}^\dagger M_{1|01} M_0, M_1^\dagger M_{1|10}^\dagger M_{1|10} M_1,  M_1^\dagger M_{1|11}^\dagger M_{1|11} M_1 \}.
    \end{align*}
    From the definition, these operators $\{E_b\}$ satisfy $0 \leq E_b \leq I$.
    Since the number of operators we need to care is 5, the length of the classical string $s_x$ is $O(q \log q)$ bits. Let us consider a hybrid SMP scheme that uses $s_x$ and $\sigma_y$ as two messages where the referee simulates the original LOCC interactions. Let us denote by $v'_0, v'_{1|00}, v'_{1|01}, v'_{1|01}, v'_{1|11}$ the five values the referee guesses using $s_x$. We also define another set of variables $v''_0, \ldots, v''_{1|11}$ that the referee will use.
    \begin{itemize}
\item    If $v'_0 > 1$, $v''_0 = 1$. If $v'_0 < 0$, $v''_0 = 0$. Otherwise, $v''_0 = v'_0$. Let $v''_1 = 1-v''_0$.
    
\item    If $v'_{1|00} > v''_0$, $v''_{1|00} = v''_0$. If $v'_{1|00} < 0$, $v''_{1|00} = 0$. Otherwise, $v''_{1|00} = v'_{1|00}$.
\item    If $v'_{1|01} > v''_0$, $v''_{1|01} = v''_0$. If $v'_{1|01} < 0$, $v''_{1|01} = 0$. Otherwise, $v''_{1|01} = v'_{1|01}$.
\item    If $v'_{1|10} > v''_1$, $v''_{1|10} = v''_1$. If $v'_{1|10} < 0$, $v''_{1|10} = 0$. Otherwise, $v''_{1|10} = v'_{1|10}$.
\item   If $v'_{1|11} > v''_1$, $v''_{1|11} = v''_1$. If $v'_{1|11} < 0$, $v''_{1|11} = 0$. Otherwise, $v''_{1|11} = v'_{1|11}$.
    \end{itemize}
    This adjustment is required to make $v''_{0}$, $v''_{1}$, $\frac{v''_{1|00}}{v''_{0}}$, $\frac{v''_{1|01}}{v''_{0}}$, $\frac{v''_{1|10}}{v''_{1}}$, $\frac{v''_{1|11}}{v''_{1}}$ valid probabilities. 
    
    The referee first simulates the first measurement result by $\rm{Ref}_A$. With probability $v''_0$, the referee simulates the result to be $0$ and with probability $v''_1$ the result to be $1$. Next, the referee measures Bob's message based on the value of the first measurement result obtained by the simulation. Finally, the referee simulates the final measurement result by $\rm{Ref}_A$ using $v''_{1|00},v''_{1|01},v''_{1|10},v''_{1|11}$.
    The acceptance probability of the simulation is
    \begin{align*}
        &v''_{0} \cdot p_{0|0}^B \cdot \frac{v''_{1|00}}{v''_{0}} + v''_{0} \cdot p_{0|0}^B \cdot \frac{v''_{1|01}}{v''_{0}} + v''_{1} \cdot p_{1|0}^B \cdot \frac{v''_{1|10}}{v''_{1}} + v''_{1} \cdot p_{1|1}^B \cdot \frac{v''_{1|11}}{v''_{1}} \\
        &= p_{0|0}^B \cdot v''_{1|00} + p_{1|0}^B \cdot v''_{1|01} + p_{0|1}^B \cdot v''_{1|10} + p_{1|1}^B \cdot v''_{1|11}.
    \end{align*}
    If $v''_0=0$, $v''_{1|00} = 0$ and $v''_{1|01} = 0$. If $v''_1=0$, $v''_{1|10} = 0$ and $v''_{1|11} = 0$. Therefore, the quantity does not lose generality.

    If $v'_0 > 1$, $- \delta \leq v_0 - v'_0 \leq v_0 - v''_0 = v_0 - 1 \leq 0$, which implies $|v_0 - v''_0| \leq \delta$. If $v'_0 < 0$, $0 \leq v_0 - v''_0 = v_0 \leq v_0 - v'_0 \leq \delta$, which implies $|v_0 - v''_0| \leq \delta$. If $v''_0 = v'_0$, $|v_0 - v''_0| = |v_0 - v'_0| \leq \delta$.
    If $v'_0 > 1$, $|v_1 - v''_1| = v_1 = 1-v_0 \leq \delta$. If $v'_0 < 0$, $|v_1 - v''_1| = |v_1 - 1| = |v_0| \leq \delta$. If $v''_0 = v'_0$, $|v_1 - v''_1| = |(1-v_0) - (1-v''_0)| \leq \delta$. In summary, we have $|v_0 - v''_0| \leq \delta$ and $|v_1 - v''_1| \leq \delta$ in all cases.
    
    Let us show the quantity $|v_{1|00} - v''_{1|00}|$ is small.
    If $v'_{1|00} > v''_0$, $v''_{1|00} = v''_0$ and we have 
    \[
         - \delta \leq v_{1|00} - v'_{1|00} \leq v_{1|00} - v''_{0} \leq v_{0} - v''_{0} \leq \delta
    \]
    by the definitions, and it implies $|v_{1|00} - v''_{1|00}| = |v_{1|00} - v''_0| \leq \delta$. 
    If $v'_{1|00} < 0$, $v''_{1|00} = 0$ and we have
    \[
        |v_{1|00} - v''_{1|00}| =  v_{1|00} \leq |v_{1|00} - v'_{1|00}| \leq \delta
    \]
    by the definitions.
    Otherwise, $v''_{1|00} = v'_{1|00}$ and $|v_{1|00} - v''_{1|00}| = |v_{1|00} - v'_{1|00}| \leq \delta$. In summary, in all cases, we have $|v_{1|00} - v''_{1|00}| \leq \delta$. By similar discussions, we also have $|v_{1|01} - v''_{1|01}| \leq \delta$.

    Let us next show the quantity $|v_{1|10} - v''_{1|10}|$ is small. If $v'_{1|10} > v''_1$, $v''_{1|10} = v''_1$ and we have
    \[
         - \delta \leq v_{1|10} - v'_{1|10} \leq v_{1|10} - v''_{1} \leq v_{1} - v''_{1} \leq \delta
    \]
    by the definitions, and it implies $|v_{1|10} - v''_{1|10}| = |v_{1|10} - v''_1| \leq \delta$. If $v'_{1|00} < 0$, $v''_{1|00} = 0$ and we have
    \[
        |v_{1|00} - v''_{1|00}| =  v_{1|00} \leq |v_{1|00} - v'_{1|00}| \leq \delta
    \]
    by the definitions.
    If $v'_{1|10} < 0$, $v''_{1|10} = 0$ and we have
    \[
        |v_{1|10} - v''_{1|10}| =  v_{1|10} \leq |v_{1|10} - v'_{1|10}| \leq \delta
    \]
    by the definitions.
    Otherwise, $v''_{1|10} = v'_{1|10}$ and $|v_{1|10} - v''_{1|10}| = |v_{1|10} - v'_{1|10}| \leq \delta$ by the definition. In summary, in all cases, we have $|v_{1|10} - v''_{1|10}| \leq \delta$. By similar arguments, we also have $|v_{1|11} - v''_{1|11}| \leq \delta$.
    
    Therefore, the difference between the true value and the simulation value is
    \begin{align*}
        &|(p_{0|0}^B \cdot v_{1|00} + p_{1|0}^B \cdot v_{1|01} + p_{0|1}^B \cdot v_{1|10} + p_{1|1}^B \cdot v_{1|11}) - (p_{0|0}^B \cdot v''_{1|00} +p_{1|0}^B \cdot v''_{1|10} + p_{0|1}^B \cdot v''_{1|10} + p_{1|1}^B \cdot v''_{1|11})| \\
        & \leq p_{0|0}^B |v_{1|00} - v''_{1|00}| + p_{1|0}^B |v_{1|01} - v''_{1|01}| + p_{0|1}^B |v_{1|10} - v''_{1|10}| + p_{1|1}^B |v_{1|11} - v''_{1|11}| \\
        & \leq (p_{0|0}^B + p^B_{1|0} + p^B_{0|1} + p^B_{1|1}) \delta = 2 \delta,
    \end{align*}
    where we use $p_{0|0}^B + p^B_{1|0} = 1$ and $p^B_{0|1} + p^B_{1|1} = 1$.
    
    If classical messages of hybrid SMP schemes for $\EQ_n$ are deterministic, then the length of the classical message is $\Omega(n)$. This is because, for $2^n$ inputs, there are some conflicts of deterministic messages and the referee cannot distinguish them. We thus have $O(q \log q) = \Omega(n)$ and $q = \Omega(n/\log n)$, which concludes the proof.
\end{proof}

\subsection{Multiple-round LOCC SMP protocols}\label{subsec:many_rounds}

When we increase the number $r$ of rounds by $2$, the number of operators we will care is increased by $4^r$. This is because both the numbers of measurement outcomes by $\rm{Ref}_A$ and $\rm{Ref}_B$ are $2$. Therefore, the total number of the operators we will care for $2r$-round LOCC protocols is 
\begin{equation}\label{eq:num}
    \sum_{i=0}^{r-1} 4^r = \frac{4^r - 1}{4-1} \leq 4^r = 2^{2r}. 
\end{equation}

We first show that each measurement outcome can be simulated by taking ratios of $2^{2r}$ values to simulate quantum $2$-value $2r$-round quantum LOCC SMP protocol. Note that for measurement operators $M_i$ for $i\in[n]$, $0 \leq M_0^\dagger\cdots M_{n-1}^\dagger M_{n-1}\cdots M_0 \leq I$.

\begin{lemma}\label{lem:ratio}
    Let us consider the case where a quantum state $\rho$ is measured $n \in \mathbb{N}$ times, and let $M_i$ be a measurement operator of the $i$th 2-value measurement for $i \in [n]$. Let us denote $v_\emptyset:=1$ and $v_i := \mathrm{tr}(M_0^\dagger\cdots M_{i}^\dagger M_{i}\cdots M_0 \rho )$. Then, the probability that the $n$th measurement succeeds conditioned on the event that all $n-1$ previous measurements succeed, $p_{n-1|n-2}$, is $\frac{v_{n-1}}{v_{n-2}}$.
\end{lemma}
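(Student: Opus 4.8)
The plan is to track, by induction on $i$, two quantities: the probability that the first $i+1$ measurements $M_0,\dots,M_i$ all succeed when applied in sequence to $\rho$, and the normalized post-measurement state conditioned on this event. Concretely, I would show that this probability is exactly $v_i$ and that the conditional state is $\rho_i := \frac{1}{v_i}\, M_i\cdots M_0\, \rho\, M_0^\dagger\cdots M_i^\dagger$. The base case is the convention $v_\emptyset=1$ with $\rho_\emptyset=\rho$, encoding that no measurement has been performed yet. The lemma is then exactly the statement obtained from this invariant by performing one further measurement.

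For the inductive step, suppose the claim holds through index $i-1$. Conditioned on $M_0,\dots,M_{i-1}$ all succeeding, the state is $\rho_{i-1}$, so the probability that $M_i$ also succeeds is $\mathrm{tr}(M_i^\dagger M_i\,\rho_{i-1}) = \frac{1}{v_{i-1}}\,\mathrm{tr}\big(M_i^\dagger M_i M_{i-1}\cdots M_0\,\rho\, M_0^\dagger\cdots M_{i-1}^\dagger\big)$, which by the cyclic property of the trace equals $\frac{v_i}{v_{i-1}}$. Multiplying by the inductive probability $v_{i-1}$ that the earlier measurements succeed gives total probability $v_i$, and the new conditional state is $\frac{M_i\rho_{i-1}M_i^\dagger}{\mathrm{tr}(M_i^\dagger M_i\rho_{i-1})}=\rho_i$, once more simplifying the normalization by cyclicity. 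Applying this with $i=n-1$: the quantity $p_{n-1|n-2}$ in the statement is by definition $\mathrm{tr}(M_{n-1}^\dagger M_{n-1}\rho_{n-2})=\frac{v_{n-1}}{v_{n-2}}$, as claimed. This is precisely the manipulation already carried out in \cref{eq:fraction} of the warm-up, iterated $n-1$ times rather than once.

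I expect no genuine obstacle beyond bookkeeping of the indices, since every step reduces to cyclic invariance of the trace. The one point deserving a remark is well-definedness: $\rho_{n-2}$ and the conditional probability $p_{n-1|n-2}$ only make sense when $v_{n-2}>0$, i.e.\ when the conditioning event has positive probability; this is implicit in the statement and is the only regime in which the lemma is used in \cref{subsec:many_rounds} (if $v_{n-2}=0$ the protocol's execution never reaches this branch, so no probability need be assigned there).
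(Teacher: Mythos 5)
Your proof is correct and follows essentially the same route as the paper's: an induction over the measurement sequence using the post-measurement state recursion and the cyclic property of the trace. The only difference is cosmetic bookkeeping — you carry the closed-form invariant $\rho_i = \frac{1}{v_i}M_i\cdots M_0\,\rho\,M_0^\dagger\cdots M_i^\dagger$ together with the cumulative success probability $v_i$, which lets each inductive step use a single application of cyclicity instead of the paper's telescoping unrolling of the normalizations, and your remark on well-definedness when $v_{n-2}>0$ is a sensible addition that the paper leaves implicit.
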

\begin{proof}
    We will show the statement by induction. When $n=1$, $p_{0|\emptyset} = \mathrm{tr}(M_0^\dagger M_0 \rho) = \frac{v_0}{v_\emptyset}$.
    
    For $k>1$, let us consider $p_{k|k-1}$. Let us denote by $\rho_i$ the state after the $(i-1)$th measurement succeeds for $i \in [k]$. In other words, we define
    \[
        \rho_i := \frac{M_i \rho_{i-1} M_i^\dagger}{\mathrm{tr}(M_i^\dagger M_i \rho_{i-1})}
    \]
    for $i \in [k]$. Using the notation, let us assume that 
    \[
        p_{i|i-1} = \mathrm{tr}(M_{i}^\dagger M_{i} \rho_{i-1}) = \frac{v_{i}}{v_{i-1}}
    \]
    for all $i \in [k-1]$.

    Then, we have
    \begin{align*}
        p_{k|k-1} &= \mathrm{tr}(M_k^\dagger M_k \rho_{k-1}) = \frac{\mathrm{tr}(M_k^\dagger M_k M_{k-1} \rho_{k-2} M_{k-1}^\dagger)}{\mathrm{tr}(M_{k-1}^\dagger M_{k-1} \rho_{k-2})} \\
        &= \frac{v_{k-2}}{v_{k-1}} \mathrm{tr}(M_k^\dagger M_k M_{k-1} \rho_{k-2} M_{k-1}^\dagger) = \frac{v_{k-2}}{v_{k-1}} \mathrm{tr}(M_{k-1}^\dagger M_k^\dagger M_k M_{k-1} \rho_{k-2}) \\
        &= \frac{v_{k-2}}{v_{k-1}} \mathrm{tr}(M_{k-1}^\dagger M_k^\dagger M_k M_{k-1} \frac{M_{k-2} \rho_{k-3} M_{k-2}^\dagger}{\mathrm{tr}(M_{k-2}^\dagger M_{k-2} \rho_{k-3})}) \\
        &= \frac{v_{k-2}}{v_{k-1}} \frac{v_{k-3}}{v_{k-2}} \mathrm{tr}(M_{k-2}^\dagger M_{k-1}^\dagger M_k^\dagger M_k M_{k-1} M_{k-2} \rho_{k-3}) \\
        &= \cdots \\
        &= \frac{v_{k-2}}{v_{k-1}} \frac{v_{k-3}}{v_{k-2}} \cdots \frac{v_1}{v_2}\mathrm{tr}(M_2^\dagger \cdot M_k^\dagger M_k \cdots M_2 \rho_1) \\
        &= \frac{v_{k-2}}{v_{k-1}} \frac{v_{k-3}}{v_{k-2}} \cdots \frac{v_1}{v_2} \frac{v_k}{v_1}\\
        &= \frac{v_k}{v_{k-1}}
    \end{align*}
    where we use the cyclic property of the trace.
    Therefore, by induction, we have the claim.
\end{proof}

\begin{lemma}\label{lem:error}
    Let $\mathcal{P}$ be a quantum $2$-value $2r$-round LOCC SMP protocol to solve $\EQ_n$. Using a deterministic message from \cref{lem:replace}, the referee can simulate $\mathcal{P}$ with error $2^r(r+1)\delta$.
\end{lemma}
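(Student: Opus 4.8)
The plan is to generalize the warm-up \cref{prop:warm-up}, keeping the same three moves: decompose the acceptance probability of $\mathcal{P}$ into a sum over measurement transcripts; collapse the contribution of $\rm{Ref}_A$'s adaptive measurements on $\rho_x$ into trace values of boundedly many fixed operators; and then replace $\rho_x$ by the deterministic message of \cref{lem:replace}, having the referee re-sample $\rm{Ref}_A$'s outcomes from clipped estimates of those traces while running $\rm{Ref}_B$'s measurements on $\sigma_y$ faithfully. Here a $2r$-round protocol performs $r+1$ measurements on $\rho_x$ (by $\rm{Ref}_A$), interleaved with $r$ measurements on $\sigma_y$ (by $\rm{Ref}_B$), with the output being $\rm{Ref}_A$'s last outcome, which we take to be ``accept'' when it equals $1$. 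For a transcript $h = a_1 b_1 \ldots a_r b_r a_{r+1}$ its probability factors as $\bigl(\prod_{i=1}^{r+1} p^A_{a_i|\cdot}\bigr)\bigl(\prod_{i=1}^{r} p^B_{b_i|\cdot}\bigr)$, and by \cref{lem:ratio} applied to $\rho_x$ with $\rm{Ref}_A$'s successive operators $M_{a_1|\cdot},\ldots,M_{a_{r+1}|\cdot}$ along $h$, the first product telescopes to $v_h := \tr(E_h\rho_x)$, where $E_h = M^\dagger_{a_1|\cdot}\cdots M^\dagger_{a_{r+1}|\cdot} M_{a_{r+1}|\cdot}\cdots M_{a_1|\cdot}$ satisfies $0\leq E_h\leq I$. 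Hence $P := \Pr[\mathcal{P}\text{ accepts}] = \sum_{h:\, a_{r+1}=1} v_h \prod_{i=1}^{r} p^B_{b_i|\cdot}$, so the only information about $\rho_x$ the referee uses is the list of values $v_g = \tr(E_g\rho_x)$ over all transcript prefixes $g$ that end in an $\rm{Ref}_A$-outcome equal to $1$; by \cref{eq:num} there are at most $2^{2r}$ such operators $E_g$, up to a constant factor.

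First I would apply \cref{lem:replace} to Alice's $q$-qubit state $\rho_x$ with this family $\{E_g\}$ (so $c = O(r)$) and the target parameter $\delta$, obtaining a deterministic message that lets the referee compute numbers $v'_g$ with $|v'_g - v_g|\leq\delta$ for all $g$ simultaneously. From these the referee builds clipped values $v''_g$ by descending the $\rm{Ref}_A$ measurement tree: set $v''_\emptyset = 1$; and given $v''_g$ at level $i-1$, set $v''_{g\circ 1} := \min\!\bigl(v''_g,\max(0,v'_{g\circ 1})\bigr)$ and $v''_{g\circ 0} := v''_g - v''_{g\circ 1}$ (the interleaved $\rm{Ref}_B$ outcome being absorbed into the child index), which forces $v''_{g\circ 0},v''_{g\circ 1}\geq 0$ and $v''_{g\circ 0}+v''_{g\circ 1}=v''_g$ so that $v''_{g\circ m}/v''_g$ are valid conditional probabilities. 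The simulating protocol then has the referee sample $\rm{Ref}_A$'s $i$-th outcome to be $m$ with probability $v''_{g\circ m}/v''_g$ (with $g$ the transcript built so far), perform $\rm{Ref}_B$'s honest history-dependent measurement on $\sigma_y$ between these steps, and accept iff the last $\rm{Ref}_A$ outcome is $1$. The same telescoping gives acceptance probability $P' = \sum_{h:\, a_{r+1}=1} v''_h \prod_{i=1}^{r} p^B_{b_i|\cdot}$.

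To finish, I would bound $|P-P'| \leq \sum_{h:\, a_{r+1}=1} |v_h - v''_h|\prod_{i=1}^{r} p^B_{b_i|\cdot}$; since for each of the $2^r$ choices of $(a_1,\ldots,a_r)$ the remaining factors $\prod_i p^B_{b_i|\cdot}$ sum to $1$ over $(b_1,\ldots,b_r)$ (each $\rm{Ref}_B$ step is a genuine POVM on the collapsed $\sigma_y$), this is at most $2^r\max_h |v_h-v''_h|$. A short case analysis on whether the $\min$ or the $\max$ in the definition of $v''_{g\circ 1}$ is active, using $v_{g\circ 1}\leq v_g$, $v_{g\circ 0}+v_{g\circ 1}=v_g$, $0\leq v_g\leq 1$, and $|v'_{g\circ 1}-v_{g\circ 1}|\leq\delta$, then shows $|v_{g\circ m}-v''_{g\circ m}| \leq |v_g - v''_g| + \delta$ for $m\in\{0,1\}$; with $v''_\emptyset = v_\emptyset = 1$ and depth $r+1$ this yields $\max_h|v_h - v''_h| \leq (r+1)\delta$, hence $|P-P'|\leq 2^r(r+1)\delta$, and taking $\delta$ a small enough constant (legitimate whenever $r = o(\log n)$) preserves the bounded error, giving \cref{thm:two-way_LOCC} together with the message bound of \cref{lem:replace}.

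The main obstacle I expect is exactly the error bookkeeping in the last step: the clipping interval at level $i$ is anchored at a clipped value $v''_g$ from level $i-1$, so a naive argument lets the error double each level and blow up to $2^r\delta$ per transcript. The resolution is the complement definition $v''_{g\circ 0} = v''_g - v''_{g\circ 1}$ together with the observation that precisely when the clip on $v''_{g\circ 1}$ is active (so its error could be as large as $|v_g - v''_g|$) the true rejected-branch weight $v_{g\circ 0}$ is within $|v_g - v''_g| + \delta$ of $0$; this makes the per-level errors add rather than multiply, and it also removes any small-denominator issue in the ratios $v''_{g\circ m}/v''_g$ used to drive the simulation.
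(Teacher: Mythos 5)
Your proposal is correct and follows essentially the same route as the paper's proof: express the transcript probabilities via \cref{lem:ratio} as traces $v_{m|h}$ of bounded operators, estimate them with \cref{lem:replace}, clip one child and define the other as the complement of the (clipped) parent so the simulated conditional probabilities are valid, and show by induction that each level adds at most $\delta$ error, giving $(r+1)\delta$ per transcript and $2^r(r+1)\delta$ after summing against $\mathrm{Ref}_B$'s conditional probabilities. The only cosmetic difference is which child you clip (you clip the $1$-branch and complement the $0$-branch, the paper does the reverse in its inductive step), which does not affect the argument.
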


\begin{proof}
We assume, without loss of generality, that the final measurement outcome by $\rm{Ref}_A$ is the output of the protocol because it depends on all the previous measurement outcomes.
For all previous measurement results $h = a_0 b_0 \cdots$, let us denote by $h[i]$ the $(i+1)$th bit of $h$ and by $h[i,j]$ $(i+1)$th to $(j+1)$th bits of $h$. 
For $h \in \{0,1\}^{2R-1}$, let us define 
\[
v_{m|h} = \mathrm{tr}(M_{h[0]}^\dagger \cdots M_{h[2R-2]|h[0,2R-3]}^\dagger M_{h[2R-2]|h[0,2R-3]} \cdots M_{h[0]} \rho_x).
\] 
We also define $v''_{m|h}$ to make the probabilities valid as in \cref{prop:warm-up} (see the proof of \cref{lem:error} for a formal definition).

The true and simulation probability is the sum of $2^{2r}$ values and, from \cref{lem:ratio}, the total error is
\[
    \left| \sum_{h \in \{0,1\}^{2r}} \prod_{i \in [r]} p^B_{h[2i-1]|h[0,2i-2]} (v_{1|h} - v''_{1|h}) \right| \leq \sum_{h \in \{0,1\}^{2r}} \prod_{i \in [r]} p^B_{h[2i-1]|h[0,2i-2]}  |v_{1|h} - v''_{1|h}|,
\]
where $h[2i-1]$ denotes the $(2i)$th bit of $h$ (i.e., $b_i$) and $h[0,2i-2]$ denotes 1st to $(2i-1)$th bits of $h$ (i.e., $a_0b_0\cdots a_{i-1} b_{i-1} a_i$).

Let us evaluate how much error each term causes.
\begin{lemma}
    $|v_{m|h} - v''_{m|h}| \leq (k+1) \delta$ for all $m \in \{0,1\}$, $h \in \{0,1\}^{2k}$.
\end{lemma}

\begin{proof}
    Let us first show the base case where $k=0$ and $h = \emptyset$. If $v'_0 > 1$, $v''_0 := 1$. If $v'_0 < 0$, $v''_0 := 0$. Otherwise, $v''_0 := v'_0$. Let $v''_1 := 1-v''_0$. 
    
    If $v'_0 > 1$, $- \delta \leq v_0 - v'_0 \leq v_0 - v''_0 = v_0 - 1 \leq 0$, which implies $|v_0 - v''_0| \leq \delta$. If $v'_0 < 0$, $0 \leq v_0 - v''_0 = v_0 \leq v_0 - v'_0 \leq \delta$, which implies $|v_0 - v''_0| \leq \delta$. If $v''_0 = v'_0$, $|v_0 - v''_0| = |v_0 - v'_0| \leq \delta$.
    If $v'_0 > 1$, $|v_1 - v''_1| = v_1 = 1-v_0 \leq \delta$. If $v'_0 < 0$, $|v_1 - v''_1| = |v_1 - 1| = |v_0| \leq \delta$. If $v''_0 = v'_0$, $|v_1 - v''_1| = |(1-v_0) - (1-v''_0)| \leq \delta$. In summary, we have $|v_0 - v''_0| \leq \delta$ and $|v_1 - v''_1| \leq \delta$ in all cases.
    
    Assume that $|v_{m|h} - v''_{m|h}| \leq (t+1) \delta$ for all $m \in \{0,1\}$ and $h \in \{0,1\}^{2t}$. If $v'_{0|hab} > v''_{a|h}$, $v''_{0|hab} = v''_{a|h}$. If $v'_{0|hab} <0$, $v''_{0|hab} = 0$. Otherwise, $v''_{0|hab}= v'_{0|hab}$. Let $v''_{1|hab} = v''_{a|h} - v''_{0|hab}$. We want to prove that $|v_{m|hab} - v''_{m|hab}| \leq (t+2) \delta$ for $m \in \{0,1\}$, $h \in \{0,1\}^t$, $a \in \{0,1\}$ and $b \in \{0,1\}$.

    For conciseness, let us fix $a=0$, $b=0$ and we will prove $|v''_{m|h00} - v_{m|h00}| \leq (t+2) \delta$ for $m \in \{0,1\}$. If $v'_{0|h00} > v''_{0|h}$,
    \[
        -\delta \leq v_{0|h00} - v'_{0|h00} \leq v_{0|h00} - v''_{0|h} \leq v_{0|h} - v''_{0|h} \leq (t+1) \delta,
    \]
    which implies $|v_{0|h00} - v''_{0|h00}| = |v_{0|h00} - v''_{0|h}| \leq (t+1) \delta$.
    If $v'_{0|h00} <0$, $|v_{0|h00} - v''_{0|h00}| = |v_{0|h00}| \leq |v_{0|h00} - v'_{0|h00}| \leq (t+1) \delta$. Otherwise, $|v_{0|h00} - v''_{0|h00}| = |v_{0|h00} - v'_{0|h00}| \leq \delta$. In summary, we have $|v_{0|h00} - v''_{0|h00}| \leq (t+1) \delta$.

    We next show $|v_{1|h00} - v''_{1|h00}| \leq (t+2) \delta$.  If $v'_{0|h00} > v''_{0|h}$, $|v_{1|h00} - v''_{1|h00}| = v_{1|h00} = v_{0|h} - v_{0|h00} \leq (v''_{0|h} + (t+1)\delta) - (v'_{0|h00} - \delta) \leq (t+2)\delta$. If $v'_{0|h00} <0$, $v''_{1|h00} = v''_{0|h}$ and $|v_{1|h00} - v''_{1|h00}| = |v_{1|h00} - v''_{0|h}| \leq |v_{1|h00} - v_{0|h}| + |v_{0|h} - v''_{0|h}| = |v_{0|h00}| + |v_{0|h} - v''_{0|h}| \leq (t+2)\delta$. Otherwise, $|v_{1|h00} - v''_{1|h00}| = |v_{1|h00} - (v''_{0|h} - v'_{0|h00})| = |(v_{0|h} - v_{0|h00}) - (v''_{0|h} - v'_{0|h00})| \leq |v_{0|h} - v''_{0|h}| + |v_{0|h00} - v'_{0|h00}| \leq (t+2)\delta$.

    Taking all the values $a,b \in \{0,1\}$ into account, we have $|v_{m|hab} - v''_{m|hab}| \leq (t+2) \delta$ for $m \in \{0,1\}$, $h \in \{0,1\}^t$, $a \in \{0,1\}$ and $b \in \{0,1\}$. By induction on $t$, we have the claim.
\end{proof}

For all $h$, $p^B_{0|h} + p^B_{1|h} = 1$. We thus have
\begin{align*}
    &\sum_{h \in \{0,1\}^{2r}} \prod_{i \in [r]} p^B_{h[2i-1]|h[0,2i-2]}  
    = \sum_{h[0] \in \{0,1\}} \sum_{h[1] \in \{0,1\}} \cdots \sum_{h[2r-1] \in \{0,1\}} \prod_{i \in [r]} p^B_{h[2i-1]|h[0,2i-2]} \\
    &= \sum_{h[0] \in \{0,1\}} \sum_{h[2] \in \{0,1\}} \cdots \sum_{h[2r-2] \in \{0,1\}} \sum_{h[1] \in \{0,1\}} p^B_{h[1]|h[0]}  \sum_{h[3] \in \{0,1\}} p^B_{h[3]|h[0,2]} \cdots \sum_{h[2r-1] \in \{0,1\}}  p^B_{h[2i-1]|h[0,2i-2]} \\
    &= \sum_{h[0] \in \{0,1\}} \sum_{h[2] \in \{0,1\}} \cdots \sum_{h[2r-2] \in \{0,1\}} 1 = 2^{r}
\end{align*}

Therefore, the error is at most
\begin{align*}
    &\sum_{h \in \{0,1\}^{2r}} \prod_{i \in [r]} p^B_{h[2i-1]|h[0,2i-2]} (r+1) \delta = 2^r(r+1)\delta,
\end{align*}
which concludes the proof.
\end{proof}

\begin{theorem}\label{thm:two-way_LOCC}
    For a sufficiently small constant $k>0$, $n \in\mathbb{N}$, let $\mathcal{P}$ be a quantum $2$-value $2k \log_2 n$-round LOCC protocol to solve $\EQ_n$. Then the size of Alice's message $|\rho_x|$ is $\omega(n^{1-3k-\epsilon})$ for an arbitrary small constant $\epsilon >0$.
\end{theorem}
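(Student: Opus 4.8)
The plan is to feed \cref{lem:error} into the constructive guarantee of \cref{lem:replace}, and then invoke the $\Omega(n)$ lower bound on the length of a deterministic classical message in a hybrid SMP protocol for $\EQ_n$. Write the number of rounds as $2r$ with $r=k\log_2 n$, and let $q=|\rho_x|$ be the qubit size of Alice's message. The first step is to calibrate the accuracy parameter $\delta$ of \cref{lem:replace}: \cref{lem:error} says the referee can simulate $\mathcal{P}$ up to additive error $2^r(r+1)\delta=n^k(k\log_2 n+1)\delta$ on the acceptance probability, so I would take $\delta=\Theta\!\left(\frac{1}{n^k\log n}\right)$, small enough that this error is below, say, $1/12$. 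Then the simulated protocol still decides $\EQ_n$ with two-sided error bounded away from $1/2$, and by construction it is a hybrid SMP protocol in which Alice's message is the deterministic string produced by \cref{lem:replace} and Bob still sends his quantum message $\sigma_y$.

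Next I would bound the length of that deterministic string. The referee only needs to estimate the values $v_{m|h}=\tr(M_{h[0]}^\dagger\cdots M^\dagger M\cdots M_{h[0]}\rho_x)$ appearing in the simulation; by \cref{eq:num} there are at most $2^{2r}=n^{2k}$ of them, and each is $\tr(E\rho_x)$ for a genuine $2$-value POVM operator $E$ determined by $\mathcal{P}$, so \cref{lem:replace} applies with $c=2r=2k\log_2 n=O(\log n)$. Plugging $\delta=\Theta(n^{-k}/\log n)$ and $c=O(\log n)$ into the bound $O\!\left(\frac{q}{\delta^3}\log\!\big(\frac{q}{\delta}\big)\big(c+\log\frac1\delta\big)\right)$ gives a deterministic message of length $L=O\!\left(q\,n^{3k}\,(\log q+\log n)\,\log^4 n\right)$. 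On the other hand, a hybrid SMP protocol for $\EQ_n$ whose classical message is deterministic must send $\Omega(n)$ classical bits: two inputs that collide on the classical message are indistinguishable to the referee no matter what Bob's quantum message is, exactly as in the last paragraph of the proof of \cref{prop:warm-up}. Hence $L=\Omega(n)$, which rearranges to
\[
    q=\Omega\!\left(\frac{n^{1-3k}}{(\log q+\log n)\log^4 n}\right).
\]

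Finally I would remove the $\log q$ on the right-hand side by a trivial dichotomy. If $q\ge n$ then $q/n^{1-3k-\epsilon}\ge n^{3k+\epsilon}\to\infty$, so $q=\omega(n^{1-3k-\epsilon})$. If $q<n$ then $\log q<\log n$, so the displayed bound becomes $q=\Omega(n^{1-3k}/\log^5 n)=\omega(n^{1-3k-\epsilon})$ for every constant $\epsilon>0$ (here ``$k$ sufficiently small'' guarantees $1-3k>0$, so the statement is nonvacuous). Either way $|\rho_x|=q=\omega(n^{1-3k-\epsilon})$, as claimed. The one genuinely delicate point is the calibration of $\delta$ against the round-dependent error of \cref{lem:error}: the factor $2^r$ there is exactly what forces $\delta$ down to roughly $n^{-k}$ and, through the $\delta^{-3}$ in \cref{lem:replace}, shrinks the exponent from $1$ to $1-3k$; it is also the reason the argument only handles $r=O(\log n)$ rounds. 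Everything else is routine bookkeeping of polylogarithmic factors.
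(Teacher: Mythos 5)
Your proposal is correct and follows essentially the same route as the paper: combine \cref{lem:replace} (with $c=O(\log n)$ operators counted via \cref{eq:num}) and the simulation error bound of \cref{lem:error}, calibrate $\delta\approx n^{-k}$ so the acceptance probability is preserved up to a small constant, and invoke the $\Omega(n)$ lower bound for a deterministic classical message in a hybrid SMP protocol for $\EQ_n$. The only differences are cosmetic: you absorb the polylogarithmic slack with $\delta=\Theta(n^{-k}/\log n)$ and a dichotomy on $\log q$, whereas the paper takes $\delta=n^{-k-\epsilon_1}$ and argues by contradiction, both yielding the same $\omega(n^{1-3k-\epsilon})$ bound.
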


\begin{proof}
    Let $\delta$ be $n^{-k-\epsilon_1}$ for a sufficiently small constant $\epsilon_1 >0$. Suppose that $|\rho_x|$ is $O(n^{1-3(k+\epsilon_1)-\epsilon_2})$ for a sufficiently small constant $\epsilon_2 >0$. From \cref{eq:num}, the number of values the referee guesses is $2^{2 k \log_2 n}$. Then, from \cref{lem:replace} and \cref{lem:error}, the length of the deterministic message is $O ( (n^{1-3(k+\epsilon_1)-\epsilon_2}) n^{3(k+\epsilon_1)} ((1-3(k+\epsilon_1)-\epsilon_2)\log n + (k+\epsilon_1) \log n) (2 k \log_2 n + (k+\epsilon_1)\log n) )= o(n)$ and the error is $n^k (k \log n + 1) (n^{-k-\epsilon_1}) = o(1)$. If one message is deterministic in the hybrid SMP model and the model solves $\EQ$ with high probability, the length of the message is $n$. Since we can take $\epsilon = 3\epsilon_1+\epsilon_2$ an arbitrary small value, we have $|\rho_x| = \omega(n^{1-3k-\epsilon})$ for an arbitrary small constant $\epsilon>0$.
\end{proof}

\subsection{General arguments}

In the proofs of \cref{subsec:warm-up} and \cref{subsec:many_rounds}, we use a property that is inherent to $\EQ_n$. The property is that, if one message is deterministic in SMP models for $\EQ_n$, the length of the deterministic message is $n$. We can show more general results by replacing both quantum messages.

\begin{lemma}\label{lem:error_replace_both}
    Let $\mathcal{P}$ be a $2$-value $2r$-round LOCC protocol to solve $F:\{0,1\}^n \times \{0,1\}^n \rightarrow \{0,1\}$. Using a deterministic message from \cref{lem:replace}, the referee can simulate the quantum LOCC SMP protocol with error $O(2^{2r}r^2\delta^2)$.
\end{lemma}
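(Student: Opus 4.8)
The plan is to follow the reduction of \cref{lem:error}, but now replace \emph{both} quantum messages $\rho_x$ and $\sigma_y$ by deterministic classical strings via \cref{lem:replace}, and to extract an error that is quadratic in $\delta$ rather than the linear $O(2^r r\delta)$ one gets from a naive estimate. The first step is to record the bilinear structure of the acceptance probability. As in \cref{prop:warm-up} and \cref{lem:ratio}, index a leaf of the $2r$-round interaction by $h=a_1b_1\cdots a_rb_r\in\{0,1\}^{2r}$; telescoping the post-measurement normalizations separately on Alice's and Bob's registers (using $\tr(M_0^\dagger\cdots M_{k-1}^\dagger M_{k-1}\cdots M_0)\le I$ from the remark before \cref{lem:ratio}), the probability that $\mathcal{P}$ follows $h$ factors as $v_h\,w_h$, where $v_h=\tr(E^A_h\rho_x)$ and $w_h=\tr(E^B_h\sigma_y)$ for operators $0\le E^A_h,E^B_h\le I$ built from Alice's resp.\ Bob's operator products along $h$. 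Hence $P(x,y)=\sum_{h\in\mathcal{A}}v_hw_h$ over the accepting leaves $\mathcal{A}$, and, exactly as in the computation ending the proof of \cref{lem:error}, $\sum_h v_h=\sum_h w_h=2^r$.

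Next I would replace both messages. The number of distinct operators $E^A_h$ (resp.\ $E^B_h$) occurring over all partial histories is $O(2^{2r})$ by \cref{eq:num}, so by \cref{lem:replace} with $c=O(r)$ and error parameter $\delta$ each of Alice and Bob sends a deterministic message of the stated length that lets the referee compute estimates $\widetilde v_h,\widetilde w_h$. Running the clipping of \cref{prop:warm-up} on each side independently, we may assume the $\widetilde v_\cdot$ and $\widetilde w_\cdot$ are honest (sub)probability weights — at every Alice node the two children's $\widetilde v$-values sum to the parent's, and symmetrically for $\widetilde w$, so $\sum_h\widetilde v_h=\sum_h\widetilde w_h=2^r$ — while $|v_h-\widetilde v_h|,|w_h-\widetilde w_h|=O(r\delta)$ as shown there. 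The referee then outputs according to $\widetilde P=\sum_{h\in\mathcal{A}}\widetilde v_h\widetilde w_h$.

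The heart of the argument, and the step I expect to be the main obstacle, is bounding $|P-\widetilde P|$ by $O(2^{2r}r^2\delta^2)$. The crude split $v_hw_h-\widetilde v_h\widetilde w_h=(v_h-\widetilde v_h)w_h+\widetilde v_h(w_h-\widetilde w_h)$ summed over $h$ only yields $O(2^r r\delta)$, since $\sum_h w_h=\sum_h\widetilde v_h=2^r$. To do better I would expand the Alice-side error edge by edge along the tree, and at the Alice node carrying the $j$-th error edge use that $v$ and $\widetilde v$ put the \emph{same} total mass on its two children; so the first-order contribution of that edge against any quantity constant in $a_j$ cancels, and what survives is (edge error) times the \emph{difference}, across the two values of $a_j$, of the residual Bob-weighted subtree sum. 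Running the same cancellation on the Bob side rewrites each such difference as a combination of Bob-edge errors, so every surviving term is a product of one $O(r\delta)$ Alice error and one $O(r\delta)$ Bob error; bounding the $O(2^{2r})$ resulting terms gives $|P-\widetilde P|=O(2^{2r}r^2\delta^2)$. The delicate part is organizing this double telescoping so that the normalization identities are available at every level and the count of residual ``difference'' terms stays $O(2^{2r})$ rather than exploding; the hand computation in \cref{prop:warm-up}, where one sees by inspection that the leftover pieces collapse via $p^B_{0|h}+p^B_{1|h}=1$, is precisely the $r=1$ instance, and the general case should come out of an induction on the round count parallel to the proof of \cref{lem:error}.
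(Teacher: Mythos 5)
Your setup is the same as the paper's: replace both messages via \cref{lem:replace}, clip the estimates as in \cref{prop:warm-up} and \cref{lem:error}, and compare $\sum_h v_h w_h$ with $\sum_h \widetilde v_h \widetilde w_h$ using \cref{lem:ratio}. The gap is exactly at what you call the heart of the argument, the double-telescoping cancellation. After you expand the Alice-side error along the edge at round $j$ and use that $v$ and $\widetilde v$ assign the two children the same total mass, the surviving factor is the difference, over $a_j\in\{0,1\}$, of the Bob-weighted subtree sums; you then assert that ``running the same cancellation on the Bob side rewrites each such difference as a combination of Bob-edge errors.'' It does not: that difference compares two genuine branches of the protocol, and Bob's later measurements really do depend on $a_j$ (this is the whole point of the LOCC interaction), so it is a difference of two \emph{true} subprobabilities and is $\Theta(1)$ in general, not $O(r\delta)$. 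The surviving terms are therefore (Alice edge error) times $O(1)$, i.e.\ linear in $\delta$, and the scheme never produces the claimed products of one Alice error with one Bob error. Indeed, no argument can give a purely quadratic bound: take $r=1$, let Alice's first measurement have $v_0=1/2$ while the estimate delivered by \cref{lem:replace} is $1/2+\delta$, and let Bob accept with probability $1$ if $a=0$ and $0$ if $a=1$, his values being estimated exactly. The simulated acceptance probability then differs from the true one by exactly $\delta$, whereas a bound of the form $O(2^{2r}r^2\delta^2)=O(\delta^2)$ would be smaller for small $\delta$; the simulation error is genuinely linear in $\delta$.

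The paper's own proof attempts no such cancellation: it simply expands $\bigl(v^A+O(r\delta)\bigr)\bigl(v^B+O(r\delta)\bigr)$ termwise over the $2^{2r}$ histories, and what that expansion actually yields is a bound of the form $O\bigl(2^{2r}(r\delta+r^2\delta^2)\bigr)$, i.e.\ the linear cross terms are present; this is also the form used downstream in \cref{prop:two-way-LOCC_general} and \cref{thm:two-way-LOCC_general_relation}, where the error is evaluated with a single power of $\delta$ and made $o(1)$ by choosing $\delta$ polynomially small. So the correct fix for your write-up is to stop at your ``crude'' bilinear split, which together with $\sum_h w_h=\sum_h\widetilde v_h=2^r$ already gives $O(2^r r\delta)$ (in fact slightly better than the termwise count), and to carry that linear-in-$\delta$ bound into the applications, rather than to chase a $\delta^2$ dependence that the adaptivity of the LOCC interaction rules out.
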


\begin{proof}
For $m \in \{0,1\}$ and $h \in \{0,1\}^*$, let us define $v^A_{m|h} := v_{m|h}, v''^A_{m|h} := v''_{m|h}$ and, in a similar way, define $v^B_{m|h}, v''^B_{m|h}$ from the replacement of Bob's message.
Then, the error by the simulation of the referee from the two deterministic strings is at most
\begin{align*}
    & \left| \sum_{h \in \{0,1\}^{2r}} v^A_{1|h} v^B_{h[2i-1]|h[0,2i-2]} - v''^A_{1|h} v''^B_{h[2i-1]|h[0,2i-2]} \right| \\
    & \leq \sum_{h \in \{0,1\}^{2r}} \left| v^A_{1|h} v^B_{h[2i-1]|h[0,2i-2]} - v''^A_{1|h} v''^B_{h[2i-1]|h[0,2i-2]} \right| \\
    & \leq \sum_{h \in \{0,1\}^{2r}} \left| v^A_{1|h} v^B_{h[2i-1]|h[0,2i-2]} - \left( v^A_{1|h} + O(r\delta) \right) \left( v^B_{h[2i-1]|h[0,2i-2]} + O(r\delta) \right) \right| = O(2^{2r}r^2\delta^2) 
\end{align*}
\end{proof}

\begin{proposition}\label{prop:two-way-LOCC_general}
    For a sufficiently small constant $k>0$, $n \in\mathbb{N}$, let $\mathcal{P}$ be a quantum $2$-value $2k \log_2 n$-round LOCC protocol to solve $F:\{0,1\}^n \times \{0,1\}^n \rightarrow \{0,1\}$ such that $\R^{||}(F) = \Omega(n^c)$ for a constant $c>0$. Then, the size of the quantum messages in $\mathcal{P}$ is $\omega(n^{c-6k-\epsilon})$ for an arbitrary small constant $\epsilon >0$.
\end{proposition}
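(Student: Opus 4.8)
The plan is to rerun the reduction behind \cref{thm:two-way_LOCC}, but now replacing \emph{both} quantum messages of $\mathcal{P}$ by deterministic classical ones via \cref{lem:replace}, and controlling the total simulation error with \cref{lem:error_replace_both} in place of \cref{lem:error}. After this double replacement the only randomness left in the protocol is the referee's (used to sample the simulated LOCC transcript from the reconstructed values), so the object obtained is an ordinary randomized SMP protocol for $F$; the final contradiction is therefore with $\R^{||}(F)=\Omega(n^{c})$ rather than with a deterministic lower bound as in \cref{thm:two-way_LOCC}.

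First I would fix parameters. Write $2r=2k\log_2 n$, so $r=k\log_2 n$; by \cref{eq:num} the referee needs to track at most $2^{2r}=n^{2k}$ measurement operators on each side, so the ``number of operators'' parameter ($c$ in \cref{lem:replace}) is $O(\log n)$ on both sides. I would then choose the accuracy to be a small inverse polynomial, $\delta=n^{-2k-\epsilon_1}$ for a small constant $\epsilon_1>0$. With this choice \cref{lem:error_replace_both} bounds the error of the referee's simulation by $O(2^{2r}r^2\delta^2)=O\big(n^{2k}\cdot\mathrm{polylog}(n)\cdot n^{-4k-2\epsilon_1}\big)=o(1)$, so (after the usual constant-factor amplification of $\mathcal{P}$) the simulated protocol still decides $F$ with bounded error.

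Now suppose for contradiction that the quantum messages of $\mathcal{P}$ have $q=O(n^{c-6k-\epsilon})$ qubits, where $\epsilon=3\epsilon_1+\epsilon_2$ for a small constant $\epsilon_2>0$. Plugging $q$, $\delta=n^{-2k-\epsilon_1}$ and the operator count $O(\log n)$ into \cref{lem:replace}, each of Alice's and Bob's deterministic messages has length
\[
  O\!\left(\frac{q}{\delta^{3}}\,\log\!\Big(\frac{q}{\delta}\Big)\Big(c+\log\frac{1}{\delta}\Big)\right)
  = O\!\big(q\, n^{6k+3\epsilon_1}\,\mathrm{polylog}(n)\big)
  = O\!\big(n^{c-\epsilon_2}\,\mathrm{polylog}(n)\big)
  = o(n^{c}).
\]
Hence the randomized SMP protocol we have built for $F$ communicates $o(n^{c})$ bits, contradicting $\R^{||}(F)=\Omega(n^{c})$. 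Since $\epsilon_1$ and $\epsilon_2$ (hence $\epsilon$) may be taken arbitrarily small, we conclude $q=\omega(n^{c-6k-\epsilon})$.

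The main obstacle is everything packaged into \cref{lem:error_replace_both}. One has to check that the set of referee operators on Bob's side depends only on the (bounded-length) LOCC transcript and not on the referee's sampling randomness, so that a single deterministic string suffices across all of the referee's random branches; and one has to see that the two independent sources of $\delta$-error---reconstructing Alice's values $v^{A}_{m|h}$ and reconstructing Bob's conditional probabilities as the ratios $v^{B}_{m|h'}/v^{B}_{m'|h''}$ of \cref{lem:ratio}---compose only up to the claimed $O(2^{2r}r^{2}\delta^{2})$, i.e., do not blow up multiplicatively over the $n^{2k}$ transcripts. Once that is in place, the remaining subtlety is purely the exponent arithmetic: $\delta$ must be polynomially small enough to beat the $2^{2r}=n^{2k}$ branching in the error term, yet the $\delta^{-3}$ blow-up in \cref{lem:replace} must not push the deterministic message length past $n^{c}$; this is exactly the window $q=\Theta(n^{c-6k})$ and is why $k$ must be taken sufficiently small.
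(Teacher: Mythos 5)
Your proposal is correct and follows essentially the same route as the paper's own proof: the same choice $\delta=n^{-2k-\epsilon_1}$, replacement of \emph{both} quantum messages by deterministic strings via \cref{lem:replace} with operator count $2^{2k\log_2 n}$ per side, error control via \cref{lem:error_replace_both}, and the identical exponent arithmetic leading to a contradiction with $\R^{||}(F)=\Omega(n^c)$. (Your error bound $O(2^{2r}r^2\delta^2)=o(1)$ is in fact the more faithful instantiation of \cref{lem:error_replace_both}; the paper's plugged-in expression is looser but yields the same $o(1)$ conclusion.)
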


\begin{proof}
    Let $\delta$ be $n^{-2k-\epsilon_1}$ for a sufficiently small constant $\epsilon_1 >0$. Let $\rho_x$ and $\sigma_y$ be quantum messages from Alice and Bob, respectively. Suppose that $|\rho_x|$ and $|\sigma_y|$ are $O(n^{c-3(2k+\epsilon_1)-\epsilon_2})$ for a sufficiently small constant $\epsilon_2 >0$. From \cref{eq:num}, the number of values the referee guesses is $2^{2 k \log_2 n}$ for each message. Then, from \cref{lem:replace} and \cref{lem:error_replace_both}, the sum of the lengths of the deterministic messages is $2 \times O(n^{c-3(2k+\epsilon_1)-\epsilon_2}) n^{3(2k+\epsilon_1)} ((c-3(2k+\epsilon_1)-\epsilon_2)\log n + (2k+\epsilon_1) \log n) (2 k \log_2 n + (2k+\epsilon_1)\log n)= o(n^c)$ and the error is $O(n^{2k}(k \log n)^2 n^{-2k-\epsilon_1}) = o(1)$. This contradicts $\R^{||}(F) = \Omega(n^c)$. Since we can take $\epsilon = 3\epsilon_1+\epsilon_2$ an arbitrary small value, we have $\max\{|\rho_x|,|\sigma_y|\} = \omega(n^{c-6k-\epsilon})$ for an arbitrary small constant $\epsilon>0$.
\end{proof}

Moreover, we can consider a lower bound for general relation problems.

\begin{lemma}
    Let $\mathcal{P}$ be a $2$-value $2r$-round LOCC protocol to solve $R \subseteq \{0,1\}^n \times \{0,1\}^n \times \{0,1\}^*$. Using a deterministic message from \cref{lem:replace}, the referee can simulate the output of the quantum LOCC SMP protocol with error $O(2^{2r}r^2\delta^2)$.
\end{lemma}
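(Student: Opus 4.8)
The plan is to run the argument of \cref{lem:error_replace_both} essentially unchanged, the one new ingredient being that for a relation the output of $\mathcal{P}$ is not a single measurement bit but an arbitrary function of the whole transcript, so one must track the entire transcript distribution rather than a lone acceptance probability. Fix inputs $x$ and $y$; let $h$ denote the transcript of all $O(r)$ binary outcomes produced by the alternating measurements of $\mathrm{Ref}_A$ and $\mathrm{Ref}_B$ in a run of $\mathcal{P}$ (there are at most $2^{O(r)}$ such transcripts), and let $g$ be the fixed map that reads off from a transcript the string in $\{0,1\}^*$ that $\mathcal{P}$ outputs. The point that still goes through from \cref{lem:ratio} is that the chain-rule product $\Pr[h]$ factors, upon grouping its factors into those coming from $\mathrm{Ref}_A$'s measurements and those from $\mathrm{Ref}_B$'s: the $\mathrm{Ref}_A$ factors telescope to a single value $v^A_{\cdot|h}=\tr(M^\dagger_{h[0]}\cdots M^\dagger_{\cdot|h}M_{\cdot|h}\cdots M_{h[0]}\,\rho_x)$ depending only on $\rho_x$, the $\mathrm{Ref}_B$ factors telescope to a value $v^B_{\cdot|h}$ depending only on $\sigma_y$, and so $\Pr[h]=v^A_{\cdot|h}\,v^B_{\cdot|h}$ — exactly as in the proofs of \cref{lem:error} and \cref{lem:error_replace_both}, and unaffected by the size of the output alphabet.

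I would then replace both quantum messages. By \cref{eq:num} there are at most $2^{2r}$ of these nested operators on each side, so \cref{lem:replace} (with $c=O(r)$) lets Alice and Bob each send a deterministic message from which the referee recovers approximations of all the $v^A_{\cdot|h}$ and of all the $v^B_{\cdot|h}$ to within $\pm\delta$ simultaneously. The referee clamps these to legal conditional probabilities and simulates the interaction round by round via the ratios of the clamped values, exactly as in \cref{prop:warm-up} and \cref{lem:error}; this produces a transcript $\widetilde h$ with probability $\widetilde q(h)=v''^{A}_{\cdot|h}\,v''^{B}_{\cdot|h}$, and the referee then outputs $g(\widetilde h)$. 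The clamping estimates established in \cref{lem:error} give $|v^A_{\cdot|h}-v''^{A}_{\cdot|h}|=O(r\delta)$ and $|v^B_{\cdot|h}-v''^{B}_{\cdot|h}|=O(r\delta)$ for every $h$.

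It remains to bound the discrepancy between the simulated and the genuine output distributions. Because in both cases the output is $g$ applied to a transcript, and because the fibers $g^{-1}(z)$ partition the set of transcripts, for any set $S$ of outputs — in particular for the set of outputs $z$ that are correct for $(x,y)$ — we have
\[
    \bigl|\Pr_{\mathcal{P}}[\text{output}\in S]-\Pr_{\mathrm{sim}}[\text{output}\in S]\bigr|\;\le\;\sum_{h}\bigl|v^A_{\cdot|h}\,v^B_{\cdot|h}-v''^{A}_{\cdot|h}\,v''^{B}_{\cdot|h}\bigr|.
\]
The right-hand side is precisely the sum estimated in the proof of \cref{lem:error_replace_both}: substituting $v''^{A}_{\cdot|h}=v^A_{\cdot|h}+O(r\delta)$ and $v''^{B}_{\cdot|h}=v^B_{\cdot|h}+O(r\delta)$ and expanding yields $O(2^{2r}r^2\delta^2)$. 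Hence the probability that the referee's simulation returns a correct answer for $(x,y)$ differs from the corresponding probability for $\mathcal{P}$ by at most $O(2^{2r}r^2\delta^2)$, which is the claim. I do not expect a real obstacle here: everything quantitative is inherited from \cref{lem:error_replace_both}, and the only extra care needed is the bookkeeping in the first and third steps — that the output is a function of the transcript and that the fibers of that function are disjoint — so that the error, summed over the (possibly many) correct outputs, is still controlled by the single transcript-distribution discrepancy.
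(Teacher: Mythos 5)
Your proposal follows essentially the same route as the paper: replace both quantum messages via \cref{lem:replace}, use the factorization of the transcript probability into an Alice-side value and a Bob-side value (as in \cref{lem:ratio} and \cref{lem:error_replace_both}), bound the transcript-distribution discrepancy by $\sum_h |v^A v^B - v''^A v''^B| = O(2^{2r}r^2\delta^2)$, and observe that since the output is a fixed function of the transcript, matching transcript statistics give matching outputs. The paper's own proof is just a terser version of this same argument, so no substantive difference.
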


\begin{proof}
For $m \in \{0,1\}$ and $h \in \{0,1\}^*$, let us define $v^A_{m|h} := v_{m|h}, v''^A_{m|h} := v''_{m|h}$ and, in a similar way, define $v^B_{m|h}, v''^B_{m|h}$ from the replacement of Bob's message.
Then, the error by the simulation of all the measurement outcomes of the referee from the two deterministic strings is at most
\begin{align*}
    & \left| \sum_{m \in \{0,1\}, h \in \{0,1\}^{2r}} v^A_{m|h} v^B_{h[2i-1]|h[0,2i-2]} - v''^A_{m|h} v''^B_{h[2i-1]|h[0,2i-2]} \right| \\
    & \leq 2 \times \sum_{h \in \{0,1\}^{2r}} \left| v^A_{1|h} v^B_{h[2i-1]|h[0,2i-2]} - v''^A_{1|h} v''^B_{h[2i-1]|h[0,2i-2]} \right| \\
    & \leq 2 \times \sum_{h \in \{0,1\}^{2r}} \left| v^A_{1|h} v^B_{h[2i-1]|h[0,2i-2]} - \left( v^A_{1|h} + O(r\delta) \right) \left( v^B_{h[2i-1]|h[0,2i-2]} + O(r\delta) \right) \right| = O(2^{2r}r^2\delta^2) 
\end{align*}
If all the measurement outcomes are the same, the output of the protocol is also the same as the original protocol.
\end{proof}

Then, by the same proof as \cref{prop:two-way-LOCC_general}, we have the same argument for general relation problems.

\begin{theorem}\label{thm:two-way-LOCC_general_relation}
    For a sufficiently small constant $k>0$, $n \in\mathbb{N}$, let $\mathcal{P}$ be a $2$-value $2k \log_2 n$-round LOCC protocol to solve $R \subseteq \{0,1\}^n \times \{0,1\}^n \times \{0,1\}^*$ such that $\R^{||}(R) = \Omega(n^c)$ for a constant $c>0$. Then, the size of the quantum messages in $\mathcal{P}$ is $\omega(n^{c-6k-\epsilon})$ for an arbitrary small constant $\epsilon >0$.
\end{theorem}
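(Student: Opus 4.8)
The plan is to imitate the proof of \cref{prop:two-way-LOCC_general} essentially line for line, the only change being to invoke the relational version of \cref{lem:error_replace_both} stated just above rather than the functional one: once the referee reproduces the distribution of every measurement outcome of $\mathcal{P}$ up to small total variation, it reproduces the distribution of the final output $z$ (a fixed function of those outcomes) up to the same error, and hence still solves $R$ with bounded error. So the argument is: replace both quantum messages by short deterministic classical messages, simulate the whole two-way LOCC interaction with the referee's private coins, and contradict $\R^{||}(R)=\Omega(n^{c})$.

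Concretely, I would fix small constants $\epsilon_1,\epsilon_2>0$, set $\delta:=n^{-2k-\epsilon_1}$ exactly as in \cref{prop:two-way-LOCC_general}, and suppose for contradiction that both quantum messages have size $O\big(n^{\,c-3(2k+\epsilon_1)-\epsilon_2}\big)$. Writing the round count as $2r=2k\log_2 n$, so $2^{2r}=n^{2k}$, \cref{eq:num} (using the $2$-value, $2r$-round structure) bounds by $n^{2k}$ the number of composed ``success'' operators whose acceptance values $\mathrm{Ref}_A$, respectively $\mathrm{Ref}_B$, needs to know, so the operator-index parameter in \cref{lem:replace} costs only $O(\log n)$ bits. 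Applying \cref{lem:replace} once to $\rho_x$ and once to $\sigma_y$ (with $q$ the respective message size and the above $\delta$), each side can send a deterministic message of length $O\big(q\,\delta^{-3}\,\mathrm{polylog}\,n\big)=O\big(n^{\,c-\epsilon_2}\,\mathrm{polylog}\,n\big)=o(n^{c})$, from which the referee learns all the needed acceptance values to additive error $\pm\delta$. By the relational version of \cref{lem:error_replace_both}, the referee can then emulate the LOCC interaction (using private randomness for the intermediate outcomes and \cref{lem:ratio} to rewrite the conditional outcome probabilities as ratios of the estimated unnormalised traces, clamped to valid probabilities) and output $z$ from a distribution within total variation $O\big(2^{2r}r^{2}\delta^{2}\big)=O\big(n^{2k}\,\mathrm{polylog}(n)\,\delta^{2}\big)=o(1)$ of $\mathcal{P}$'s. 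This yields an SMP protocol for $R$ with two deterministic messages --- \emph{a fortiori} a bounded-error randomized SMP protocol --- of total cost $o(n^{c})$, contradicting $\R^{||}(R)=\Omega(n^{c})$. Since $\epsilon:=3\epsilon_1+\epsilon_2$ can be made arbitrarily small, this proves $\max\{|\rho_x|,|\sigma_y|\}=\omega\big(n^{\,c-6k-\epsilon}\big)$.

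I expect no new conceptual obstacle here: the substantive work is done by \cref{lem:replace} (the deterministic classical ``fingerprint'' of a density matrix), \cref{lem:ratio} (the telescoping that turns the $r$ rounds of conditional probabilities into ratios of $2^{2r}$ unnormalised traces), and the error bookkeeping of \cref{lem:error,lem:error_replace_both} together with their relational versions. The one place that wants care is the joint calibration of $\delta$: it must be small enough ($\delta\le n^{-2k-\epsilon_1}$, as in \cref{prop:two-way-LOCC_general}) that the $2^{2r}=n^{2k}$ terms summed over in the error bound of the preceding lemma are killed, yet large enough that the $\delta^{-3}$ factor in \cref{lem:replace} keeps the two deterministic messages of total length $o(n^{c})$; this tension is precisely what degrades the exponent from the $c-3k$ of the one-sided \cref{thm:two-way_LOCC} down to $c-6k$. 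One should also keep in mind the elementary fact that a protocol with two deterministic messages and $o(1)$ error is a legitimate bounded-error randomized SMP protocol, which is what makes the hypothesis $\R^{||}(R)=\Omega(n^{c})$ bite.
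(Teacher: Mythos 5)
Your proposal is correct and follows essentially the same route as the paper: it replaces both quantum messages by deterministic strings via \cref{lem:replace}, invokes the relational analogue of \cref{lem:error_replace_both} (with \cref{lem:ratio} and the clamping of estimated values) to bound the simulation error by $O(2^{2r}r^2\delta^2)=o(1)$ with $\delta=n^{-2k-\epsilon_1}$, and derives the contradiction with $\R^{||}(R)=\Omega(n^c)$, exactly as in the paper's reduction through \cref{prop:two-way-LOCC_general}. The parameter bookkeeping and the resulting exponent $c-6k-\epsilon$ with $\epsilon=3\epsilon_1+\epsilon_2$ also match the paper's argument.
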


\section{Separation between quantum one-way and two-way-LOCC SMP protocols}\label{sec:separation}

For hybrid SMP schemes for non-symmetric problems, it is important which party can send a quantum message while the other one can only send classical messages, and let us introduce more fine-grained hybrid SMP models. In this section, we will consider a hybrid SMP model in which Alice sends a quantum message and Bob sends a classical message, and let us denote by $\Q \R^{||}(R)$ the hybrid SMP communication complexity for $R$. We also consider a hybrid SMP model in which Bob sends a quantum message and Alice sends a classical message, and let us denote by $\R \Q^{||}(R)$ the hybrid SMP communication complexity for $R$. 
We also define $\Q^{||,\mathsf{LOCC}_1^{A \rightarrow B}}(R)$ as the complexity of a quantum one-way LOCC protocol for $R$ where $\rm{Ref}_A$ measures Alice's message first and sends a measurement result to $\rm{Ref}_B$, and then $\rm{Ref}_B$ measures Bob's message, $\Q^{||,\mathsf{LOCC}_1^{B \rightarrow A}}(R)$ as the complexity of a quantum one-way LOCC protocol for $R$ where the $\rm{Ref}_B$ measures Bob's message first and sends a measurement result to $\rm{Ref}_A$, and then $\rm{Ref}_A$ measures Alice's message.
Other notations of communication complexity are the same as defined in \cref{sec:prel}.

The Hidden Matching Problem was introduced by Bar-Yossef, Jayram, and Kerenidis \cite{BYJK08} and they showed that there exists a relation problem which exhibits an exponential separation between classical and quantum one-way communication complexity. For $k \in [n]$, let us denote by $x(k)$ the $(k-1)$th bit of $x \in \{0,1\}^n$.

\begin{definition}[The Hidden Matching Problem  ($\mathsf{HM}_n$) \cite{BYJK08}]
    Let $n$ be an even positive integer. In the Hidden Matching Problem ($\mathsf{HM}_n$), Alice is given $x \in \{0,1\}^n$ and Bob is given $M \in \mathcal{M}_n$ where $ \mathcal{M}_n$ is the family of all possible perfect matchings on $n$ nodes. They coordinate to output a tuple $\langle i,j,b \rangle$ such that the edge $(i,j)$ belongs to the matching $M$ and $b = x(i) \oplus x(j)$.
\end{definition}

\begin{theorem}[\cite{BYJK08}]
    $\Q^{1}(\mathsf{HM}_n) = O(\log n)$, $\R^{1}(\mathsf{HM}_n) = \Theta(\sqrt{n})$.
\end{theorem}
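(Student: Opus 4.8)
The plan is to handle the three bounds separately: the $O(\log n)$ quantum upper bound, the $O(\sqrt n)$ randomized upper bound, and the matching $\Omega(\sqrt n)$ randomized lower bound. For $\Q^{1}(\mathsf{HM}_n)=O(\log n)$, I would have Alice send the $\log n$-qubit phase state $\ket{\psi_x}=\frac{1}{\sqrt n}\sum_i(-1)^{x(i)}\ket{i}$. Given the matching $M$, Bob measures in the orthonormal basis that contains, for every edge $(i,j)\in M$, the two vectors $\frac{1}{\sqrt2}(\ket{i}\pm\ket{j})$. A one-line computation shows the amplitude of $\ket{\psi_x}$ on $\frac{1}{\sqrt2}(\ket{i}+\ket{j})$ equals $\frac{1}{\sqrt{2n}}\big((-1)^{x(i)}+(-1)^{x(j)}\big)$, which vanishes unless $x(i)=x(j)$, and symmetrically the $\ket{i}-\ket{j}$ vector is reached only when $x(i)\neq x(j)$. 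Hence every outcome deterministically identifies an edge $(i,j)\in M$ together with $b=x(i)\oplus x(j)$, so Bob outputs $\langle i,j,b\rangle$ and the protocol is in fact errorless.

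For $\R^{1}(\mathsf{HM}_n)=O(\sqrt n)$, Alice picks a uniformly random subset $S\subseteq[n]$ of $\Theta(\sqrt n)$ coordinates and sends the pairs $\{(i,x(i)):i\in S\}$ at cost $O(\sqrt n)$ bits. For any fixed matching $M$, a birthday-type estimate gives that $S$ contains both endpoints of some edge of $M$ with constant probability: exposing the elements of $S$ one by one, the $t$-th vertex has its $M$-partner already in $S$ with probability about $t/n$, so after $\Theta(\sqrt n)$ vertices an edge appears with constant probability, at which point Bob outputs that edge with its now-known parity. Repeating a constant number of times boosts the success probability above $2/3$ while keeping the cost $O(\sqrt n)$.

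For the lower bound $\R^{1}(\mathsf{HM}_n)=\Omega(\sqrt n)$, I would reduce from the Boolean Hidden Matching problem $\mathsf{BHM}_n$, in which Bob additionally holds $w\in\{0,1\}^{n/2}$ and, under the promise that either $w_e=x(i)\oplus x(j)$ for all edges $e=(i,j)\in M$ or $w_e=\overline{x(i)\oplus x(j)}$ for all $e$, must decide which. Any $\mathsf{HM}_n$ protocol solves $\mathsf{BHM}_n$: run it to obtain $\langle i,j,b\rangle$ and answer ``first case'' iff $b=w_{(i,j)}$, so $\R^{1}(\mathsf{HM}_n)\ge\R^{1}(\mathsf{BHM}_n)-O(1)$. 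It then remains to show $\R^{1}(\mathsf{BHM}_n)=\Omega(\sqrt n)$. Fixing Alice's randomness, her $c$-bit message partitions $\{0,1\}^n$ into $2^c$ classes; conditioning on a class $A$ of density at least $2^{-c}$, Bob's advantage in distinguishing the two cases is controlled by the level-$2$ Fourier weight of $\mathbf{1}_A$ carried by the $n/2$ matching edges. The level-$2$ inequality, a consequence of hypercontractivity, gives $W^{=2}[\mathbf{1}_A]=O(c^2\,2^{-2c})$, and averaging over a uniformly random $M$ the weight sitting on matching edges is a $\Theta(1/n)$ fraction of this; after normalizing by $(\mathbb{E}\mathbf{1}_A)^2$ this forces Bob's advantage to be $O(c/\sqrt n)$, hence $c=\Omega(\sqrt n)$.

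The main obstacle is this last Fourier-analytic step. The two upper bounds and the reduction are routine, whereas the $\sqrt n$ barrier for $\mathsf{BHM}_n$ genuinely requires hypercontractivity: an elementary counting argument does not suffice to bound the level-$2$ weight of a dense set restricted to a random matching, and some care is also needed to rule out contributions from higher odd Fourier levels. This is exactly where the $\sqrt n$ emerges, and it is the part I expect to be most delicate to carry out in full.
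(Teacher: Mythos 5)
Your quantum upper bound is exactly the protocol the paper reproduces (Algorithm~1 from \cite{BYJK08}): the phase state $\frac{1}{\sqrt n}\sum_i(-1)^{x(i)}\ket{i}$ measured in the basis $\{\frac{1}{\sqrt 2}(\ket{i}\pm\ket{j})\}_{(i,j)\in M}$, and your zero-error analysis is correct. For the classical bounds the paper offers no proof at all (the statement is cited), so there is nothing to match there; comparing instead with \cite{BYJK08}: your $O(\sqrt n)$ upper bound is the standard birthday argument, but as written it does not give $O(\sqrt n)$ bits --- sending the pairs $\{(i,x(i)):i\in S\}$ with $|S|=\Theta(\sqrt n)$ explicitly costs $\Theta(\sqrt n\log n)$ bits, since each index takes $\log n$ bits. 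To get the claimed $\Theta(\sqrt n)$ you should choose $S$ with public coins (so only the $|S|$ values are transmitted) and then invoke Newman's theorem, which the paper even records as a Fact, to return to private coins at an additive $O(\log n)$ cost.

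For the lower bound you take a genuinely different route from \cite{BYJK08}: you reduce the relational problem to the Boolean Hidden Matching promise problem and invoke the hypercontractive level-$2$ inequality, i.e.\ the Fourier-analytic argument of Gavinsky, Kempe, Kerenidis, Raz, and de Wolf. This is valid (the reduction direction is right, and the $O(c/\sqrt n)$ advantage bound is the known conclusion), and it buys a stronger statement --- hardness even of the Boolean promise version --- but it imports heavier machinery and, as you acknowledge, the key step (bounding the distributional advantage by the level-$2$, and in fact all even-level, Fourier weight restricted to matching edges) is left as a plan rather than carried out. The original argument in \cite{BYJK08} is more elementary and stays with the relation: fix Alice's $c$-bit message to get a set $A$ of density $2^{-c}$; any family of edges whose parities $x(i)\oplus x(j)$ are noticeably biased on $A$ contains a spanning forest of linearly independent parities, and an entropy/counting bound forces the forest to have size $O(c)$, hence only $O(c^2)$ biased pairs in total; a uniformly random perfect matching contains any fixed pair with probability $\Theta(1/n)$, so if $c=o(\sqrt n)$ the matching avoids all biased pairs with high probability and Bob cannot answer. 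If you want a self-contained proof, that elementary route is considerably shorter than completing the hypercontractivity argument.
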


For completeness, let us describe a quantum protocol to solve $\mathsf{HM}_n$ efficiently.

\begin{algorithm}[H]
\caption{\, $O(\log n)$ qubit one-way communication protocol for $\mathsf{HM}_n$ \cite{BYJK08}}
\label{}
\begin{algorithmic}[1]
\State Alice sends $O(\log n)$-qubit state $\frac{1}{\sqrt{n}} \sum_{i=1}^n (-1)^{x(i)} \ket{i}$ to Bob. 
\State Bob regards $(i,j) \in M$ as a projector $P_{ij} = \ket{i}\bra{i} + \ket{j}\bra{j}$, and conducts the corresponding projective measurement on the quantum message. If the result is $(i,j) \in M$, Bob finally measures the residue state in a base $\{ \frac{1}{\sqrt{2}} (\ket{i}+\ket{j}), \frac{1}{\sqrt{2}}(\ket{i}-\ket{j}) \}$ and outputs the final measurement result.
\end{algorithmic}
\end{algorithm}

They also introduced a variant of the Hidden Matching Problem and showed a separation between classical SMP models and the hybrid SMP model.

\begin{definition}[Restricted Hidden Matching Problem ($\mathsf{RHM}_n$), Section 5 in \cite{BYJK08}]
    Let $n$ be an even positive integer. In the Restricted Hidden Matching Problem, fix $\mathcal{M}$ to be any set of $m = \Theta(n)$ pairwise edge-disjoint matchings. Alice is given $x \in \{0, 1\}^n$, and Bob is given $M \in \mathcal{M}$. Their goal is to output a tuple $\langle i, j, b \rangle$ such that the edge $(i, j)$ belongs to the matching $M$ and $b = x(i) \oplus x(j)$.
\end{definition}

\begin{lemma}[Section 5 in \cite{BYJK08}]\label{lem:rhm}
    $\R^{1}(\mathsf{RHM}_n) = \R^{||}(\mathsf{RHM}_n) = \Theta(\sqrt{n})$, $\Q\R^{||}(\mathsf{RHM}_n) = O(\log n)$.
\end{lemma}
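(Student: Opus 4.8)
The plan is to establish the three claims essentially separately and then glue them together with two routine model comparisons. First, $\mathsf{RHM}_n$ is exactly $\mathsf{HM}_n$ with Bob's input restricted to the fixed family $\mathcal{M}\subseteq\mathcal{M}_n$, so every protocol for $\mathsf{HM}_n$ also solves $\mathsf{RHM}_n$; in particular $\R^1(\mathsf{RHM}_n)\le\R^1(\mathsf{HM}_n)=O(\sqrt n)$ by the cited theorem. Second, a one-way protocol can always simulate an SMP protocol (Alice forwards her SMP message; Bob, who knows his own input, computes his SMP message and then runs the referee), so $\R^1(\mathsf{RHM}_n)\le\R^{||}(\mathsf{RHM}_n)$, and hence any lower bound on the one-way complexity transfers to the SMP complexity. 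So it suffices to prove (i) $\Q\R^{||}(\mathsf{RHM}_n)=O(\log n)$, (ii) $\R^{||}(\mathsf{RHM}_n)=O(\sqrt n)$, and (iii) $\R^1(\mathsf{RHM}_n)=\Omega(\sqrt n)$.

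For (i), the protocol is: Alice sends the $\lceil\log_2 n\rceil$-qubit state $\frac{1}{\sqrt n}\sum_{i=1}^n(-1)^{x(i)}\ket{i}$, exactly as in the quantum protocol for $\mathsf{HM}_n$ described above, and Bob sends the index of his matching within $\mathcal{M}$, which costs $\lceil\log_2 m\rceil=O(\log n)$ classical bits since $m=\Theta(n)$. The referee then plays Bob's part of the $\mathsf{HM}_n$ protocol: it applies the projective measurement $\{\ket{i}\bra{i}+\ket{j}\bra{j}\}_{(i,j)\in M}$ to obtain an edge $(i,j)\in M$, and then measures the post-measurement qubit in the basis $\{\tfrac{1}{\sqrt2}(\ket{i}+\ket{j}),\tfrac{1}{\sqrt2}(\ket{i}-\ket{j})\}$ to obtain $b$. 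The analysis of the $\mathsf{HM}_n$ protocol shows that $\langle i,j,b\rangle$ is always a valid answer, so this is an $O(\log n)$-qubit hybrid SMP protocol (with Alice quantum, Bob classical).

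For (ii), Bob sends the index of his matching $M$ ($O(\log n)$ bits). Using private randomness, Alice draws a $4$-wise independent hash $h:[n]\to\{0,1\}$ with $\Pr[h(i)=1]=\Theta(1/\sqrt n)$ (an $O(\log n)$-bit seed), sets $S=\{i:h(i)=1\}$, and sends the seed together with the bits $(x(i))_{i\in S}$; since $|S|=\Theta(\sqrt n)$ with probability $1-o(1)$ (and otherwise she sends a failure flag), this is $O(\sqrt n)$ bits. The referee reconstructs $S$ and looks for an edge $(i,j)\in M$ with $i,j\in S$; if one exists it outputs $\langle i,j,x(i)\oplus x(j)\rangle$, which is then certainly correct. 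Letting $Z$ be the number of edges of $M$ contained in $S$, disjointness of the edges of a matching together with $4$-wise independence makes the edge-events pairwise uncorrelated, so $\mathrm{Var}[Z]\le\E[Z]$; tuning the sampling constant so that $\E[Z]$ is a suitable constant, Paley--Zygmund gives $\Pr[Z\ge1]>2/3$. Thus this $O(\sqrt n)$-bit SMP protocol computes $\mathsf{RHM}_n$ with probability $>2/3$; combined with the two comparisons above, (i) and (ii) yield $\Q\R^{||}(\mathsf{RHM}_n)=O(\log n)$, $\R^{||}(\mathsf{RHM}_n)=O(\sqrt n)$, and $\R^1(\mathsf{RHM}_n)=O(\sqrt n)$.

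The main obstacle is (iii), $\R^1(\mathsf{RHM}_n)=\Omega(\sqrt n)$, which is the content of Section~5 of \cite{BYJK08}; the plan is to adapt the $\Omega(\sqrt n)$ one-way lower bound for $\mathsf{HM}_n$ to the restricted family. The intuition for why the threshold is $\sqrt n$: an $a$-bit message from Alice leaves, for a uniform $x$, an ``indistinguishable'' set $A$ of density roughly $2^{-a}$, and a Fourier-analytic argument shows that the parities $x(i)\oplus x(j)$ that such an $A$ pins down form a graph on $[n]$ with only $O(a^2)$ edges (the extreme case being $A$ a subcube, where Alice has revealed $a$ coordinates of $x$ and the determined parities are the clique on those coordinates). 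Since Bob's output edge depends only on Alice's message and on $M$, correctness forces $M$ to contain an edge of this $O(a^2)$-edge graph with constant probability, and a birthday-type estimate shows a random perfect matching does so only when $a=\Omega(\sqrt n)$. Isolating the ``$O(a^2)$ determined parities'' claim in the approximate (randomized) setting is the technical heart, and is done in \cite{BYJK08}. The additional ingredient for $\mathsf{RHM}_n$ is that this last birthday step must still work when $M$ is drawn only from the fixed family $\mathcal{M}$ of size $\Theta(n)$: taking $\mathcal{M}$ pairwise edge-disjoint (for instance a $1$-factorization of $K_n$) guarantees that any $O(a^2)$-edge graph meets at most $O(a^2)$ of the matchings in $\mathcal{M}$, so it is hit by a random $M\in\mathcal{M}$ with probability $O(a^2/n)=o(1)$ unless $a=\Omega(\sqrt n)$. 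I would invoke \cite[Section~5]{BYJK08} for the precise Fourier argument and the choice of $\mathcal{M}$ rather than reproduce it.
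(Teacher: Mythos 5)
Your proposal is correct, but note that the paper itself does not prove this lemma: it is imported wholesale from Section~5 of \cite{BYJK08}, with only the one-line remark that in the hybrid protocol Bob can name his matching among the $\Theta(n)$ matchings with $O(\log n)$ bits --- which is exactly your item (i), combined with Algorithm~1. The extra material you supply is sound: the comparison $\R^1(\mathsf{RHM}_n)\le\R^{||}(\mathsf{RHM}_n)$ (Bob computes his own SMP message and simulates the referee), the restriction $\R^1(\mathsf{RHM}_n)\le\R^1(\mathsf{HM}_n)=O(\sqrt{n})$, and the $O(\sqrt{n})$ sampling-based SMP protocol, where the second-moment step is valid because distinct edges of a matching are vertex-disjoint, so the two edge-events depend on four distinct coordinates of the $4$-wise independent hash, giving $\mathrm{Var}[Z]\le\E[Z]$ and $\Pr[Z=0]\le 1/\E[Z]$, which is below $1/3$ for a suitable sampling constant. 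For the only genuinely hard piece, the $\Omega(\sqrt{n})$ one-way lower bound against the restricted family, you defer to the same Fourier-analytic argument of \cite{BYJK08} that the paper cites, so there is no gap relative to what the paper itself establishes; do be aware that your ``$O(a^2)$ determined parities'' gloss is rougher than the actual argument (correctness only requires Bob to predict a parity with bias, not to pin it down, so the bound is really on the level-two Fourier weight of the message-conditioned distribution), but you flag this explicitly and invoke the citation, which matches the paper's treatment.
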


Note that in the hybrid SMP protocol, Bob tells the referee which matching he received from $O(n)$ matchings by $O(\log n)$ bits.

We show that, in a quantum one-way LOCC protocol, if the referee measures Bob's message first, the problem is easy, and, if the referee measures Alice's message first, the problem is hard.

\begin{proposition}
    $\Q^{||,\mathsf{LOCC}_1^{B \rightarrow A}}(\mathsf{RHM}_n) = O(\log n)$, $\R\Q^{||}(\mathsf{RHM}_n) = \Q^{||,\mathsf{LOCC}_1^{A \rightarrow B}}(\mathsf{RHM}_n) = \Theta(\sqrt{n})$.
\end{proposition}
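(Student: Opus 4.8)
The plan is to establish the three quantities separately, using only \cref{lem:rhm} and \cref{thm:one-way_LOCC}. For $\Q^{||,\mathsf{LOCC}_1^{B\to A}}(\mathsf{RHM}_n)=O(\log n)$ I would recast the $O(\log n)$-qubit protocol of \cref{lem:rhm}, observing that in it all quantum processing acts on Alice's message while Bob's matching $M$ enters only as a classical label. Concretely: Bob sends the $\lceil\log_2 m\rceil=O(\log n)$-bit index of $M\in\mathcal{M}$ in the computational basis, $\rm{Ref}_B$ measures it in that basis and forwards $M$ to $\rm{Ref}_A$; Alice sends $\ket{\psi_x}=\frac{1}{\sqrt n}\sum_i(-1)^{x(i)}\ket{i}$; and $\rm{Ref}_A$ performs the rank-one projective POVM $\{\ket{\pm_{ij}}\!\bra{\pm_{ij}}:(i,j)\in M\}$ with $\ket{\pm_{ij}}=\tfrac{1}{\sqrt2}(\ket{i}\pm\ket{j})$, outputting $\langle i,j,0\rangle$ on outcome $+_{ij}$ and $\langle i,j,1\rangle$ on outcome $-_{ij}$. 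The only checks are that $\sum_{(i,j)\in M}(\ket{+_{ij}}\!\bra{+_{ij}}+\ket{-_{ij}}\!\bra{-_{ij}})=\sum_{(i,j)\in M}(\ket{i}\!\bra{i}+\ket{j}\!\bra{j})=I$ because $M$ is a perfect matching, and that $\braket{\pm_{ij}|\psi_x}$ is nonzero exactly for the value $b=x(i)\oplus x(j)$; both are immediate. This is a legitimate $\mathsf{LOCC}_1^{B\to A}$ protocol since $\rm{Ref}_B$'s operation is a fixed measurement and $\rm{Ref}_A$'s is a single POVM chosen as a function of its classical input, and the total cost is $O(\log n)$ qubits.

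Next I would show $\R\Q^{||}(\mathsf{RHM}_n)=\Theta(\sqrt n)$. The upper bound is immediate: a classical message is a special case of a quantum one, so $\R\Q^{||}(\mathsf{RHM}_n)\le\R^{||}(\mathsf{RHM}_n)=O(\sqrt n)$ by \cref{lem:rhm}. For the lower bound I would take any hybrid protocol of cost $c$ with Alice classical and Bob quantum, note that Bob's quantum message $\sigma_M$ depends only on $M$ — of which there are only $m=\Theta(n)$ — and replace $\sigma_M$ by the $\lceil\log_2 m\rceil=O(\log n)$-bit index of $M$; the referee can reconstruct $\sigma_M$ from that index and then simulate the original referee, yielding a classical SMP protocol of cost $c+O(\log n)$. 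Hence $\R^{||}(\mathsf{RHM}_n)\le\R\Q^{||}(\mathsf{RHM}_n)+O(\log n)$, and therefore $\R\Q^{||}(\mathsf{RHM}_n)=\Omega(\sqrt n)$.

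Finally, for $\Q^{||,\mathsf{LOCC}_1^{A\to B}}(\mathsf{RHM}_n)=\Theta(\sqrt n)$: the upper bound $O(\sqrt n)$ comes from embedding the classical $O(\sqrt n)$-bit SMP protocol of \cref{lem:rhm} — Alice and Bob send their classical messages in the computational basis, $\rm{Ref}_A$ measures Alice's message and forwards the outcome, and $\rm{Ref}_B$ measures Bob's message and outputs. For the lower bound I would observe that a $\mathsf{LOCC}_1^{A\to B}$ protocol is by definition a one-way LOCC SMP protocol in which $\rm{Ref}_A$ performs the first measurement, so \cref{thm:one-way_LOCC} applies with $|X|=2^n$ and $|Y|=m=\Theta(n)$, so that its additive term is $O(\log(\log|X|+\log|Y|))=O(\log n)$: if Alice's and Bob's messages have sizes $a$ and $b$, then $\R\Q^{||}(\mathsf{RHM}_n)\le 2a+b+O(\log n)\le 2(a+b)+O(\log n)$, and combining with $\R\Q^{||}(\mathsf{RHM}_n)=\Omega(\sqrt n)$ from the previous paragraph forces $a+b=\Omega(\sqrt n)$.

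The whole argument is light once \cref{lem:rhm} and \cref{thm:one-way_LOCC} are in hand; the point that needs attention — and the reason the two one-way-LOCC orderings behave so differently — is that with $\rm{Ref}_B$ measuring first Bob's message is nothing but a classical label, so the efficient matching protocol survives intact, whereas with $\rm{Ref}_A$ measuring first the protocol can be ``classicalised'' on Alice's side and \cref{thm:one-way_LOCC} collapses it to the hybrid complexity $\R\Q^{||}$. This, together with the fact that Bob's input alphabet has only $\Theta(n)$ elements, is exactly what drives both the lower bound for $\R\Q^{||}$ and the $O(\log n)$ upper bound, so no genuinely new technique is needed here beyond keeping the message directions straight.
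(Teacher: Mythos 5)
Your proposal is correct, and for two of the three claims it follows the paper's route exactly: the $O(\log n)$ upper bound for $\Q^{||,\mathsf{LOCC}_1^{B\to A}}$ is obtained (as in the paper) by importing the $\Q\R^{||}$ protocol of \cref{lem:rhm} with Bob's matching encoded in the computational basis — you merely make explicit the single POVM $\{\ket{\pm_{ij}}\bra{\pm_{ij}}\}_{(i,j)\in M}$ that $\mathrm{Ref}_A$ uses, which is a nice clarification but the same idea — and the $\Q^{||,\mathsf{LOCC}_1^{A\to B}}$ lower bound is, as in the paper, a direct application of \cref{thm:one-way_LOCC} combined with the $\R\Q^{||}$ lower bound. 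The one place you genuinely diverge is the lower bound $\R\Q^{||}(\mathsf{RHM}_n)=\Omega(\sqrt n)$: the paper merges Bob and the referee into a single party, observes that the resulting model is classical one-way communication from Alice, and invokes $\R^1(\mathsf{RHM}_n)=\Omega(\sqrt n)$; you instead de-quantize Bob's message, replacing $\sigma_M$ by the $O(\log n)$-bit index of $M$ (using that $|\mathcal{M}|=\Theta(n)$) and letting a randomized referee classically simulate the original measurement, which reduces to $\R^{||}(\mathsf{RHM}_n)=\Omega(\sqrt n)$. Both reductions rest on \cref{lem:rhm} and both are sound; the paper's merging argument is slightly more robust in that it does not use the smallness of Bob's input alphabet, while yours is a clean protocol-level simulation but quietly requires the classical SMP model to allow a randomized referee who samples from the measurement statistics of a known state (harmless here, since the $\Omega(\sqrt n)$ SMP bound holds against randomized referees, but worth stating).
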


\begin{proof}
    $\Q^{||,\mathsf{LOCC}_1^{B \rightarrow A}}(\mathsf{RHM}_n) = O(\log n)$ follows from $\Q\R^{||}(\mathsf{RHM}_n) = O(\log n)$ by encoding and decoding the information of the matchings in the computational basis.
    
    When we regard Bob and the referee as one party and allow arbitrary communication between them, the communication model is the same as the classical one-way communication model. In the classical one-way communication model, $\Theta(\sqrt{n})$ bits communication is required from \cref{lem:rhm}. Therefore, in the hybrid SMP model as a weaker model than the classical one-way communication model, $\Theta(\sqrt{n})$ bits communication is also required (i.e, $\R\Q^{||}(\mathsf{RHM}_n) = \Theta (\sqrt{n})$).
    
    Finally, consider $\Q^{||,\mathsf{LOCC}_1^{A \rightarrow B}}(\mathsf{RHM}_n)$. From \cref{thm:one-way_LOCC}, we can replace Alice's quantum message with a classical message by a small overhead, and we have $\Q^{||,\mathsf{LOCC}_1^{A \rightarrow B}}(\mathsf{RHM}_n) =  \Theta (\sqrt{n})$.
\end{proof}

We next consider a further variant of the Restricted Hidden Matching Problem to make both parties send quantum messages.

\begin{definition}[Double Restricted Hidden Matching problem ($\mathsf{DRHM}_n$]
    Let $n$ be an even positive integer. Take two independent instances from the Restricted Hidden Matching Problem. Let $x_1 \in \{0, 1\}^n, M_1 \in \mathcal{M}_1$ be an input of the first instance and $x_2 \in \{0, 1\}^n, M_2 \in \mathcal{M}_2$ be an input of the second instance. Alice is given $x_1$ and $M_2$ and Bob is given $x_2$ and $M_1$. Their goal is to output two tuples $\langle i_1, j_1, b_1 \rangle$ and $\langle i_2, j_2, b_2 \rangle$ to solve the two Restricted Hidden Matching Problem.
\end{definition}

\begin{proposition}\label{prop:separation}
    $\Q^{||,\mathrm{LOCC}}(\mathsf{DRHM}_n) = O(\log n)$, $\R^{||} (\mathsf{DRHM}_n) = \R\Q^{||} (\mathsf{DRHM}_n) = \Q\R^{||} (\mathsf{DRHM}_n) = \Q^{||,\mathsf{LOCC}_1^{B \rightarrow A}}(\mathsf{DRHM}_n) = \Q^{||,\mathsf{LOCC}_1^{A \rightarrow B}}(\mathsf{DRHM}_n) =  \Theta(\sqrt{n})$.
\end{proposition}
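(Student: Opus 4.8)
The plan is to establish the upper bound via a two-way LOCC protocol and the lower bounds via reductions from the single-instance problem $\mathsf{RHM}_n$. For the upper bound $\Q^{||,\mathrm{LOCC}}(\mathsf{DRHM}_n) = O(\log n)$, I would have Alice send the $O(\log n)$-qubit state $\frac{1}{\sqrt{n}}\sum_i (-1)^{x_1(i)}\ket{i}$ (the quantum fingerprint for the first instance) together with a classical encoding of her matching $M_2$ in the computational basis, and symmetrically Bob sends $\frac{1}{\sqrt{n}}\sum_i (-1)^{x_2(i)}\ket{i}$ along with a classical encoding of $M_1$. The referee now runs, on $\mathrm{Ref}_A$'s side, the projective measurement $\{P_{ij}\}_{(i,j)\in M_1}$ dictated by the matching $M_1$ that it learned from Bob's classical part, obtaining an edge $(i_1,j_1)$, then measures the residue state in the basis $\{\frac{1}{\sqrt2}(\ket{i_1}\pm\ket{j_1})\}$ to recover $b_1 = x_1(i_1)\oplus x_1(j_1)$; symmetrically $\mathrm{Ref}_B$ uses $M_2$ on Bob's fingerprint to recover $\langle i_2,j_2,b_2\rangle$. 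This is a (two-way, indeed even one-way since no feedback is needed between the two sides — but it must be counted as two-way because neither side's measurement is determined by its \emph{own} message) LOCC measurement: each side measures its own state using classical information routed to it, which is exactly what LOCC allows. The message length is $O(\log n)$ qubits plus $O(\log n)$ classical bits each. One subtlety to address: in the $\mathsf{LOCC}$ model the measurement should be describable as a sequence of two-value measurements on one system at a time; the projective measurement $\{P_{ij}\}$ has $\Theta(n)$ outcomes, so strictly it must be simulated by a binary search of $O(\log n)$ two-value measurements, and then the residue measurement is already two-valued — this is the content alluded to in the footnote, and I would note it briefly.

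For the lower bounds, the key observation is that $\mathsf{DRHM}_n$ contains $\mathsf{RHM}_n$ as a sub-instance in a way that survives all the one-way-ish models. Concretely, to lower-bound $\Q^{||,\mathsf{LOCC}_1^{A\to B}}(\mathsf{DRHM}_n)$, suppose a protocol of cost $c$ exists; I would fix the second instance to a trivial value (say $x_2$ arbitrary and $M_2$ a fixed matching known to all, so that Alice's part about $M_2$ carries no information and the $\langle i_2,j_2,b_2\rangle$ output is forced) — wait, that does not quite isolate $\mathsf{RHM}_n$ because in $\mathsf{RHM}_n$ it is Alice who holds the string and Bob who holds the matching, and in $\mathsf{DRHM}_n$ Alice holds $x_1$ and Bob holds $M_1$, which matches. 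So restricting to the first instance: Alice's input becomes just $x_1$ (she can ignore $M_2$), Bob's input becomes just $M_1$ (he can ignore $x_2$), and solving $\mathsf{DRHM}_n$ solves $\mathsf{RHM}_n$ on the first coordinate. Under the $A\to B$ one-way LOCC order this is a one-way-LOCC protocol for $\mathsf{RHM}_n$ with $\mathrm{Ref}_A$ going first; by Theorem~\ref{thm:one-way_LOCC}, $\R\Q^{||}(\mathsf{RHM}_n) \le 2c + O(\log n)$, and by Lemma~\ref{lem:rhm} (the $\R\Q^{||}$ direction, i.e. Alice-quantum / Bob-classical, which by the argument in the preceding proposition equals $\Theta(\sqrt n)$ — note Lemma~\ref{lem:rhm} gives $\Q\R^{||}(\mathsf{RHM}_n)=O(\log n)$ but the other orientation is $\Theta(\sqrt n)$) we get $c = \Omega(\sqrt n)$. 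The matching upper bound $O(\sqrt n)$ for all five middle quantities follows because $\R^{||}(\mathsf{DRHM}_n) = O(\sqrt n)$: just run the classical SMP protocol for $\mathsf{RHM}_n$ twice in parallel (each instance independently, $O(\sqrt n)$ bits each), and $\R^{||}$ upper-bounds every other model in the list.

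The remaining lower bounds $\R^{||}, \R\Q^{||}, \Q\R^{||}, \Q^{||,\mathsf{LOCC}_1^{B\to A}} \ge \Omega(\sqrt n)$ all follow from the same restriction-to-first-instance idea, but one must be careful about \emph{which} orientation of $\mathsf{RHM}_n$ each reduces to. Restricting $\mathsf{DRHM}_n$ to the first instance makes Alice the string-holder and Bob the matching-holder; so $\R\Q^{||}(\mathsf{DRHM}_n)$ (Alice classical, Bob quantum) reduces to $\R\Q^{||}(\mathsf{RHM}_n) = \Theta(\sqrt n)$, and $\Q\R^{||}(\mathsf{DRHM}_n)$ reduces to $\Q\R^{||}(\mathsf{RHM}_n) = O(\log n)$ — which is \emph{not} a lower bound. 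To get $\Omega(\sqrt n)$ for $\Q\R^{||}(\mathsf{DRHM}_n)$ and for $\Q^{||,\mathsf{LOCC}_1^{B\to A}}(\mathsf{DRHM}_n)$ I would instead restrict to the \emph{second} instance, where Alice holds the matching $M_2$ and Bob holds the string $x_2$: then Alice-quantum/Bob-classical for $\mathsf{DRHM}_n$ becomes matching-party-quantum / string-party-classical for $\mathsf{RHM}_n$, i.e. the hard orientation, giving $\Omega(\sqrt n)$; and the $B\to A$ one-way LOCC order with the second instance puts the string-holder's side first, and Theorem~\ref{thm:one-way_LOCC} (applied with the roles of Alice and Bob swapped, which the theorem allows since it only assumes the first-measured message has size $a$) again reduces to $\R\Q^{||}(\mathsf{RHM}_n) = \Theta(\sqrt n)$. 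Finally $\R^{||}(\mathsf{DRHM}_n) \ge \R^{||}(\mathsf{RHM}_n) = \Omega(\sqrt n)$ by restriction to either instance. I expect the main obstacle to be exactly this bookkeeping of orientations — checking for each of the five quantities that the restriction (to instance~1 or instance~2) lands it on the \emph{hard} orientation of $\mathsf{RHM}_n$ rather than the easy one, and confirming that Theorem~\ref{thm:one-way_LOCC} and Lemma~\ref{lem:rhm} apply with the correct party playing the correct (quantum vs.\ classical, first vs.\ second) role; the quantum and information-theoretic content is entirely inherited from the single-instance results.
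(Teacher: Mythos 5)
Your proposal is correct, and its upper-bound half is exactly the paper's protocol: fingerprints of $x_1,x_2$ plus computational-basis encodings of $M_2,M_1$, with $\mathrm{Ref}_A$ and $\mathrm{Ref}_B$ exchanging the matchings and then measuring the fingerprints accordingly; your remark that the $\Theta(n)$-outcome projective measurement must be decomposed into $O(\log n)$ two-value measurements is precisely what the paper's \cref{remark} addresses. One small caution: your aside that the protocol is ``indeed even one-way'' is not right as stated --- classical information must flow in both directions ($M_2$ from $\mathrm{Ref}_A$ to $\mathrm{Ref}_B$ and $M_1$ back), and it cannot be realized as one-way LOCC since that would contradict the $\Omega(\sqrt{n})$ bound you prove afterwards; you do, however, correctly count it as two-way, so nothing breaks. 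For the lower bounds the paper argues directly on $\mathsf{DRHM}_n$: it merges Bob (resp.\ Alice) with the referee to get a classical one-way protocol, invokes the independence of the two instances together with $\R^1(\mathsf{RHM}_n)=\Theta(\sqrt{n})$ from \cref{lem:rhm} to get $\R\Q^{||}(\mathsf{DRHM}_n)=\Q\R^{||}(\mathsf{DRHM}_n)=\Omega(\sqrt{n})$, and then applies \cref{thm:one-way_LOCC} to handle both $\mathrm{LOCC}_1$ orientations. You instead fix one of the two instances to a publicly known value and reduce to the appropriate orientation of $\mathsf{RHM}_n$, reusing the preceding proposition; this is equivalent in content and somewhat more modular, and your bookkeeping --- first instance for $\R\Q^{||}$ and $\mathsf{LOCC}_1^{A\rightarrow B}$, second instance for $\Q\R^{||}$ and $\mathsf{LOCC}_1^{B\rightarrow A}$, with \cref{thm:one-way_LOCC} applied to whichever message is measured first --- is exactly right. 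Your explicit $O(\sqrt{n})$ upper bound for the five middle quantities (running the classical $\mathsf{RHM}_n$ SMP protocol twice, then using that each listed model is at least as strong as classical SMP) is also correct and is left implicit in the paper.
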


\begin{proof}
    Let us first show $\Q^{||,\mathrm{LOCC}}(\mathsf{DRHM}_n) = O(\log n)$. Alice sends $\rm{Ref}_A$ $\frac{1}{\sqrt{n}} \sum_{i=1}^n (-1)^{x_{1}(i)} \ket{i}$ and $\ket{M_2}$ where $M_2$ is encoded in the computational basis. Bob sends $\rm{Ref}_B$ $\frac{1}{\sqrt{n}} \sum_{i=1}^n (-1)^{x_{2}(i)} \ket{i}$ and $\ket{M_1}$ where $M_1$ is encoded in the computational basis. Then, $\rm{Ref}_A$ measures $\ket{M_2}$ and sends $M_2$ to $\rm{Ref}_B$. $\rm{Ref}_B$ measures $\ket{M_1}$ and sends $M_1$ to $\rm{Ref}_A$. Based on $M_1$, $\rm{Ref}_A$ measures $\frac{1}{\sqrt{n}} \sum_{i=1}^n (-1)^{x_{1}(i)} \ket{i}$. Based on $M_2$, $\rm{Ref}_B$ measures $\frac{1}{\sqrt{n}} \sum_{i=1}^n (-1)^{x_{2}(i)} \ket{i}$. Note that each matching can be represented with $O(\log n)$ bits because the size of the set of matchings $|\mathcal{M}_1|$ and $|\mathcal{M}_2|$ is $\Theta(n)$.
    
    Let us next show $\R\Q^{||} (\mathsf{DRHM}_n) = \Theta(\sqrt{n})$. When we regard Bob and the referee as one party and allow arbitrary communication between them, the communication model is the same as the classical one-way communication model. The first and second instances are independently chosen, and to solve the first instance, $\Theta(\sqrt{n})$ bits communication is required in the model from \cref{lem:rhm}. Therefore, in the SMP model as a weaker model than the classical one-way communication model, $\Theta(\sqrt{n})$ bits communication from Alice to the referee is required. By a similar discussion, it is shown that $\Q\R^{||} (\mathsf{DRHM}_n) = \Theta(\sqrt{n})$.

    Finally, from \cref{thm:one-way_LOCC}, we have $\Q^{||,\mathsf{LOCC}_1^{B \rightarrow A}}(\mathsf{DRHM}_n) = \Q^{||,\mathsf{LOCC}_1^{A \rightarrow B}}(\mathsf{DRHM}_n) = \Theta(\sqrt{n})$.
\end{proof}

\begin{remark}\label{remark}
    In the quantum two-way-LOCC protocol, we can allow $\rm{Ref}_A$ and $\rm{Ref}_B$ to do several 2-value measurements in sequence (by making the other one do nothing in the turns). For $\mathsf{DRHM}_n$, we can solve it by $2 \log_2 n + O(\log n)$ rounds 2-value two-way-LOCC measurements. Let us describe the protocol. First, $\rm{Ref}_A$ measures $\ket{M_2}$ in the computational basis, which takes $O(\log n)$ rounds. Similarly, $\rm{Ref}_B$ measures $\ket{M_1}$ that takes $O(\log n)$ rounds. Then, based on $M_1$,  $\rm{Ref}_A$ puts half of ${(i,j) \in M_1}$ into one group and the other half into another, and measures $\frac{1}{\sqrt{n}} \sum_{i=1}^n (-1)^{x_{1}(i)} \ket{i}$ by a projector constructed from the grouping. Based on the measurement outcome, $\rm{Ref}_A$ repeats another 2-value measurement with a similar half/half grouping. $\rm{Ref}_B$ conducts a similar sequence of 2-value measurements on $\frac{1}{\sqrt{n}} \sum_{i=1}^n (-1)^{x_{2}(i)} \ket{i}$ based on $M_2$.

    However, the number of rounds is larger than the bound obtained from \cref{thm:two-way-LOCC_general_relation}, and thus there is no contradiction. Moreover, this implies that our analysis is tight up to some constant factor. Therefore, to prove a stronger lower bound toward \cref{conj}, we should develop another proof strategy which might be inherent to $\EQ_n$.
\end{remark}

\section*{Acknowledgment}
AH thanks Richard Cleve, Uma Girish, Alex B. Grilo, Kohdai Kuroiwa, Alex May, Masayuki Miyamoto, Ryuhei Mori, Ashwin Nayak, Daiki Suruga, Yuki Takeuchi, Eyuri Wakakuwa, Yibin Wang for helpful discussions. AH is also grateful to Richard Cleve and Ken-ichi Kawarabayashi for their generous support. Part of the work was done while AH was visiting University of Waterloo, and AH is grateful for their hospitality.

AH was supported by JSPS KAKENHI grants Nos.~JP22J22563, 24H00071, and  JST ASPIRE Grant No.~JPMJAP2302. SK was supported by the Natural Sciences and Engineering Research Council of Canada (NSERC) Discovery Grants Program, and Fujitsu Labs America. FLG was supported by JSPS KAKENHI grants Nos.~JP20H05966, 20H00579, 24H00071, MEXT Q-LEAP grant No.~JPMXS0120319794 and JST CREST grant No.~JPMJCR24I4. HN was supported by JSPS KAKENHI grants Nos.~JP20H05966, 22H00522, 24H00071, 24K22293, MEXT Q-LEAP grant No.~PMXS0120319794 and JST CREST grant No.~JPMJCR24I4. QW was supported by the Engineering and Physical Sciences Research Council under Grant No.~EP/X026167/1.

\bibliographystyle{alpha}
\bibliography{ref}

\end{document}